\documentclass[11pt]{article}
\usepackage[a4paper,margin=26mm]{geometry}
\usepackage[T1]{fontenc}
\usepackage[utf8]{inputenc}
\usepackage{lmodern,amsmath,amssymb,amsthm,mathtools,microtype}
\usepackage{booktabs,array,enumitem,tikz}
\usepackage[colorlinks=true,linkcolor=blue,citecolor=blue,urlcolor=blue]{hyperref}
\hypersetup{pdftitle={Polynomial-time local-unitary equivalence of graph states},pdfauthor={Yuxuan Zhang}}
\newtheorem{theorem}{Theorem}[section]
\newtheorem{lemma}[theorem]{Lemma}
\newtheorem{proposition}[theorem]{Proposition}
\newtheorem{corollary}[theorem]{Corollary}
\theoremstyle{definition}
\newtheorem{definition}[theorem]{Definition}
\theoremstyle{remark}

\newcommand{\F}{\mathbb F}
\newcommand{\Z}{\mathbb Z}
\newcommand{\C}{\mathbb C}
\newcommand{\LC}{\mathrm{LC}}

\DeclareMathOperator{\supp}{supp}
\DeclareMathOperator{\spanR}{span}
\DeclareMathOperator{\Odd}{Odd}
\DeclareMathOperator{\diag}{diag}
\setlist[enumerate]{itemsep=3pt,topsep=5pt}
\title{Polynomial-time local-unitary equivalence\\of graph states}
\author{Yuxuan Zhang\thanks{Correspondence: \href{mailto:yuxuan.zhang@epfl.ch}{\texttt{yuxuan.zhang@epfl.ch}}.}\\[3pt]
\small Institute of Physics, \'Ecole Polytechnique F\'ed\'erale de Lausanne (EPFL)\\
\small CH-1015 Lausanne, Switzerland\\
\small Department of Physics, Princeton University\\
\small Princeton, New Jersey 08544, USA}
\date{}
\begin{document}
\maketitle

\begin{abstract}
Local-unitary (LU) equivalence asks whether two quantum states differ only by independent changes of basis on their qubits. For graph states, whether this relation can be decided in polynomial time has remained open for over a decade. We give a deterministic algorithm that decides LU equivalence for graphs on $n$ labelled vertices in $\widetilde O(n^{6.38})$ bit operations and constructs exact single-qubit unitaries whenever the states are equivalent. Building on Claudet and Perdrix's quasipolynomial algorithm, we replace the enumeration of vertex subsets by a compact system of constraints generated from pairs and triples. The remaining graph transformation is found by solving linear equations over the binary field. These new steps cost $\widetilde O(n^5)$ bit operations; the inherited graph preprocessing sets the overall bound. We also count the local-Clifford (LC) classes of graph states within any LU class: their number is a power of two, computable within the same bound. For any given graph state, this decides whether single-qubit Clifford gates reach every graph state in its LU class, and supplies a counterexample when they do not. The method also decides LU equivalence of stabilizer codes encoding one logical qubit.
\end{abstract}

\section{Introduction}\label{sec:notation}

The graph state required by a computation or communication protocol may be difficult to prepare on the available device. An alternative is to prepare a different graph state and obtain the desired resource using single-qubit gates. Such local changes of basis can alter the edge set while preserving multipartite entanglement, giving us a choice of preparation graphs. To use this freedom, we need to determine whether two given graphs are related in this way and find the local gates. Whether this local-unitary (LU) equivalence problem can be solved in polynomial time has remained open for over a decade \cite[Section~5]{Ji2010}. We give a constructive polynomial-time solution.

For a finite simple graph $G=(V,E)$ on $V=[n]=\{1,\ldots,n\}$, prepare each qubit in $|+\rangle$ and apply a controlled-$Z$ gate on each edge. The resulting graph state is
\begin{equation}\label{eq:graphstate}
 |G\rangle=\prod_{\{u,v\}\in E}\mathrm{CZ}_{uv}|+\rangle^{\otimes n},
 \qquad |+\rangle=\frac{|0\rangle+|1\rangle}{\sqrt2},
\end{equation}
where $\mathrm{CZ}_{uv}$ acts on qubits $u,v$. Given the binary adjacency matrices of $G,H$, we ask whether
\begin{equation}\label{eq:LU}
 |H\rangle=e^{i\phi}\left(\bigotimes_{v=1}^n U_v\right)|G\rangle
 \quad\text{for some }U_v\in U(2),\ \phi\in\mathbb R.
\end{equation}
The vertex labels fix the correspondence between qubits; permutations are not allowed.

The graph description uses only $O(n^2)$ classical bits to specify multipartite entanglement \cite{BRCluster,HeinGraph}. It covers every pure qubit stabilizer state up to single-qubit Clifford operations \cite{VDDGraph} and also gives constructions of quantum error-correcting codes \cite{SWCodes}. In measurement-based quantum computation (MBQC), the graph state is the resource prepared in advance; the computation then proceeds through local measurements and classical feedforward \cite{RB,RBB}. Cluster states support universal computation and topological fault tolerance \cite{RHG}. A change of local basis in the resource can be absorbed into the measurement choices whenever the required bases are available.

How that resource is prepared depends on the hardware. Graph-state preparation and measurement-based computation have been demonstrated with photons \cite{Walther,LuGraph} and trapped ions \cite{Lanyon}, and neutral-atom arrays allow programmable graph-state preparation \cite{Bluvstein}. Photonic resources can be assembled by fusion operations, which also underlie fusion-based computing architectures \cite{BrowneRudolph,Bartolucci}. Sequential emission from a quantum emitter provides another route \cite{Lindner}, realized with a quantum dot \cite{Schwartz} and a single atom \cite{Thomas}.\footnote{Continuous-variable cluster states have also been generated by temporal multiplexing \cite{Asavanant,Larsen}; photonic quantum-volume benchmarks use such resources with Gottesman--Kitaev--Preskill encoding \cite{ZhangVolume}. The present equivalence theorem concerns qubit graph states.}

The same resource-state approach appears in quantum repeaters. Purification-based repeater schemes \cite{BriegelRepeaters} have measurement-based versions in which graph-state measurements implement purification and entanglement swapping \cite{Zwerger}. Loss-tolerant cluster measurements \cite{Varnava} enable all-photonic repeater proposals \cite{Azuma}; experimental steps include adaptive Bell measurements \cite{Hasegawa} and small all-photonic repeater circuits \cite{LiRepeater}. Graph-state realizations of Calderbank--Shor--Steane codes also support one-way repeaters with measurement-based error correction \cite{NiuRepeaters}.

For Clifford changes of basis, equivalence already has a polynomial-time test. Single-qubit Clifford unitaries map Pauli operators to Pauli operators under conjugation. Equivalence under their tensor products, denoted $\LC$, is captured exactly by a graph operation: two graph states are LC-equivalent if and only if their graphs are related by local complementations \cite{VDDGraph}. A local complementation at $v$ toggles the edges among distinct neighbors of $v$; write $G\star v$ for the resulting graph. Bouchet's algorithm and its stabilizer formulation decide this relation in polynomial time \cite{Bouchet,VDD}.\footnote{Allowing local Pauli measurements as well leads to a broader resource-conversion problem, formulated through graph vertex-minors by Dahlberg and Wehner \cite{Dahlberg}.}

The LU--LC conjecture would have made this test sufficient: it asserted that any two LU-equivalent graph states are also LC-equivalent. Van den Nest, Dehaene, and De Moor established sufficient conditions for the conjecture to hold \cite{VDDCriterion}, later extended by Zeng and collaborators \cite{Zeng}. Gross and Van den Nest related it to diagonal local operations and quadratic forms \cite{GrossVDN}. The general conjecture failed when Ji, Chen, Wei, and Ying constructed a 27-qubit counterexample \cite{Ji2010}. Further examples followed \cite{TG}, and Claudet proved that 27 qubits are necessary \cite[Theorem~1]{Minimal27}. Recognizing LU equivalence therefore requires a way to account for transformations that ordinary local complementation misses.

Burchardt, de Jong, and Vandr\'e formulated LU verification as modular linear equations \cite{BDV}, although their system can contain exponentially many equations and its general complexity remained unresolved \cite[Sections~V and~VII]{BDV}. Claudet and Perdrix supplied a graphical characterization \cite{Graphical} and a quasipolynomial algorithm, running in time $n^{\log_2 n+O(1)}$ \cite{CP}. They put the input graphs in a common standard form, compute the edge changes obtainable by a generalized local complementation, and finish with a constrained LC test. Their enumeration of incidence congruences over vertex subsets of logarithmically growing size is the costly step. Those congruences have a recursive structure: a small set of rows generates all the others. Computing this closure gives the polynomial bound below and answers the question also posed in Claudet's thesis \cite[Chapter~8]{Claudet}.

\begin{theorem}[Main result]\label{thm:main}
There is a deterministic classical algorithm that, given two finite simple graphs $G,H$ on the same labelled vertex set of size $n\geq1$, decides in $\widetilde O(n^{6.38})$ bit operations whether their graph states are local-unitary equivalent. If they are, it returns single-qubit unitaries satisfying \eqref{eq:LU}, with an exact symbolic description of $O(n\log(n+1))$ bits.
\end{theorem}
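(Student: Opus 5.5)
We follow the three-stage pipeline of Claudet and Perdrix \cite{CP} and replace only its exponential-in-$\log n$ step. \textbf{Stage 1 (standard form).} We invoke their preprocessing, which applies local complementations to bring $G$ and $H$ into a common standard form. This fixes, for each vertex, which single-qubit unitaries can possibly appear; in particular it forces every $U_v$ to be diagonal up to a known Clifford. We inherit their correctness statement. We account for its cost as the $\widetilde O(n^{6.38})$ term, since it reduces to rank and kernel computations over $\F_2$ with fast matrix multiplication on $O(n)$-dimensional systems repeated $O(n)$ times. After this stage, LU equivalence holds if and only if some generalized ($r$-)local complementation, followed by an ordinary LC sequence respecting the standard-form constraints, maps $G$ to $H$.

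\textbf{Stage 2 (closure of incidence congruences).} In \cite{CP}, the admissible generalized local complementations are characterized by congruences modulo $2^{k}$ on incidence counts $|\bigcap_{v\in S}N(v)\cap T|$ over vertex subsets $S$ of size up to $O(\log n)$. Enumerating these subsets costs $n^{O(\log n)}$. Our key lemma is that the congruence rows for large $S$ are generated by those for $|S|\le 3$. The plan is to expand $\prod_{v\in S}(\text{indicator})$ via the identity $2xy=x+y-(x\oplus y)$ and its higher analogues. Each higher-order parity condition then becomes an integer combination of pair and triple conditions, together with lower levels divided by powers of two. The proof is an induction on $|S|$ and on the $2$-adic level $k$.

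This yields a system of $O(n^3)$ constraints. To solve it, we triangularize level by level over $\Z/2^k$, which reduces to $\F_2$ elimination at each of the $O(\log n)$ levels. We aim for a $\widetilde O(n^5)$ bound by keeping a basis of size $O(n)$ while streaming the $O(n^3)$ rows, at $O(n^2)$ cost per row reduction. \emph{I expect this generation lemma to be the main obstacle.} One must show that no new constraint appears at high levels that is invisible to the pair and triple rows. My first attempt would use the multilinear-polynomial (quadratic/cubic form) viewpoint of Gross and Van den Nest \cite{GrossVDN}, where diagonal phases $\diag(1,e^{i\pi\theta})$ with dyadic $\theta$ act as evaluation of a degree-bounded polynomial.

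\textbf{Stage 3 (LC finish and output).} With the solution space of Stage 2 in hand, the remaining transformation is an LC-equivalence between the modified graph and $H$ under linear side constraints. Following Bouchet \cite{Bouchet} and Van den Nest, Dehaene, and De Moor \cite{VDD}, we write the symplectic condition as a system of linear equations over $\F_2$ in $O(n)$ unknowns per vertex, intersected with the Stage 2 affine space. This system is solved in $\widetilde O(n^{\omega})$ operations. A solution, or a certificate of infeasibility, decides equivalence.

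For the output, each $U_v$ is a product of a Clifford and a phase $\diag(1,e^{i\pi a_v/2^{k}})$ with $k=O(\log n)$. It is therefore described exactly by $O(\log n)$ bits, giving $O(n\log(n+1))$ bits in total. Correctness of \eqref{eq:LU} follows by composing the standard-form Cliffords, the generalized local complementation, and the LC sequence, using the graphical characterization of \cite{Graphical}.
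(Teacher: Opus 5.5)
Your overall pipeline (inherited standard form, a compressed incidence system, then a linear-algebra finish) has the same shape as the paper's, but two steps fail as stated.

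\textbf{Stage~2.} The generation lemma there is false, so the $O(n^3)$-row system you would solve is strictly weaker than the incidence conditions. The paper exhibits the obstruction. Take $r=3$, $t=4$, and $m=16$ coordinates indexed by $b\in\{0,1\}^4$ with $a_i(b)=b_i$, and put $s(b)=(-1)^{b_1+b_2+b_3+b_4}\bmod 8$. For every pair or triple $K$, pair $b$ with $b\oplus e_j$ for some $j\notin K$. The two terms have equal $a_K$ values and opposite signs, so $a_K\cdot s=0$ and every pair and triple congruence holds. But $a_{\{1,2,3,4\}}\cdot s=1$, so the required condition $4a_{\{1,2,3,4\}}\cdot s\equiv0\pmod 8$ fails.

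Your induction via $2xy=x+y-(x\oplus y)$ is purely algebraic in the vectors $a_v$, so it would have to work here. Any sound inference from the pair and triple congruences preserves this $s$: integer combinations, dividing a congruence by a power of two, or passing to weaker levels. So no such argument can yield the four-set row. The identity also produces vectors $a_u\oplus a_v$, which are not incidence rows and are constrained by nothing. Solving only pairs and triples therefore gives too large a $\Sigma$, hence possibly unrealizable edge changes and false acceptances.

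The missing idea is that higher rows come from the seeds not by linear combination but through the $R$-linear diagonal maps $T_v(z)=2a_v\odot z$. Since $T_v(2^{k-2}a_K)=2^{k-1}a_{K\cup\{v\}}$ for $k\geq3$, $M_r$ is the smallest $T_v$-invariant submodule containing the pair and triple seeds (Lemma~\ref{lem:closure}). Because the maps are linear, they can be applied to a compressed generating set of at most $m$ rows, re-diagonalizing over $\Z/2^r\Z$ after each of at most $\tau(r,t)$ rounds. The final generators are general elements of $M_r$, not pair or triple rows.

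\textbf{Stage~3.} The LC criterion is the linear system \eqref{eq:cliffordlinear} plus the quadratic conditions $\alpha_u\delta_u+\beta_u\gamma_u=1$. A solution of the linear part alone does not certify LC equivalence. Bouchet's way of obtaining an invertible solution also does not carry over automatically once affine side constraints are added; Claudet and Perdrix needed an even-degree hypothesis and auxiliary graphs for their constrained test.

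More seriously, with the edge change unknown, the equations contain $(\Gamma+\Delta(\epsilon))\mathsf D_\beta\Gamma'+(\Gamma+\Delta(\epsilon))\mathsf D_\alpha$. This is bilinear in $\epsilon$ and the Clifford coordinates, so ``intersecting with the Stage~2 affine space'' is not a linear problem.

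The paper resolves both issues in Theorem~\ref{thm:direct}. Since the remaining complementations may be centered on $W$, one freezes $\beta|_F=0$ and $\alpha|_F=\delta|_F=1$ on $F=X\sqcup Z$, which contains the support of every $E_i$. Then $\Delta\mathsf D_\beta=0$ and $\Delta\mathsf D_\alpha=\Delta$, and the system becomes affine in all unknowns. Lemma~\ref{lem:determinant} shows that the determinant vector of any linear solution is constant on connected components of the target. Hence determinant one on $F$ propagates across the connected target, which is why the algorithm first compares labelled component partitions and then works componentwise. Without these two ingredients, Stage~3 must either enumerate $2^{\dim\Omega}$ edge changes or does not decide equivalence correctly.
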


Here $\widetilde O$ suppresses logarithmic factors. We retain Claudet and Perdrix's standard form, their reduction to one generalized complementation, and their logarithmic level bound. The incidence closure and the affine graph equations developed here cost $\widetilde O(n^5)$ bit operations after preprocessing. The exponent $6.38$ comes from the inherited common-standard-form procedure.

In standard form, the generalized complementation acts on an independent set $X$. Put $J=V\setminus X$ and work modulo $q=2^r$, where $r=O(\log n)$. For $v\in J$, let $a_v\in\{0,1\}^X$ record its neighbors in $X$, and let $a_K$ be the coordinatewise product over $v\in K$. The incidence row for a pair or triple is $2a_K$; for a larger set it is $2^{|K|-2}a_K$. Adding a vertex to the index set amounts to multiplying coordinatewise by its neighborhood vector and doubling the row. For $|K|\geq3$ and $v\notin K$,
\[
 2a_v\odot\bigl(2^{|K|-2}a_K\bigr)
       =2^{|K|-1}a_{K\cup\{v\}},
\]
where $\odot$ denotes coordinatewise multiplication. Every higher-order row can therefore be reached from a triple. Pairs must be supplied separately, and triples must also be included at the start: applying the map to a pair gives twice the required triple coefficient. Since the maps are linear, they can be applied to generators of the row module, with redundant rows removed after each round. Modular elimination keeps at most $|X|$ generators with $O(r)$ bits per entry, and $O(r)$ rounds suffice (Theorem~\ref{thm:compression}).

The compressed system gives a binary vector space of allowed edge changes. We still need to choose one change and the ordinary complementations that take the resulting graph to the target. Write $Z$ for the vertices whose mutual edges can change, and $W$ for the permitted complementation centers. Restricting the centers to $W$ fixes the relevant Clifford coordinates on $X\cup Z$. The remaining graph equations are then affine in both the edge-change coefficients and the unknown Clifford coordinates. A determinant identity propagates invertibility from one fixed vertex across a connected target, so solving these equations gives the edge change and the remaining local Cliffords together (Theorem~\ref{thm:direct}). All equations use the original vertices. Sections~\ref{sec:image} and~\ref{sec:direct} explain the relation to earlier modular elimination and constant-determinant results.

Keeping the multiplicity vectors and the Clifford operations throughout the calculation recovers the exact unitary in \eqref{eq:LU}. The affine solution space also describes every feasible edge change (Corollary~\ref{cor:feasible}). Reusing the standard form while varying the level determines the least dyadic level needed for the transformation (Corollary~\ref{cor:minlevel}).

A single standard form also suffices to represent every LC class within the starting graph's LU class. The allowed edge changes form one binary vector space; those that ordinary LC transformations can undo form another. Their quotient indexes the LC classes, giving a power-of-two count computable in polynomial time (Theorem~\ref{thm:orbitcount}). If the quotient is nontrivial, the algorithm constructs an LU-equivalent but LC-inequivalent graph and the connecting unitary. Counting individual labelled graphs has a different complexity: combined with the hardness result of Hahn et al., the decision algorithm gives $\#\mathsf P$-completeness (Corollary~\ref{cor:graphcount}).

Graph-state representatives can be obtained efficiently from Pauli generators, so the test applies to pure stabilizer states as well. For these states it also decides equivalence under stochastic local operations and classical communication (SLOCC). Purifying a stabilizer code encoding one logical qubit reduces code-space LU equivalence to the same test (Corollaries~\ref{cor:slocc} and~\ref{cor:onequbitcode}).

Section~\ref{sec:reduction} recalls the Claudet--Perdrix reduction and states the incidence constraints. Section~\ref{sec:compression} computes the admissible edge changes by incidence compression, kernel recovery, and binary image extraction. Section~\ref{sec:integration} solves the remaining graph transformation and assembles the constructive equivalence algorithm, including its consequences for stabilizer states and codes. Section~\ref{sec:orbitcount} proves the LC-class classification and counting results, and Section~\ref{sec:discussion} discusses applications and further questions. Appendices~\ref{app:reduction}--\ref{app:clifford} give the connected-component argument, the standardization cost, computational checks, and extraction of a local-complementation sequence.

\section{The Claudet--Perdrix reduction}\label{sec:reduction}\label{sec:dependencies}

\subsection{Standard form}\label{sec:standardform}

Write $\Gamma(G)$ for the adjacency matrix of $G$ over the binary field $\F_2$, and let $\prec$ be the vertex order inherited from $[n]$. With the graph fixed, $\Gamma_{A,B}$ denotes the block with rows in $A$ and columns in $B$. The neighborhood of $u$ is $N_G(u)$; for a vertex set $A$, put $N_G(A)=\bigcup_{u\in A}N_G(u)$. LU equivalence preserves the labelled connected-component partition (Lemma~\ref{lem:components}), so we reject different partitions and then work componentwise. We use $n$ for the full input order and $n_c$ for the component order. One-vertex components require no transformation; the preprocessing below applies when $n_c\geq2$.

The standard form is defined through the supports of graph-state stabilizers. For $D\subseteq V$, put
\[
 \Odd_G(D)=\{u\in V:|N_G(u)\cap D|\equiv1\pmod2\}.
\]
A \emph{local set} is a nonempty set $L=D\cup\Odd_G(D)$ for some $D\subseteq V$; such a $D$ is called a generator of $L$. A local set is minimal if it contains no proper nonempty local set. A \emph{minimal-local-set cover}, or MLS cover, is a family $\mathcal M$ of minimal local sets whose union is $V$.

Local sets are precisely the nonempty supports of graph-state stabilizers. Write $\mathsf Z_A=\prod_{v\in A}\mathsf Z_v$ for a Pauli product on a vertex set $A$. The stabilizer is generated by $\mathsf X_u\mathsf Z_{N_G(u)}$. If $\mathbf d$ is the indicator of $D$, the product indexed by $D$ has support $D\cup\Odd_G(D)$, and its Pauli at $u$, up to phase, is specified by $(\mathbf d_u,(\Gamma(G)\mathbf d)_u)$. The pairs $(1,0),(1,1),(0,1)$ represent $\mathsf X,\mathsf Y,\mathsf Z$, respectively. Relative to a fixed cover $\mathcal M$, a vertex has type $\mathsf X$, $\mathsf Y$, or $\mathsf Z$ if every generator of every member of $\mathcal M$ containing it gives that same nonzero Pauli pair; otherwise it has type $\bot$ \cite[Definitions~12--13]{CP}.\footnote{We follow the numbering of the PDF of arXiv:2502.06566v3.} Thus a type records a Pauli axis shared by the minimal-support stabilizers in the chosen cover. Type $\bot$ means that these stabilizers impose more than one Pauli axis at the vertex.

A graph is in \emph{standard form} with respect to $\mathcal M$ if it has no type-$\mathsf Y$ vertices; every neighbor $v$ of a type-$\mathsf X$ vertex $u$ has type $\mathsf Z$ and satisfies $u\prec v$; and each star $\{u\}\cup N_G(u)$ at a type-$\mathsf X$ vertex belongs to $\mathcal M$ \cite[Definition~14]{CP}. Write $X,Z,W$ for the sets of type-$\mathsf X$, type-$\mathsf Z$, and type-$\bot$ vertices. Then $V=X\sqcup Z\sqcup W$, $X$ is independent, and $N_G(X)\subseteq Z$. Figure~\ref{fig:standardform} shows the resulting geometry.

The common-standard-form algorithm either certifies that the two input graphs are not LU-equivalent or returns ordinary-LC-equivalent representatives $G_1,G_2$ in standard form with respect to the same MLS cover, in $O(n_c^{6.38})$ time \cite[Lemma~16]{CP}. The cover is computed in polynomial time without enumerating all minimal local sets \cite{MLS}. Ordinary local complementation preserves connectedness, since any edge it removes has a replacement path through the complemented vertex. Thus $G_1$ and $G_2$ remain connected.

LU-equivalent standard forms have identical vertex types and identical neighborhoods at type-$\mathsf X$ vertices \cite[Lemma~17]{CP}. We check these two conditions and reject a pair if either fails. A pair that passes is called a \emph{surviving pair}; it has a common partition $X,Z,W$ and $N_{G_1}(x)=N_{G_2}(x)$ for every $x\in X$.

\subsection{Incidence congruences}\label{sec:incidence}

Claudet and Perdrix reduce the continuous freedom in local unitaries to a hierarchy of gate sets with dyadic phase angles. For an integer $r\geq1$, $\LC_r$-equivalence allows at each qubit products of Hadamard gates and $\diag(1,e^{i\pi/2^r})$. The integer $r$ is called the \emph{level}; level one gives local Cliffords, and level two gives local Clifford+$T$ gates, with $T=\diag(1,e^{i\pi/4})$. Every $\LC_r$ transformation is local unitary.

\begin{theorem}[Claudet--Perdrix {\cite[Corollary~29]{CP}}]\label{thm:level}
Let $r\geq1$. If two graphs on the same labelled vertex set of size at most $2^{r+3}-1$ are LU-equivalent, they are $\LC_r$-equivalent.
\end{theorem}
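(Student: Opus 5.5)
Since this statement is imported from Claudet and Perdrix, the plan is to reconstruct their argument, not to find a new one. It has three stages: a normal form for the local unitary, a dyadic description of its angles, and a counting bound.

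\emph{Stage one (normal form).} Bring both graphs to a common standard form, at the cost of ordinary local complementations. This is harmless because $\LC\subseteq\LC_r$ for every $r\geq1$. Then show that any local unitary between two standard forms, after removing a local Clifford factor, can be taken as $\bigotimes_{x\in X}\exp(i\theta_x\mathsf X_x)$, up to $\mathsf Z$-diagonal corrections on $Z$. I would prove this by applying the LU map to the minimal-support stabilizers of the common MLS cover. Such a unitary must map each minimal local set's stabilizer to a stabilizer of the target with the same support, and this should pin each $U_v$ to the Pauli axis recorded by its type.
\begin{itemize}
\item At type-$\bot$ vertices two axes are fixed, so $U_v$ is Clifford.
\item At type-$\mathsf Z$ vertices only a $\mathsf Z$-rotation survives.
\item At type-$\mathsf X$ vertices, which have no neighbours outside $Z$, only an $\mathsf X$-rotation survives.
\end{itemize}

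\emph{Stage two (dyadic angles).} Expand $|G_2\rangle=\bigotimes_x e^{i\theta_x\mathsf X_x}|G_1\rangle$ in the computational basis. Requiring equal amplitudes up to sign on every basis state gives trigonometric identities in the $\theta_x$. These identities should force every $\theta_x$ to be a dyadic multiple of $\pi$. More precisely, with a suitable multiplicity vector $m\in\Z^X$ (so that $\theta_x=\pi m_x/2^{k}$), the edge change has to be realised by a generalised local complementation. Its consistency conditions are the incidence congruences
\[
\sum_{x\in X}m_x\prod_{v\in K}(a_v)_x\equiv0\pmod{2^{k-|K|+2}}
\]
for vertex sets $K\subseteq Z$.

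\emph{Stage three (counting bound).} Show that if $k>r$ is genuinely needed, the graph has at least $2^{r+3}$ vertices; the contrapositive is the theorem. Reduce $m$ modulo $2$ at its top level to get a nonzero $\F_2$-vector $c$ on $X$. The congruences state that every coordinatewise product of at most roughly $k$ neighbourhood vectors has even, indeed highly divisible, inner product with $c$. This is a Reed--Muller/Ward-type divisibility condition. It should force the support of $c$, weighted by the vertices of $Z$ touching it, to have size at least $2^{k+2}$. Adding the $Z$-vertices required to witness these products should then give total size $\geq2^{k+3}>n$, contradicting $n\leq2^{r+3}-1$. Hence every needed level is $\leq r$, and the unitary factors through $\LC_r$.

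The main obstacle is this weight lemma with the exact constant $2^{r+3}$. My first attempt would be induction on $k$: split $X$ according to adjacency to one chosen $Z$-vertex $v$, and use the identity $2a_v\odot(2^{|K|-2}a_K)=2^{|K|-1}a_{K\cup\{v\}}$ to pass from level $k$ to level $k-1$ on each part. Each step should at least double the count, with base case $k=1$ giving $16$ vertices. Checking that the doubling is exact, rather than lossy by an additive constant, is where I expect the difficulty.
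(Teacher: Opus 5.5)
\textbf{There is a genuine gap, and it is in Stage three.} Stages one and two are essentially the standard-form reduction the paper imports (Theorem~\ref{thm:standardreduction}; compare Lemma~\ref{lem:fixedcover}). Two slips there are minor.
\begin{itemize}
\item Not every $\theta_x$ is forced to be dyadic. For a leaf $x\in X$ with neighbour $z$, $X_x(\theta)Z_z(-\theta)$ fixes $|G\rangle$ for every $\theta$. You only get, and only need, existence of a dyadic solution.
\item With $k$ the level of Definition~\ref{def:incident}, the pair congruence is modulo $2^{k-1}$, not $2^k$; modulo $2^k$ no edge could toggle.
\end{itemize}

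Stage three carries the whole content of the statement, and your weight lemma is false as formulated, for three reasons.

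First, a nonzero top-level parity vector $c$ does not mean level $k$ is genuinely needed. Take $Z=\{1,\ldots,k+2\}$, one vertex $x_p\in X$ with $N(x_p)=\supp(p)$ for each $p\in\F_2^{k+2}$ of weight at least two, and $s\equiv1$. Each $K\subseteq Z$ with $2\leq|K|\leq k+1$ has exactly $2^{k+2-|K|}$ common neighbours, so $s$ is $k$-incident with every multiplicity odd. Yet every pair count is $2^k\equiv0\pmod{2^k}$, so nothing toggles, and the graph has $2^{k+2}-1$ vertices. A single leaf $x$ with $s=e_x$ is a smaller degenerate instance.

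Second, your constant is wrong. The 27-vertex Tsimakuridze--G\"uhne pair genuinely needs level $k=2$, yet $27<2^{k+3}$. The correct target is $n\geq2^{k+2}$ for a pair that is $\LC_k$- but not $\LC_{k-1}$-equivalent. The induction must be anchored at $k=2$ with $16$; level one is Clifford and gives no base case.

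Third, exact doubling cannot come from the congruences alone. They constrain monomials of degree $2,\ldots,k+1$ in $m=|Z|$ variables, but not those of degree $0$ and $1$. Let $\bar c$ be the function on $\F_2^Z$ that records, at each $p$ of weight at least two, the parity of the total multiplicity on $\{x:N(x)=\supp p\}$. Then $\bar c$ lies only in $\mathrm{RM}(m-k-2,m)\oplus\spanR\{\delta_p:|p|\leq1\}$. The Reed--Muller bound therefore loses up to $m+1$ points, giving only $|X|+|Z|\geq2^{k+2}-1$, and the example above attains this.

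\textbf{The missing idea is to work modulo lower-level edge changes.} Prove $\Omega_k=\Omega_{k-1}$ for a standard form on at most $2^{k+2}-1$ vertices. Then apply Theorem~\ref{thm:standardreduction} at the least level $k$ at which the pair is $\LC_k$-equivalent.
\begin{itemize}
\item Set $s$ to zero on vertices of $X$ of degree at most one, and merge multiplicities of vertices with equal neighbourhoods onto one representative. Neither step changes an incidence sum or $f$.
\item If every merged multiplicity is even, halving gives a $(k-1)$-incident $s'$ with $f_k(2s')=f_{k-1}(s')$.
\item Otherwise $\bar c\neq0$ agrees, away from the $m+1$ points of weight at most one, with a nonzero codeword of $\mathrm{RM}(m-k-2,m)$ (read as $0$ when $m\leq k+1$). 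That codeword has weight at least $2^{k+2}$.
\item Equality in $|X|+|Z|\geq2^{k+2}-1$ forces this codeword to be the indicator of a $(k+2)$-flat containing $0,e_1,\ldots,e_m$. Hence $m=k+2$, $W=\varnothing$, and $X$ is exactly the configuration above.
\item In that configuration $\mathbf 1$ is $k$-incident with $f_k(\mathbf 1)=0$, and $s-\mathbf 1$ is even. So $f_k(s)=f_{k-1}\bigl((s-\mathbf 1)/2\bigr)$.
\end{itemize}
Hence a genuinely needed level $k\geq r+1$ forces $n\geq2^{k+2}\geq2^{r+3}$, which is the contrapositive you want. Your doubling induction on a $Z$-vertex would have to reproduce both the $m+1$ loss and this equality analysis.
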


For a component of order $n_c\geq2$, choose
\begin{equation}\label{eq:level}
 r=\max\{1,\lceil\log_2((n_c+1)/8)\rceil\}.
\end{equation}
This is the smallest integer $r\geq1$ with $n_c\leq2^{r+3}-1$, so LU- and $\LC_r$-equivalence coincide on the component. In particular, $r=O(\log n_c)$ and $q=2^r=O(n_c)$; components with $n_c\leq15$ use $r=1$.

For a multiset $S$ of vertices, $s(x)$ is its nonnegative integer multiplicity and $\supp(S)=\{x:s(x)>0\}$. The multiset is \emph{independent} when its support is an independent vertex set. For $K\subseteq V$, define the common-neighbor multiplicity
\begin{equation}\label{eq:common}
 c_S(K)=\sum_{x:\ K\subseteq N_G(x)}s(x).
\end{equation}

\begin{definition}[{\cite[Definitions~1--2]{CP}}]\label{def:incident}
Let $r\geq1$. An independent multiset $S$ is $r$-incident if, for every $0\leq j<r$ and every $K\subseteq V\setminus\supp(S)$ with $|K|=j+2$,
\begin{equation}\label{eq:incidence}
 c_S(K)\equiv0\pmod{2^{r-j-\delta_{j0}}},
\end{equation}
where $\delta_{j0}=1$ for $j=0$ and is zero otherwise. Its $r$-local complementation, denoted $G\star_r S$, toggles a pair outside $\supp(S)$ precisely when its common-neighbor multiplicity is $2^{r-1}$ modulo $2^r$. Pairs meeting the support are unchanged.
\end{definition}

The graph rule is implemented by rotations about local Pauli axes. Put $q=2^r$, $Z(\theta)=\diag(1,e^{i\theta})$, and $X(\theta)=\mathsf H Z(\theta)\mathsf H$, where $\mathsf H$ is the single-qubit Hadamard gate. For an $r$-incident multiset with multiplicities $s$ supported on an independent set $X$, Claudet and Perdrix give \cite[Section~2.3]{CP}
\begin{equation}\label{eq:generalizedunitary}
 |G\star_r s\rangle=
 \left(\bigotimes_{u\in V}X\!\left(\frac{s(u)\pi}{q}\right)\right)
 \left(\bigotimes_{v\in V}Z\!\left(-\frac{\pi}{q}
           \sum_{u\in N_G(v)}s(u)\right)\right)|G\rangle,
\end{equation}
where $s$ is zero outside $X$. Independence ensures that a vertex in $X$ receives only an $X$-rotation; vertices outside $X$ receive compensating $Z$-rotations. The incidence conditions ensure that the transformed state is again a graph state. In standard form, only edges within $Z$ can change.

\begin{figure}[tbp]
\centering
\begin{minipage}[t]{0.48\linewidth}
\centering
\textbf{(a)} Standard form\par\smallskip
\begin{tikzpicture}[x=1cm,y=1cm,font=\footnotesize,
 xqubit/.style={circle,draw=blue!65!black,fill=blue!12,minimum size=5mm,inner sep=1pt},
 zqubit/.style={circle,draw=orange!70!black,fill=orange!12,minimum size=5mm,inner sep=1pt},
 wqubit/.style={circle,draw,fill=gray!12,minimum size=5mm,inner sep=1pt}]
\path[use as bounding box] (-0.7,-0.9) rectangle (5.2,3.9);
\node at (0,3.45) {$X$}; \node at (2.1,3.45) {$Z$}; \node at (4.2,3.45) {$W$};
\foreach \i/\y in {1/2.8,2/1.6,3/0.4} {
 \node[xqubit] (x\i) at (0,\y) {$x_{\i}$};
 \node[zqubit] (z\i) at (2.1,\y) {$z_{\i}$};
}
\node[wqubit] (w1) at (4.2,2.3) {$w_1$};
\node[wqubit] (w2) at (4.2,0.9) {$w_2$};
\foreach \i/\j in {1/1,1/2,2/1,2/3,3/2,3/3} \draw[gray] (x\i)--(z\j);
\draw[gray] (z1)--(w1)--(w2)--(z3); \draw[gray] (z2)--(w1);
\draw[orange!80!black,dashed,thick] (z1) to[bend left=35] (z2);
\draw[orange!80!black,dashed,thick] (z2) to[bend left=35] (z3);
\draw[orange!80!black,dashed,thick] (z1) to[bend left=62] (z3);
\node[align=center] at (2.1,-0.45) {$X$ independent; $N(X)\subseteq Z$\\generalized step changes only $Z$--$Z$ edges};
\end{tikzpicture}
\end{minipage}\hfill
\begin{minipage}[t]{0.48\linewidth}
\centering
\textbf{(b)} A 27-qubit LU-but-not-LC pair\par\smallskip
\begin{tikzpicture}[x=1cm,y=1cm,font=\footnotesize]
\path[use as bounding box] (-2.7,-2.7) rectangle (2.7,2.1);
\coordinate (x1) at (90.00000:1.95);
\coordinate (x2) at (107.14286:1.95);
\coordinate (x3) at (124.28571:1.95);
\coordinate (x4) at (141.42857:1.95);
\coordinate (x5) at (158.57143:1.95);
\coordinate (x6) at (175.71429:1.95);
\coordinate (x7) at (192.85714:1.95);
\coordinate (x8) at (210.00000:1.95);
\coordinate (x9) at (227.14286:1.95);
\coordinate (x10) at (244.28571:1.95);
\coordinate (x11) at (261.42857:1.95);
\coordinate (x12) at (278.57143:1.95);
\coordinate (x13) at (295.71429:1.95);
\coordinate (x14) at (312.85714:1.95);
\coordinate (x15) at (330.00000:1.95);
\coordinate (x16) at (347.14286:1.95);
\coordinate (x17) at (364.28571:1.95);
\coordinate (x18) at (381.42857:1.95);
\coordinate (x19) at (398.57143:1.95);
\coordinate (x20) at (415.71429:1.95);
\coordinate (x21) at (432.85714:1.95);
\coordinate (z1) at (90:0.78);
\coordinate (z2) at (150:0.78);
\coordinate (z3) at (210:0.78);
\coordinate (z4) at (270:0.78);
\coordinate (z5) at (330:0.78);
\coordinate (z6) at (390:0.78);
\draw[gray!45,line width=0.25pt] (x1)--(z1);
\draw[gray!45,line width=0.25pt] (x1)--(z2);
\draw[gray!45,line width=0.25pt] (x1)--(z3);
\draw[gray!45,line width=0.25pt] (x1)--(z4);
\draw[gray!45,line width=0.25pt] (x2)--(z1);
\draw[gray!45,line width=0.25pt] (x2)--(z2);
\draw[gray!45,line width=0.25pt] (x2)--(z3);
\draw[gray!45,line width=0.25pt] (x2)--(z5);
\draw[gray!45,line width=0.25pt] (x3)--(z1);
\draw[gray!45,line width=0.25pt] (x3)--(z2);
\draw[gray!45,line width=0.25pt] (x3)--(z3);
\draw[gray!45,line width=0.25pt] (x3)--(z6);
\draw[gray!45,line width=0.25pt] (x4)--(z1);
\draw[gray!45,line width=0.25pt] (x4)--(z2);
\draw[gray!45,line width=0.25pt] (x4)--(z4);
\draw[gray!45,line width=0.25pt] (x4)--(z5);
\draw[gray!45,line width=0.25pt] (x5)--(z1);
\draw[gray!45,line width=0.25pt] (x5)--(z2);
\draw[gray!45,line width=0.25pt] (x5)--(z4);
\draw[gray!45,line width=0.25pt] (x5)--(z6);
\draw[gray!45,line width=0.25pt] (x6)--(z1);
\draw[gray!45,line width=0.25pt] (x6)--(z2);
\draw[gray!45,line width=0.25pt] (x6)--(z5);
\draw[gray!45,line width=0.25pt] (x6)--(z6);
\draw[gray!45,line width=0.25pt] (x7)--(z1);
\draw[gray!45,line width=0.25pt] (x7)--(z3);
\draw[gray!45,line width=0.25pt] (x7)--(z4);
\draw[gray!45,line width=0.25pt] (x7)--(z5);
\draw[gray!45,line width=0.25pt] (x8)--(z1);
\draw[gray!45,line width=0.25pt] (x8)--(z3);
\draw[gray!45,line width=0.25pt] (x8)--(z4);
\draw[gray!45,line width=0.25pt] (x8)--(z6);
\draw[gray!45,line width=0.25pt] (x9)--(z1);
\draw[gray!45,line width=0.25pt] (x9)--(z3);
\draw[gray!45,line width=0.25pt] (x9)--(z5);
\draw[gray!45,line width=0.25pt] (x9)--(z6);
\draw[gray!45,line width=0.25pt] (x10)--(z1);
\draw[gray!45,line width=0.25pt] (x10)--(z4);
\draw[gray!45,line width=0.25pt] (x10)--(z5);
\draw[gray!45,line width=0.25pt] (x10)--(z6);
\draw[gray!45,line width=0.25pt] (x11)--(z2);
\draw[gray!45,line width=0.25pt] (x11)--(z3);
\draw[gray!45,line width=0.25pt] (x11)--(z4);
\draw[gray!45,line width=0.25pt] (x11)--(z5);
\draw[gray!45,line width=0.25pt] (x12)--(z2);
\draw[gray!45,line width=0.25pt] (x12)--(z3);
\draw[gray!45,line width=0.25pt] (x12)--(z4);
\draw[gray!45,line width=0.25pt] (x12)--(z6);
\draw[gray!45,line width=0.25pt] (x13)--(z2);
\draw[gray!45,line width=0.25pt] (x13)--(z3);
\draw[gray!45,line width=0.25pt] (x13)--(z5);
\draw[gray!45,line width=0.25pt] (x13)--(z6);
\draw[gray!45,line width=0.25pt] (x14)--(z2);
\draw[gray!45,line width=0.25pt] (x14)--(z4);
\draw[gray!45,line width=0.25pt] (x14)--(z5);
\draw[gray!45,line width=0.25pt] (x14)--(z6);
\draw[gray!45,line width=0.25pt] (x15)--(z3);
\draw[gray!45,line width=0.25pt] (x15)--(z4);
\draw[gray!45,line width=0.25pt] (x15)--(z5);
\draw[gray!45,line width=0.25pt] (x15)--(z6);
\draw[gray!45,line width=0.25pt] (x16)--(z1);
\draw[gray!45,line width=0.25pt] (x16)--(z2);
\draw[gray!45,line width=0.25pt] (x16)--(z3);
\draw[gray!45,line width=0.25pt] (x16)--(z4);
\draw[gray!45,line width=0.25pt] (x16)--(z5);
\draw[gray!45,line width=0.25pt] (x17)--(z1);
\draw[gray!45,line width=0.25pt] (x17)--(z2);
\draw[gray!45,line width=0.25pt] (x17)--(z3);
\draw[gray!45,line width=0.25pt] (x17)--(z4);
\draw[gray!45,line width=0.25pt] (x17)--(z6);
\draw[gray!45,line width=0.25pt] (x18)--(z1);
\draw[gray!45,line width=0.25pt] (x18)--(z2);
\draw[gray!45,line width=0.25pt] (x18)--(z3);
\draw[gray!45,line width=0.25pt] (x18)--(z5);
\draw[gray!45,line width=0.25pt] (x18)--(z6);
\draw[gray!45,line width=0.25pt] (x19)--(z1);
\draw[gray!45,line width=0.25pt] (x19)--(z2);
\draw[gray!45,line width=0.25pt] (x19)--(z4);
\draw[gray!45,line width=0.25pt] (x19)--(z5);
\draw[gray!45,line width=0.25pt] (x19)--(z6);
\draw[gray!45,line width=0.25pt] (x20)--(z1);
\draw[gray!45,line width=0.25pt] (x20)--(z3);
\draw[gray!45,line width=0.25pt] (x20)--(z4);
\draw[gray!45,line width=0.25pt] (x20)--(z5);
\draw[gray!45,line width=0.25pt] (x20)--(z6);
\draw[gray!45,line width=0.25pt] (x21)--(z2);
\draw[gray!45,line width=0.25pt] (x21)--(z3);
\draw[gray!45,line width=0.25pt] (x21)--(z4);
\draw[gray!45,line width=0.25pt] (x21)--(z5);
\draw[gray!45,line width=0.25pt] (x21)--(z6);
\draw[orange!85!black,dashed,line width=0.6pt] (z1)--(z2);
\draw[orange!85!black,dashed,line width=0.6pt] (z1)--(z3);
\draw[orange!85!black,dashed,line width=0.6pt] (z1)--(z4);
\draw[orange!85!black,dashed,line width=0.6pt] (z1)--(z5);
\draw[orange!85!black,dashed,line width=0.6pt] (z1)--(z6);
\draw[orange!85!black,dashed,line width=0.6pt] (z2)--(z3);
\draw[orange!85!black,dashed,line width=0.6pt] (z2)--(z4);
\draw[orange!85!black,dashed,line width=0.6pt] (z2)--(z5);
\draw[orange!85!black,dashed,line width=0.6pt] (z2)--(z6);
\draw[orange!85!black,dashed,line width=0.6pt] (z3)--(z4);
\draw[orange!85!black,dashed,line width=0.6pt] (z3)--(z5);
\draw[orange!85!black,dashed,line width=0.6pt] (z3)--(z6);
\draw[orange!85!black,dashed,line width=0.6pt] (z4)--(z5);
\draw[orange!85!black,dashed,line width=0.6pt] (z4)--(z6);
\draw[orange!85!black,dashed,line width=0.6pt] (z5)--(z6);
\foreach \i in {1,...,21} \filldraw[draw=blue!65!black,fill=blue!18] (x\i) circle (2pt);
\foreach \i in {1,...,6} \filldraw[draw=orange!80!black,fill=white,line width=0.8pt] (z\i) circle (3pt);
\node[align=center] at (0,-2.45) {outer ring: $|X|=21$; inner ring: $|Z|=6$\\dashed edges: the added clique on $Z$};
\end{tikzpicture}
\end{minipage}
\caption{Generalized local complementation in standard form. (a) The schematic shows the permitted adjacencies; dashed edges mark the pairs that a generalized step on $X$ may toggle. (b) In the Tsimakuridze--G\"uhne example \cite{TG}, each outer vertex is adjacent to one four- or five-element subset of the six inner vertices. The first graph contains only the gray incidence edges; the second also contains the dashed clique. A level-two local unitary connects their graph states, although no local Clifford does.}
\label{fig:standardform}
\end{figure}

\begin{theorem}[Claudet--Perdrix {\cite[Lemma~18]{CP}}]\label{thm:standardreduction}
Let $r\geq1$, and let $G_1,G_2$ be $\LC_r$-equivalent graphs in standard form with respect to the same MLS cover. They are related by a sequence of ordinary local complementations at type-$\bot$ vertices together with a single valid $r$-local complementation supported on the type-$\mathsf X$ vertices.
\end{theorem}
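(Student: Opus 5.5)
This statement is imported from Claudet--Perdrix, so the plan reconstructs their argument from the structure already set up, without new machinery. The approach is to work directly with an explicit $\LC_r$ unitary $U=\bigotimes_v U_v$ with $|G_2\rangle\propto U|G_1\rangle$. The goal is to show that $U$ decomposes as a Clifford part, acting nontrivially only at type-$\bot$ vertices, composed with the diagonal rotation pattern \eqref{eq:generalizedunitary} for an $r$-incident multiset supported on $X$.

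\textbf{Step 1.} Since both graphs are in standard form for the same MLS cover $\mathcal M$, each minimal local set $L\in\mathcal M$ is the support of a stabilizer of both states. A minimal-support stabilizer is essentially unique up to the finitely many Paulis determined by its generators. Local unitaries map minimal-support stabilizers of $|G_1\rangle$ to operators stabilizing $|G_2\rangle$ with the same support, so each $U_v$ must conjugate the Pauli axes that the cover assigns at $v$ in $G_1$ to those assigned in $G_2$.

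\textbf{Step 2.} Apply this vertex by vertex. At a type-$\mathsf X$ vertex, $U_v$ preserves the $\mathsf X$ axis, so up to Clifford and phase it is $X(\theta_v)$. At a type-$\mathsf Z$ vertex it is $Z(\theta_v)$. At a type-$\bot$ vertex two distinct Pauli axes are preserved up to permutation, which forces $U_v$ to be Clifford. The absence of type-$\mathsf Y$ vertices removes the remaining case.

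\textbf{Step 3.} Absorb the type-$\bot$ Cliffords into ordinary local complementations at $\bot$ vertices, using the Van den Nest correspondence between local Cliffords and local complementation \cite{VDDGraph}. This step must be controlled so that it does not disturb the standard form. The remainder is a product of $X(\theta_v)$ on $X$ and $Z(\theta_v)$ on $Z$. Using the $\LC_r$ hypothesis, the angles can be taken to be multiples of $\pi/q$.

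\textbf{Step 4.} Expand $\prod_{x\in X} X(s(x)\pi/q)$ acting on $|G_1\rangle$ in the computational basis. Since $N(X)\subseteq Z$ and $X$ is independent, this yields a phase polynomial in the $Z$-variables whose monomial of degree $j+2$ over a set $K$ has coefficient proportional to $c_S(K)$. Requiring the result to equal $|G_2\rangle$ up to $Z$-rotations forces every degree $\geq3$ phase to vanish modulo $2\pi$, which gives \eqref{eq:incidence} for $j\geq1$. The quadratic terms must be integer multiples of $\pi$, giving the $j=0$ condition with its exponent shift, and they toggle exactly the edges with $c_S=q/2$. The linear terms are cancelled by the $Z$-rotations. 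Hence $G_2$ equals $G_1\star_r S$ after the $\bot$-local complementations.

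\textbf{Main obstacle.} The hardest step is Step 3 together with the ordering issue it raises. The $\bot$-Cliffords may not commute past the generalized complementation, and the local complementations they induce could move $X$- or $Z$-vertices out of standard form. I would handle this by first showing that local complementations at $W$ vertices keep $N(X)$ unchanged and keep the cover types, so both orders of composition are valid. Failing that, I would defer to \cite[Lemma~18]{CP}.
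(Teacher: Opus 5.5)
Your proposal has a genuine gap at Step~3, and Step~1 is also incomplete.

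\textbf{Step~3.} Van den Nest's correspondence, used as a black box, produces some sequence of local complementations between two \emph{graph states*} but says nothing about confining the centres to $W$. At the point where you invoke it, there is no such pair of graph states. The Clifford factors on $W$ alone generally do not map $|G_1\rangle$ to a graph state, even up to diagonal gates: for example, $\mathsf H_w|G_1\rangle$ has the diagonal stabilizer $\mathsf Z_w\mathsf Z_{N(w)}$. Nor is the Clifford part supported only on $W$, as your plan asserts. Sign flips survive on $X\cup Z$, and each complementation at $w\in W$ carries $\sqrt{\mathsf Z}^{\dagger}$ factors on its neighbours in $Z$.

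The order has to be reversed. Write $U=CU'$ with $U'=\prod_{x\in X}X(\theta_x)\prod_{z\in Z}Z(\theta_z)$, where $C$ is Pauli on $X\cup Z$ and Clifford on $W$. Then $U'|G_1\rangle=C^\dagger|G_2\rangle$ is a stabilizer state with full computational support. What justifies your Step~4 coefficient comparison is the normal form \eqref{eq:fullsupportphase}, not a comparison with $|G_2\rangle$ ``up to $Z$-rotations''. It gives $U'|G_1\rangle=D_Z|G_1\star_r s\rangle$ with $s$ $r$-incident and $D_Z$ a diagonal Clifford on $Z$.

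The residual Clifford $CD_Z$ from $G_1\star_r s$ to $G_2$ preserves the $\mathsf Z$ axis on $X\cup Z$. Equivalently, it is a solution of \eqref{eq:cliffordlinear}--\eqref{eq:clifforddet} with $\supp(\beta)\subseteq W$. The missing idea is that such a solution can be realised by complementations centred on $W$. This is the pivoting argument of Lemma~\ref{lem:restrictedsequence}: the diagonal equation \eqref{eq:pivotexists} supplies a partner for each $u,w,u$ pivot. This localisation of centres is the crux. The ordering and preservation of standard form are easy, since $N(w)\cap X=\varnothing$, as in the paragraph following the statement. Falling back on \cite[Lemma~18]{CP} would be circular, since that is the statement itself.

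\textbf{Step~1.} A priori $UPU^\dagger$ is only a product of traceless Hermitian unitaries, not a Pauli operator. For a member $L\in\mathcal M$ whose supported stabilizer group is one-dimensional, comparing the reduced states $2^{-|L|}(I+P_L)$ does force $UP_LU^\dagger$ to be the target Pauli. For a two-dimensional member, however, the reduced state fixes only $P_1+P_2+P_3$. A separate argument is then needed to make $U_v$ Clifford at those $\bot$ vertices, such as the operator-Schmidt, rank-one argument in the proof of Lemma~\ref{lem:fixedcover}, or \cite[Propositions~49--50]{Claudet}.

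\textbf{Step~2.} You silently assume that $G_2$ has the same type as $G_1$ at every vertex. The axis argument alone would allow $U_v$ to send $\mathsf X$ to $\mathsf Z$, so you need \cite[Lemma~17]{CP} here.

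Finally, the claim that $\LC_r$ forces $\theta_v\in(\pi/q)\Z$ is true, but it needs a short argument about diagonal elements of the group generated by $\mathsf H$ and $Z(\pi/q)$.
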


The statements used here and in the fixed-cover argument of Section~\ref{sec:orbitcount} apply to an arbitrary chosen MLS cover. Detailed proofs are given in \cite[Lemmas~10--11 and~14]{Claudet}, with the local-axis restrictions established in \cite[Propositions~49--50]{Claudet}. The standard form in those proofs is defined relative to the chosen cover \cite[Definition~19]{Claudet}; it does not require the family of all minimal local sets.

The generalized step can be performed first. An ordinary complementation at $w\in W$ changes edges only among $N(w)$, which is disjoint from $X$, and hence preserves all neighborhoods of $X$. A generalized operation on $X$ changes only $Z$--$Z$ edges and preserves $N(w)$. Each operation therefore preserves both the validity and the edge changes of the other. This remains true throughout a sequence on $W$, since no such operation creates an edge between $W$ and $X$. The reverse implication in Theorem~\ref{thm:standardreduction} holds because all the permitted operations are $\LC_r$ transformations.

To find the allowed generalized operations, we solve the incidence conditions for their vertex multiplicities. This calculation applies to any graph $G$ on at most $n$ vertices and any prescribed independent set $X$. In the reduction, $G=G_1$ and $X$ is its type-$\mathsf X$ set. Put $R=\Z/q\Z$, $q=2^r$, and $J=V\setminus X$, with $m=|X|$ and $t=|J|$. We represent multiplicities modulo $q$ by vectors $s\in\{0,\ldots,q-1\}^X$, equivalently by elements of $R^X$.

For each $v\in J$, let $a_v\in\{0,1\}^X$ record its neighbors in $X$. Write $\odot$ for coordinatewise multiplication, $(z\odot z')_x=z_xz'_x$. For $K\subseteq J$, define
\begin{equation}\label{eq:monomial}
 a_K=\mathop{\bigodot}_{v\in K}a_v,\qquad
 \eta(k)=\begin{cases}1,&k=2,\\ k-2,&k\geq3.\end{cases}
\end{equation}
The scalar product $a_K\cdot s$ is the common-neighbor multiplicity of $K$. For a pair, Definition~\ref{def:incident} requires divisibility by $2^{r-1}$, equivalently $2a_K\cdot s=0\pmod q$. For $3\leq k=|K|\leq r+1$, the required divisor is $2^{r-k+2}$, giving coefficient $2^{k-2}$. Thus put
\begin{equation}\label{eq:admissible}
 \Sigma=\Sigma(G,X,r)=\{s\in R^X:2^{\eta(|K|)}a_K\cdot s=0\pmod q
       \text{ for }K\subseteq J,\ 2\leq|K|\leq r+1\}.
\end{equation}
Pair and triple rows both carry coefficient $2$; subsequent coefficients are $4,8,\ldots$.

\begin{lemma}\label{lem:support}
The vectors in $\Sigma$, with their standard representatives in $\{0,\ldots,q-1\}^X$, are exactly the $r$-incident multisets supported in $X$ whose multiplicities are below $q$. Every $r$-incident multiset supported in $X$ reduces to such a vector.
\end{lemma}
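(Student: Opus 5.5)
The plan is to unwind Definition~\ref{def:incident} for a multiset $S$ with $\supp(S)\subseteq X$ and match its conditions one by one with the rows of \eqref{eq:admissible}. The argument has three steps.

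First, I identify the index sets. Since $X$ is independent, any multiset supported in $X$ is independent. The condition in \eqref{eq:incidence} ranges over $K\subseteq V\setminus\supp(S)$. Suppose $K$ meets $X$ (and hence possibly $\supp(S)$ at a vertex outside the support, or $X\setminus\supp(S)$). Then $c_S(K)=0$, because no vertex of $\supp(S)\subseteq X$ is adjacent to a vertex of $X$, so $K\not\subseteq N_G(x)$ for any $x$ with $s(x)>0$. Hence the only nontrivial conditions come from $K\subseteq J$. Conversely, every $K\subseteq J$ avoids $\supp(S)$, so all such $K$ are genuinely constrained.

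Second, I show that $c_S(K)=a_K\cdot s$ for $K\subseteq J$. Indeed $(a_K)_x=1$ exactly when $K\subseteq N_G(x)$, so $a_K\cdot s=\sum_{x:K\subseteq N(x)}s(x)$.

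Third, I translate the divisibility conditions. With $|K|=k=j+2$ and $0\le j<r$, we have $2\le k\le r+1$. The modulus in \eqref{eq:incidence} is $2^{r-1}$ for $k=2$ and $2^{r-k+2}$ for $k\ge3$, which is $q/2^{\eta(k)}$ in both cases. The condition $c\equiv0\pmod{q/2^{\eta}}$ is equivalent to $2^{\eta}c\equiv0\pmod q$, since $\eta(k)\le r-1<r$ and the modulus $q/2^\eta$ is a positive integer. This gives exactly the rows defining $\Sigma$.

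Combining these steps, a multiset with multiplicities in $\{0,\dots,q-1\}$ supported in $X$ is $r$-incident if and only if its vector lies in $\Sigma$.

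For the reduction statement, I observe that every condition depends on $s$ only through $2^{\eta}(a_K\cdot s)\bmod q$. This is a function of $s\bmod q$, since $a_K\cdot s$ is linear in $s$ with integer coefficients. So an arbitrary $r$-incident multiset supported in $X$ reduces entrywise modulo $q$ to a vector that still satisfies all the rows, and therefore lies in $\Sigma$.

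One subtlety is that reduction can shrink the support, which enlarges the family of sets $K$ that must be checked. By the first step, the extra sets meeting $X$ impose no condition, and the sets in $J$ are the same for both supports. So $r$-incidence is preserved under reduction.

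The step I expect to need the most care is this support bookkeeping: the definition quantifies over $K$ outside the support rather than outside $X$. Independence of $X$ is what makes those extra sets harmless. Beyond that, only the index shift $k=j+2$ and the exceptional $\delta_{j0}$ case have to match $\eta$.
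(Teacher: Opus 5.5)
Your proposal is correct and follows essentially the same argument as the paper: identify $c_S(K)=a_K\cdot s$ for $K\subseteq J$, match the moduli through $\eta$, use independence of $X$ to make every set meeting $X\setminus\supp(S)$ impose no condition, and note that reduction modulo $q$ preserves all the divisibilities. One small slip: the bound $\eta(k)\le r-1$ fails when $r=1$ and $k=2$, since then $\eta=r=1$. The equivalence only needs $\eta(k)\le r$, and in that case both sides hold trivially, which is exactly the paper's remark that the pair condition is vacuous for $r=1$.
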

\begin{proof}
For $K\subseteq J$, the common-neighbor multiplicity is $a_K\cdot s$. When $k=|K|$, condition \eqref{eq:incidence} requires divisibility by $2^{r-\eta(k)}$, exactly equivalent to \eqref{eq:admissible}. This equivalence also holds for $r=1$, when the pair condition is vacuous.

The actual support of a multiplicity vector may be smaller than $X$. If a set $K\subseteq V\setminus\supp(S)$ contains some $x\in X$, no vertex of $X$ is adjacent to every vertex of $K$: independence of $X$ forbids adjacency to $x$. Its common-neighbor multiplicity in $S$ is therefore zero. All such extra conditions in Definition~\ref{def:incident} hold automatically. These observations establish both directions. Reducing multiplicities modulo $q$ preserves all relevant divisibilities.
\end{proof}

The multiplicities also determine the changed edges. Define
\begin{equation}\label{eq:togglemap}
 f:\Sigma\longrightarrow\F_2^{\binom J2},\qquad
 f(s)_p=\frac{2a_p\cdot\widetilde{s}}{q}\pmod2,
 \quad p\in\binom J2,
\end{equation}
where $\widetilde{s}$ is any integer vector reducing to $s$ modulo $q$ (an integer lift), and put $\Omega=f(\Sigma)$. The symbols $\Sigma,\Omega,f$ depend on $(G,X,r)$; these arguments will be suppressed when fixed.

\begin{lemma}\label{lem:homomorphism}
For $s,s'\in\Sigma$ and $c\in R$, the map in \eqref{eq:togglemap} is well-defined and satisfies
\begin{equation}\label{eq:homomorphism}
 f(s+s')=f(s)+f(s'),\qquad f(cs)=(c\bmod2)f(s).
\end{equation}
Consequently $\Omega$ is a binary linear subspace.
\end{lemma}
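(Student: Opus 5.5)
The plan is to work with integer lifts and use the pair constraints in $\Sigma$ at two points: integrality of $2a_p\cdot\widetilde s/q$, and additivity modulo $2$.

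\emph{Integrality.} For $s\in\Sigma$ and any pair $p\in\binom J2$, the row for $K=p$ in \eqref{eq:admissible} has coefficient $2^{\eta(2)}=2$. It states $2a_p\cdot s\equiv0\pmod q$. Hence for any integer lift $\widetilde s$, the integer $2a_p\cdot\widetilde s$ is divisible by $q$, so $2a_p\cdot\widetilde s/q\in\Z$. This uses $r\geq1$, so that $q$ is even and the condition is meaningful; for $r=1$ the constraint is vacuous modulo $2$, but $2a_p\cdot\widetilde s$ is even anyway, so integrality still holds.

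\emph{Independence of the lift.} Let $\widetilde s'=\widetilde s+q e$ with $e\in\Z^X$ be another lift. Then
\[
\frac{2a_p\cdot\widetilde s'}{q}-\frac{2a_p\cdot\widetilde s}{q}=2a_p\cdot e\equiv0\pmod2,
\]
so $f(s)_p$ is well defined.

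\emph{Additivity.} If $\widetilde s$ and $\widetilde s'$ lift $s$ and $s'$, then $\widetilde s+\widetilde s'$ lifts $s+s'$. Closure of $\Sigma$ under addition is immediate, since $\Sigma$ is the solution set of linear congruences and hence an $R$-submodule. Linearity of the integer dot product then gives $f(s+s')=f(s)+f(s')$ exactly, before reduction modulo $2$.

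\emph{Scalars.} For $c\in R$, pick an integer lift $\widetilde c\in\{0,\dots,q-1\}$. Then $\widetilde c\,\widetilde s$ lifts $cs$, and $cs\in\Sigma$ by the module property. We get
\[
f(cs)_p=\widetilde c\cdot\frac{2a_p\cdot\widetilde s}{q}\bmod 2=(c\bmod2)f(s)_p .
\]
The parity $c\bmod2$ is well defined because $q$ is even.

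\emph{Consequence.} Additivity makes $f$ a group homomorphism $\Sigma\to\F_2^{\binom J2}$, so its image $\Omega$ is an additive subgroup. Every additive subgroup of an $\F_2$-vector space is closed under scalars $0,1$, so $\Omega$ is a binary linear subspace.

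No step is a real obstacle. The only point needing care is that the pair rows carry coefficient $2$ rather than $2^{\eta}$ for some other $\eta$, so integrality is exactly what membership in $\Sigma$ guarantees. The lift-independence relies on the extra factor $2$ in the numerator.
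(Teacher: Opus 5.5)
Your proof is correct and follows the paper's argument. The steps match: integrality comes from the pair row with coefficient $2$, the lift is irrelevant because changing it by $qe$ shifts the quotient by the even integer $2a_p\cdot e$, and additivity and scalar compatibility follow from integer lifts. The one difference is cosmetic: you use lift-independence to take $\widetilde s+\widetilde s'$ directly as the lift of $s+s'$, whereas the paper uses the reduced lift and tracks the carry term $-2a_p\cdot z$, which amounts to the same thing. Also, you use the symbol $\widetilde s'$ for two different objects (a second lift of $s$, and a lift of $s'$), which is worth fixing.
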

\begin{proof}
The pair constraint is $2a_p\cdot s=0\pmod q$, so every integer lift has $a_p\cdot\widetilde{s}$ divisible by $q/2$. For $r=1$, this is automatic because $q/2=1$. Changing the lift by $qz$ changes the quotient by $2a_p\cdot z$, an even integer. The map is therefore well-defined.

If $\widetilde{s+s'}=\widetilde s+\widetilde{s'}-qz$, then
\begin{equation}\label{eq:carry}
 \frac{2a_p\cdot\widetilde{s+s'}}q
 =\frac{2a_p\cdot\widetilde s}q+\frac{2a_p\cdot\widetilde{s'}}q-2a_p\cdot z.
\end{equation}
Reducing modulo two removes the last term. For a lift $\widetilde c$ of a scalar $c\in R$, writing $\widetilde{cs}=\widetilde c\,\widetilde s-qz$ gives the scalar identity in the same way. Reduction $R\to\F_2$ is well-defined because $q$ is even, establishing \eqref{eq:homomorphism}.
\end{proof}

For $s\in\Sigma$, $f(s)$ is precisely the edge change produced by its generalized complementation. Pairs meeting $X$ are unchanged: even if a vertex of $X$ is absent from the actual support, independence of $X$ makes its common-neighbor multiplicity with any other vertex zero. Thus reduction of multiplicities modulo $q$ preserves both validity and the resulting graph. In a surviving standard-form pair, $\Omega$ is supported on $\binom Z2$, since $N(X)\subseteq Z$.

In \cite[Lemma~19]{CP}, the computation of $\Sigma$ lists the incidence rows for every subset of size at most $r+1$. The number of these rows gives the quasipolynomial cost when $r$ grows logarithmically with the graph order.

\begin{proposition}[Reduction to the edge-change space]\label{prop:reduction}
Let $G_1,G_2$ be a surviving connected pair in common standard form, and let $r\geq1$. Suppose $\omega_1,\ldots,\omega_d$ is a basis of the complete space $\Omega(G_1,X,r)$, with $f(s_i)=\omega_i$. Let $E_i$ be the symmetric zero-diagonal matrix representing $\omega_i$. Then $G_1,G_2$ are $\LC_r$-equivalent if and only if, for some $\epsilon\in\F_2^d$, the graph with adjacency
\begin{equation}\label{eq:changedgraph}
 \Gamma(G_1)+\Delta(\epsilon),\qquad
 \Delta(\epsilon)=\sum_{i=1}^d\epsilon_i E_i,
\end{equation}
can be transformed into $G_2$ by local complementations centered on $W$.
\end{proposition}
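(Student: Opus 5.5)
We prove the two directions separately. Both rest on Theorem~\ref{thm:standardreduction}, on the commutation remark after it (the generalized step can be done first), and on Lemmas~\ref{lem:support} and~\ref{lem:homomorphism}.

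For the forward direction, suppose $G_1,G_2$ are $\LC_r$-equivalent. Theorem~\ref{thm:standardreduction} gives a sequence of ordinary complementations at vertices of $W$ together with one valid $r$-local complementation supported on the type-$\mathsf X$ set $X$.

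We first move the generalized step to the front. An ordinary complementation at $w\in W$ leaves every neighborhood of $X$ unchanged, because $N(w)\cap X=\emptyset$ and no edge between $W$ and $X$ is ever created. Hence the incidence data $a_v$, the set $\Sigma(G,X,r)$ and the toggle map $f$ are the same at every stage of the sequence. A generalized step on $X$ changes only $Z$--$Z$ edges, so it does not alter $N(w)$ for $w\in W$ either. The two kinds of operation therefore commute edge-set-wise, and we may apply the generalized step directly to $G_1$.

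Lemma~\ref{lem:support} lets us reduce its multiset modulo $q$ to some $s\in\Sigma(G_1,X,r)$. The remark after Lemma~\ref{lem:homomorphism} says the step changes the adjacency matrix by exactly $f(s)\in\Omega$. Expanding $f(s)=\sum_i\epsilon_i\omega_i$ in the chosen basis gives $\Gamma(G_1\star_r s)=\Gamma(G_1)+\Delta(\epsilon)$, and the remaining $W$-centered complementations carry this graph to $G_2$.

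The one point needing care is that the generalized step in Theorem~\ref{thm:standardreduction} may occur in the middle of the sequence, where it acts on an intermediate graph $G'$. Its validity there must transfer back to $G_1$. This follows because $\Sigma$ depends only on the $X$-neighborhoods of vertices in $J$, and these agree in $G'$ and $G_1$. The same fact shows that the toggled pairs, which are determined by $a_p\cdot s$, agree as well.

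For the converse, fix $\epsilon$ and set $s=\sum_i\epsilon_i s_i\in\Sigma$, which is closed under addition. By Lemma~\ref{lem:homomorphism}, $f(s)=\sum_i\epsilon_i\omega_i$, so the graph in \eqref{eq:changedgraph} is $G_1\star_r s$. This is an $\LC_r$-image of $G_1$ by \eqref{eq:generalizedunitary}, using Lemma~\ref{lem:support} to interpret $s$ as a valid $r$-incident multiset. Ordinary local complementations are local Clifford operations and hence $\LC_r$. The composite is therefore an $\LC_r$ transformation from $G_1$ to $G_2$.

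The main obstacle is the commutation and invariance step in the forward direction. It has to be argued for the whole sequence on $W$, not just one complementation, which we do by induction on its length. Everything else is bookkeeping with the linearity of $f$.
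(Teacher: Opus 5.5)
Your proposal is correct and follows essentially the paper's argument. The forward direction uses Theorem~\ref{thm:standardreduction} with the commutation remark: the $X$-neighborhoods are invariant under $W$-complementations, so the generalized step can be moved to the front and reduced to an element of $\Sigma$. The converse uses the linearity of $f$ from Lemma~\ref{lem:homomorphism} and the fact that every operation involved is $\LC_r$. The one difference is that you spell out, by induction along the sequence on $W$, why the validity and toggled pairs of the generalized step transfer from an intermediate graph back to $G_1$, whereas the paper only asserts this in the remark preceding the proposition.
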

\begin{proof}
Theorem~\ref{thm:standardreduction} gives the forward implication, since the generalized step acts by an element of $\Omega$. Conversely, $s=\sum_i\epsilon_i s_i\bmod q$ realizes the selected edge change by Lemma~\ref{lem:homomorphism}. This step and the subsequent complementations are all $\LC_r$ transformations. The argument includes $d=0$, when $\Delta=0$.
\end{proof}

Section~\ref{sec:direct} solves this restricted graph problem in $O(n_c^4)$ binary operations once the basis is known.

\section{Computing admissible edge changes}\label{sec:compression}

\subsection{Incidence compression}

Computing the edge-change space $\Omega$ begins with a compact description of the valid multiplicities $\Sigma$. Their incidence conditions are linear equations modulo $2^r$, but the direct formulation lists a row for every subset of up to $r+1$ vertices. Yet multiplying a row coordinatewise by $2a_v$ produces the next-order row. We apply these maps to a generating set and use modular elimination after each round to discard redundant rows. Linearity preserves all constraints under this compression, so the calculation stores at most $|X|$ generators instead of the full subset list.

A row module over $R$ consists of all $R$-linear combinations of its generating rows. Index $J$ by $[t]$, keeping the vectors and weights of \eqref{eq:monomial}, and let
\begin{equation}\label{eq:module}
 M_r=\spanR_R\{2^{\eta(|K|)}a_K:2\leq|K|\leq r+1\}\subseteq R^m.
\end{equation}
Subsets larger than $t$ contribute no rows. We generate $M_r$ from pair and triple rows by diagonal maps, compressing between rounds. This avoids constructing the higher-order rows individually. The calculation works for arbitrary binary vectors $a_1,\ldots,a_t\in\{0,1\}^m$, with bit complexity polynomial in $m,t,r$. The graph application uses the logarithmic value of $r$ from Section~\ref{sec:reduction}.

\begin{theorem}[Polynomial incidence compression]\label{thm:compression}
A generating matrix for $M_r$ with at most $m$ rows can be constructed deterministically in time polynomial in $m,t,r$. The construction uses at most
\begin{equation}\label{eq:roundbound}
 \tau(r,t)=\max\{0,\min\{r-2,t-3\}\}
\end{equation}
closure rounds, retaining at most $m$ rows with $O(r)$ bits per entry. For $m>0$, a bound for its bit complexity is
\begin{equation}\label{eq:compressioncost}
 O\!\left(\bigl[t^3m^2+(\tau(r,t)+1)m^3(t+1)\bigr]r^2\right).
\end{equation}
\end{theorem}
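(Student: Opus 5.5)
The plan is to build $M_r$ by iterating a lifting map on a compressed generating set. Let $P_2$ be the submodule generated by the pair rows $2a_{\{u,v\}}$ and the triple rows $2a_{\{u,v,w\}}$. For $v\in[t]$, define the $R$-linear diagonal map $\phi_v(z)=2a_v\odot z$ on $R^m$. Then define recursively
\[
 N_0=P_2,\qquad N_{i+1}=N_i+\sum_{v\in[t]}\phi_v(N_i).
\]
The first step is to prove $N_\tau=M_r$ with $\tau=\tau(r,t)$. By the identity from the introduction, $\phi_v(2^{|K|-2}a_K)=2^{|K|-1}a_{K\cup\{v\}}$ for $|K|\ge3$ and $v\notin K$. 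Hence $N_i$ contains every row of order at most $i+3$, and by induction $M_r\subseteq N_\tau$. Two truncations give the round count. Rows of order $k\ge r+2$ have coefficient $2^{k-2}\equiv0$, which explains $r-2$. Subsets larger than $t$ do not exist, which explains $t-3$.

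For the reverse inclusion we must show that $\phi_v$ maps $M_r$ into $M_r$. There are three cases, using $a_v\odot a_v=a_v$:
\begin{itemize}
\item If $v\in K$, then $\phi_v(2^{\eta(k)}a_K)=2\cdot2^{\eta(k)}a_K$, a multiple of an existing row.
\item If $v\notin K$ and $|K|\ge3$, the identity above gives a row of order $k+1$, or zero when $k+1>r+1$, since then the coefficient vanishes modulo $q$.
\item If $v\notin K$ and $|K|=2$, the image is $4a_{K\cup\{v\}}=2\cdot(2a_{K\cup\{v\}})$, a multiple of a triple row.
\end{itemize}
Since $\phi_v$ is linear, it suffices to apply it to generators. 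This also shows that the closure stabilizes, so extra rounds are harmless.

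The next step is compression. After each round, put the current generators into Howell or Hermite-type echelon form over $R=\Z/2^r\Z$. Because $R$ is a chain ring whose ideals are $2^jR$, every submodule of $R^m$ has a generating set in echelon form with at most $m$ rows. Each pivot is a power of $2$, and entries are reduced modulo $q$, so each uses $O(r)$ bits. Replacing $N_i$ by such a basis before applying the $\phi_v$ keeps the row count at most $m$ and leaves the module unchanged. A single round then feeds $m(t+1)$ rows into elimination. Each row costs $O(m^2)$ ring operations to reduce, and each ring operation costs $O(r^2)$ bit operations. This gives the term $(\tau+1)m^3(t+1)r^2$.

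The initial stage produces $\binom t2+\binom t3=O(t^3)$ rows. Inserting them one at a time into an echelon basis of size at most $m$ costs $O(m^2)$ operations per row, which gives $t^3m^2r^2$. Forming each triple row needs only $O(m)$ bit work, which is dominated.

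The main obstacle is the elimination over $\Z/2^r\Z$. Ordinary Gaussian elimination fails there, because nonunit pivots require the Howell normal form to obtain a canonical row bound and to guarantee that span membership is preserved. I would cite Storjohann--Mulders, or argue directly by choosing the pivot of minimal $2$-adic valuation in each column. One must then append the annihilator multiples $2^{r-j}\cdot\text{row}$ without exceeding $m$ rows. Proving that the count stays at most $m$, rather than $2m$, is where the chain-ring structure is essential.
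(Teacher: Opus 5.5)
Your proposal is correct and follows the paper's proof. It uses the same pair-and-triple seeds, the same three-case invariance check under the maps $T_v$ (your $\phi_v$) that gives Lemma~\ref{lem:closure}, the same round count $\tau(r,t)$ from the largest relevant order $\min\{t,r+1\}$, and compression to at most $m$ rows after every round with the same cost accounting. The only difference is the elimination routine: you use one-sided echelon insertion, whereas Lemma~\ref{lem:modular} uses a two-sided Smith-like reduction whose column transform is reused for the kernel in Lemma~\ref{lem:kernel}. Also, the Howell-form issue you single out as the main obstacle does not arise here. Unimodular row operations with a minimal-valuation pivot in each column already preserve the row module and leave at most one nonzero row per column, and since the number of rounds is fixed in advance, no canonical form, annihilator rows, or membership test is needed.
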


For $|K|\geq3$ and $v\notin K$, the next row $2^{|K|-1}a_{K\cup\{v\}}$ is obtained from $2^{|K|-2}a_K$ by doubling and coordinatewise multiplication by $a_v$. This gives the $R$-linear maps
\begin{equation}\label{eq:diagonal}
 T_v(z)=2a_v\odot z.
\end{equation}
Let $\widehat M_r$ be the smallest submodule containing $2a_K$ for $|K|=2$ or $3$ and invariant under all the maps $T_v$.

\begin{lemma}\label{lem:closure}
$M_r=\widehat M_r$.
\end{lemma}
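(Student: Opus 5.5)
The plan is to prove the two inclusions separately, using that both modules are $R$-submodules of $R^m$ and that each $T_v$ is $R$-linear. Two elementary facts drive everything. First, binary vectors are idempotent under $\odot$, so $a_v\odot a_K=a_K$ when $v\in K$ and $a_v\odot a_K=a_{K\cup\{v\}}$ when $v\notin K$. Second, a row $2^{e}a_K$ vanishes in $R$ once $e\geq r$. This means we may freely allow subsets of any size, since rows with $|K|\geq r+2$ have coefficient $2^{|K|-2}$ divisible by $q$ and are zero.

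For $\widehat M_r\subseteq M_r$, we first check that $M_r$ contains the seeds. The seeds are the rows $2a_K$ with $|K|=2$ or $3$, and these are generators of $M_r$ by definition whenever $|K|\leq r+1$. The only exception is $r=1$ with $|K|=3$, but then $2a_K=0$ in $R=\Z/2\Z$, so the row lies in $M_r$ trivially.

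Next we show $M_r$ is invariant under every $T_v$. By linearity it suffices to check $T_v$ on a generator $g=2^{\eta(k)}a_K$ with $k=|K|$. We split into four cases.
\begin{itemize}
\item If $v\in K$, then $T_v(g)=2g$, which lies in $M_r$.
\item If $v\notin K$ and $k\geq3$, then $T_v(g)=2^{k-1}a_{K\cup\{v\}}=2^{\eta(k+1)}a_{K\cup\{v\}}$. This is a generator of $M_r$ when $k+1\leq r+1$, and it is zero otherwise.
\item If $v\notin K$ and $k=2$, then $T_v(g)=4a_{K\cup\{v\}}$. This is twice the triple generator, or zero when $r=1$.
\end{itemize}
Since $\widehat M_r$ is the smallest submodule containing the seeds and closed under all $T_v$, we get $\widehat M_r\subseteq M_r$.

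For $M_r\subseteq\widehat M_r$, we show by induction on $k=|K|$ that $2^{\eta(k)}a_K\in\widehat M_r$ for every $K\subseteq J$ with $2\leq k$. The base cases $k=2,3$ are seeds. For $k\geq4$, we pick any $v\in K$ and set $K'=K\setminus\{v\}$, so that $|K'|\geq3$. Then
\[
T_v\bigl(2^{k-3}a_{K'}\bigr)=2^{k-2}a_K,
\]
which lies in $\widehat M_r$ by the induction hypothesis and closure. This covers in particular every generator of $M_r$.

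The only delicate point is bookkeeping rather than substance. The weight $\eta$ is irregular at $k=2$, since a pair and a triple both carry coefficient $2$. This is why triples must be seeded rather than derived: $T_v$ applied to a pair yields $4a_{K\cup\{v\}}$, which is only twice the required triple row. The other subtlety is the truncation at $|K|\leq r+1$ in the definition of $M_r$, and the argument above handles it by the vanishing of $2^{e}$ for $e\geq r$ in $R$.
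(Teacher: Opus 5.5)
Your proof is correct and follows essentially the same route as the paper: you check that the seeds lie in $M_r$, prove $T_v$-invariance of $M_r$ with the same case analysis on generators (including the $r=1$ and $k=r+1$ vanishing cases), and obtain the reverse inclusion by building $2^{k-2}a_K$ from a triple seed through repeated $T_v$. Your induction on $|K|$ is just the paper's iteration written recursively; the only slip is cosmetic, since you announce four cases but list three.
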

\begin{proof}
The pair seeds occur in \eqref{eq:module}. The triple seeds occur when $r\geq2$; when $r=1$, all seeds are zero modulo two. Thus every seed belongs to $M_r$.

We check invariance on a generating row $2^{\eta(k)}a_K$, where $k=|K|$. If $v\in K$, the binary coordinates satisfy $a_v\odot a_K=a_K$, so $T_v$ gives twice that row. If $v\notin K$ and $k=2$, the result is $4a_{K\cup\{v\}}$, twice the triple seed. For $r=1$ this result is zero. If $v\notin K$ and $k\geq3$, then
\begin{equation}\label{eq:raising}
 T_v(2^{k-2}a_K)=2^{k-1}a_{K\cup\{v\}},
\end{equation}
which is the next prescribed row, unless $k=r+1$, in which case it vanishes modulo $2^r$. Hence $M_r$ is invariant and $\widehat M_r\subseteq M_r$.

Conversely, if $|K|=k\geq4$, choose three vertices of $K$ and start with their triple seed. Applying $T_v$ for the remaining $k-3$ distinct vertices gives coefficient $2\cdot2^{k-3}=2^{k-2}$. Pair and triple generators are seeds, so every generator in \eqref{eq:module} belongs to $\widehat M_r$. Hence $M_r\subseteq\widehat M_r$, proving equality.
\end{proof}

Iteration can be necessary even at level three. Take $r=3$, $t=4$, and index the $m=16$ coordinates by $b\in\{0,1\}^4$, with $a_i(b)=b_i$. Every pair or triple seed is annihilated, modulo eight, by the linear functional
\begin{equation}\label{eq:strictfunctional}
 \lambda(z)=\sum_{b\in\{0,1\}^4}(-1)^{b_1+b_2+b_3+b_4}z(b)\pmod8.
\end{equation}
For a seed indexed by $K$, choose $j\notin K$ and pair each $b$ with $b\oplus e_j$, where $e_j$ is the $j$th unit vector and $\oplus$ is binary addition. The two monomial values agree, while their signs in $\lambda$ are opposite. But
\begin{equation}\label{eq:strictexample}
 T_4(2a_{\{1,2,3\}})=4a_{\{1,2,3,4\}},\qquad
 \lambda(4a_{\{1,2,3,4\}})=4\ne0\pmod8.
\end{equation}
Thus the first closure step strictly enlarges the row module generated by the seeds. No higher rows remain at this level, so one step suffices, attaining $\tau(3,4)=1$.

\subsection{Modular row reduction}

A closure round applied to $m$ generating rows can produce $m(t+1)$ rows. Modular elimination reduces them to at most $m$ generators before the next round, while keeping every entry a residue modulo $2^r$.

\begin{lemma}\label{lem:modular}
Let $C$ be a $u\times m$ matrix over $R=\Z/2^r\Z$, with $m,r\geq1$. Its row module has a generating set of at most $m$ rows, computable in $O((u+m)m^2r^2)$ bit operations. All stored entries have $O(r)$ bits. The computation also returns an invertible matrix $\mathsf V$ and a diagonal rectangular matrix $\mathsf D$ such that
\begin{equation}\label{eq:modulardiagonal}
 \mathsf D=\mathsf U C\mathsf V
\end{equation}
for an invertible row transformation $\mathsf U$. The nonzero diagonal entries of $\mathsf D$ are powers of two.
\end{lemma}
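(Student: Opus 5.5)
We would prove Lemma~\ref{lem:modular} by Smith/Howell-style elimination over the local ring $R=\Z/2^r\Z$, exploiting that every nonzero element factors as $2^e\cdot(\text{unit})$ with $0\le e<r$. Since $R$ is a local principal ideal ring, divisibility is totally ordered by the $2$-adic valuation. Consequently a pivot of minimal valuation divides every other entry of its column and row, so elimination needs no extended-gcd steps.

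\textbf{Pivot step.} Stack $C$ as the working matrix, initialise $\mathsf U=I_u$ and $\mathsf V=I_m$, and proceed column by column. At stage $i$, scan the remaining submatrix (rows and columns $\ge i$) for an entry of minimal valuation $e$. Swap it into position $(i,i)$ by a row swap and a column swap, recording the column swap in $\mathsf V$. Then multiply its row by the inverse of its unit part, so the pivot becomes exactly $2^e$; units modulo $2^r$ are odd residues and are inverted by Newton iteration or extended Euclid in $O(r^2)$.

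\textbf{Clearing step.} Every other entry in column $i$ has the form $2^{e}c$, so subtracting $c$ times the pivot row clears the column. Every entry in row $i$ is likewise a multiple of $2^e$, so column operations, recorded in $\mathsf V$, clear the row. When only the row module is needed, we perform only the row operations and keep an echelon form; the column operations are used solely to produce $\mathsf D$ and $\mathsf V$. Stop when the remaining submatrix is zero. Because there are at most $m$ pivots, at most $\min(u,m)$ nonzero rows remain. Row operations are invertible over $R$, so these rows generate the same module as the rows of $C$, and $\mathsf D=\mathsf U C\mathsf V$ has nonzero diagonal entries $2^{e_i}$.

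\textbf{Cost.} Each pivot stage touches $O(u+m)$ rows, each of length $m$, with one ring multiplication per entry. That gives $O((u+m)m)$ ring operations per stage and $O((u+m)m^2)$ over $m$ stages. Schoolbook arithmetic modulo $2^r$ costs $O(r^2)$ per operation, which yields the stated bound $O((u+m)m^2r^2)$. Every stored entry is a residue, so it has $O(r)$ bits. Tracking $\mathsf V$ costs $O(m^3r^2)$, which the bound absorbs.

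\textbf{Main obstacle.} The delicate point is the accounting for $\mathsf U$, which is $u\times u$. Maintaining it explicitly would cost $O(u^2m r^2)$, which exceeds the bound when $u\gg m$. The lemma only asserts that $\mathsf U$ exists, so we would not store it; it is implicitly the product of the recorded elementary row operations. The same point must also be handled when searching for minimal-valuation pivots. We would scan one column at a time among the $u$ rows, which takes $O(u)$ valuation checks per stage, each costing $O(r)$, and stays within budget.
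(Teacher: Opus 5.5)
Your proposal is correct and is essentially the paper's argument. It uses a globally minimal-valuation pivot normalized to $2^e$, row operations to clear its column and column operations to clear its row, never stores $\mathsf U$, and spends $O(um+m^2)$ residue operations per stage at $O(r^2)$ bits each. The deviations are minor:
\begin{enumerate}
\item You read the generators off the row-echelon pivot rows, after undoing the column swaps. These coincide with the paper's $\sigma_j e_j^{\mathsf T}\mathsf V^{-1}$, because once a pivot column is cleared, the subsequent column operations alter only the pivot row.
\item Your ``obstacle'' about pivot search is not one. Scanning the whole remaining submatrix costs only $O(umr)$ bit operations per stage, which is within budget. Restricting the search to a single column, as you suggest at the end, would instead break the row-clearing step: for the matrix $(2,1)$ over $\Z/4\Z$, the pivot $2$ does not divide the entry $1$ in its row.
\end{enumerate}
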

\begin{proof}
In the remaining submatrix, select a nonzero entry of smallest $2$-adic valuation and move it to the next diagonal position. Write its standard representative as $2^e a$ with $a$ odd. Multiplying its row by the inverse of $a$ modulo $2^r$ makes the pivot $\sigma=2^e$. Every remaining entry is divisible by $\sigma$. Subtracting multiples of the pivot row clears its column, and subtracting multiples of the pivot column clears its row. These are invertible elementary operations over $R$. Repeat until the remaining submatrix is zero.

To clear an entry, take the exact quotient of its integer representative by $\sigma$. Only the odd factor $a$ is inverted in $R$. Reduce all entries modulo $q=2^r$ after each operation, and track $\mathsf V$ and $\mathsf V^{-1}$ under the column operations. The row transform $\mathsf U$ need not be stored. If the nonzero pivots are $\sigma_1,\ldots,\sigma_k$, the rows
\begin{equation}\label{eq:rowgenerators}
 \sigma_j e_j^{\mathsf T}\mathsf V^{-1},\qquad 1\leq j\leq k,
\end{equation}
generate the row module of $C$, by \eqref{eq:modulardiagonal}.

There are at most $m$ pivot stages, each using $O(um+m^2)$ modular operations, including the search for a pivot and the updates of both transformation matrices. Classical multiplication, division, and inversion of an odd residue cost $O(r^2)$ bit operations. This proves the bound. Empty inputs return the empty generating set and identity column transform.
\end{proof}

\begin{proof}[Proof of Theorem~\ref{thm:compression}]
For $m=0$, return the empty matrix. Otherwise let $M^{(0)}$ be the row module of the pair and triple seeds and iterate
\begin{equation}\label{eq:moduleiteration}
 M^{(j+1)}=M^{(j)}+\sum_{v=1}^t T_v(M^{(j)}).
\end{equation}
By Lemma~\ref{lem:closure}, all these modules lie in $M_r$. A prescribed row indexed by a set of size $k\geq4$ is present after $k-3$ rounds. The largest relevant size is $\min\{t,r+1\}$, so $M^{(\tau(r,t))}=M_r$. For $r\leq2$ or $t\leq3$, the seeds already suffice.

Apply Lemma~\ref{lem:modular} initially and after each round. Since every $T_v$ is $R$-linear, applying it to generators gives generators for its image on the whole module. We can therefore discard redundant rows before the next round without losing a later constraint. The bound $\tau(r,t)$ tells us when to stop, so no canonical basis or equality test is needed. The initial input has $O(t^3)$ rows and $m$ columns, and each later input has at most $m(t+1)$ rows. Lemma~\ref{lem:modular} gives \eqref{eq:compressioncost}; forming the rows costs no more. All retained coefficients remain residues modulo $q$.
\end{proof}

\begin{center}
\fbox{\begin{minipage}{0.92\linewidth}
\textbf{Algorithm 1: weighted incidence compression}
\begin{enumerate}
\item Form the pair and triple rows $2a_K$ modulo $q$.
\item Compress their row module by Lemma~\ref{lem:modular}, obtaining $B$.
\item Repeat $\tau(r,t)$ times: replace $B$ by row generators of the rows of $B$ and all $T_v(B)$.
\item Return $B$.
\end{enumerate}
For $m=0$, return the empty matrix immediately.
\end{minipage}}
\end{center}

\begin{corollary}\label{cor:annihilator}
For $s\in R^m$, the incidence conditions
\begin{equation}\label{eq:allconstraints}
 2^{\eta(|K|)}a_K\cdot s=0\pmod q
 \quad(2\leq |K|\leq r+1)
\end{equation}
hold if and only if $Bs=0$ in $R$.
\end{corollary}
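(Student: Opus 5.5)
The plan is to turn the statement into a statement about row modules. The rows of $B$ and the prescribed incidence rows will be shown to generate the same $R$-submodule of $R^m$, namely $M_r$. After that, annihilation by one set of rows will be equivalent to annihilation by the other, because $s\mapsto z\cdot s$ is $R$-linear in $z$.

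The first step identifies the row module of the returned matrix $B$ with $M_r$. By the proof of Theorem~\ref{thm:compression}, the module $M^{(j)}$ is generated by the rows of $B$ after round $j$. Lemma~\ref{lem:modular} guarantees that each compression preserves the row module exactly: the rows \eqref{eq:rowgenerators} generate the row module of the input matrix. Since each $T_v$ is $R$-linear, $T_v$ applied to generators of $M^{(j)}$ generates $T_v(M^{(j)})$. Therefore the uncompressed input to round $j+1$ generates exactly the module in \eqref{eq:moduleiteration}. Induction on $j$ then shows that after $\tau(r,t)$ rounds the row module of $B$ is $M^{(\tau(r,t))}$. The proof of Theorem~\ref{thm:compression} identifies this module with $M_r$, using Lemma~\ref{lem:closure} and the count that a set of size $k\ge4$ appears after $k-3$ rounds. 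The degenerate case $m=0$ is trivial, since both sides of the equivalence hold for the unique $s\in R^0$.

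The second step is the duality argument. Suppose every row $b$ of $B$ satisfies $b\cdot s=0$. Each prescribed row $2^{\eta(|K|)}a_K$ lies in $M_r$, so it is an $R$-combination $\sum_i c_i b_i$ of rows of $B$. Its pairing with $s$ is then $\sum_i c_i(b_i\cdot s)=0$, which gives \eqref{eq:allconstraints}. Conversely, suppose all prescribed rows annihilate $s$. Each row of $B$ lies in $M_r$, the span of the prescribed rows, so by the same linearity it also annihilates $s$. Hence $Bs=0$.

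I expect the only delicate point to be the exactness of the compression. The modular diagonalization must give generators of the same module, not merely of a submodule of it. This holds because $\mathsf U$ and $\mathsf V$ are invertible over $R$: the row module of $C$ equals that of $\mathsf U C$. That module in turn equals $\{y\,\mathsf D\,\mathsf V^{-1}: y\in R^u\}$, whose generators are exactly $\sigma_j e_j^{\mathsf T}\mathsf V^{-1}$. I would state this explicitly and then cite Lemma~\ref{lem:closure}.
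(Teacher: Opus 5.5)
Your proposal is correct and follows the paper's approach: the paper's proof is just that annihilating the generators of a row module is equivalent to annihilating the whole module, applied to the generating matrix $B$ for $M_r$ supplied by Theorem~\ref{thm:compression}. Your first step unpacks why $B$ generates exactly $M_r$. It uses exactness of the compression in Lemma~\ref{lem:modular}, $R$-linearity of the maps $T_v$, and Lemma~\ref{lem:closure}, which is the content of that theorem's proof rather than a different route.
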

\begin{proof}
Annihilating generators of a row module is equivalent to annihilating the entire module. Apply Theorem~\ref{thm:compression}.
\end{proof}

\subsection{Kernel and image}\label{sec:image}

The matrix $B$ from Theorem~\ref{thm:compression} now represents all incidence constraints. By Corollary~\ref{cor:annihilator}, its kernel is exactly the set of valid multiplicities:
\begin{equation}\label{eq:sigma}
 \Sigma=\ker_R B.
\end{equation}

Applying $f$ to these multiplicities gives the graph changes in $\Omega$. We keep a multiplicity vector realizing each image generator so that the final graph test can return a local unitary as well as a decision.

\begin{theorem}[Complete image with preimages]\label{thm:image}
In polynomial bit complexity in $n,r$, one can compute a basis $\omega_1,\ldots,\omega_d$ of $\Omega$, with $d\leq|X|$, and vectors $s_i\in\Sigma$ satisfying $f(s_i)=\omega_i$. Membership in $\Omega$ is decidable in polynomial time, with a witnessing multiplicity vector for every member.
\end{theorem}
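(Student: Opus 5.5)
We compute a generating set of $\Sigma=\ker_R B$, push it through $f$, and extract a binary basis of the image while remembering preimages. By Theorem~\ref{thm:compression} and Corollary~\ref{cor:annihilator}, $B$ has at most $m=|X|$ rows with $O(r)$-bit entries. Lemma~\ref{lem:modular} applied to $B$ gives $\mathsf D=\mathsf U B\mathsf V$ with $\mathsf D$ diagonal, nonzero pivots $\sigma_j=2^{e_j}$, and $\mathsf U,\mathsf V$ invertible. Substitute $s=\mathsf V y$. Then $Bs=0$ iff $\mathsf D y=0$, which holds iff $2^{e_j}y_j=0$ for each pivot index $j$, while the coordinates $y_j$ beyond the pivots are unconstrained. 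The kernel is therefore generated over $R$ by the $m$ vectors
\[
 g_j=2^{r-e_j}\mathsf V e_j\ (j\le k),\qquad g_j=\mathsf V e_j\ (j>k),
\]
and every $s\in\Sigma$ is an $R$-combination $\sum_j c_j g_j$. This costs $O(m^3r^2)$ bit operations on top of the compression.

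By Lemma~\ref{lem:homomorphism}, $f(\sum_j c_j g_j)=\sum_j (c_j\bmod 2)f(g_j)$. Hence $\Omega$ is the $\F_2$-span of the $m$ vectors $f(g_j)\in\F_2^{\binom J2}$. To evaluate $f(g_j)$, take the standard integer lift $\widetilde g_j\in\{0,\ldots,q-1\}^X$ and compute, for each pair $p=\{u,v\}\subseteq J$, the integer $a_p\cdot\widetilde g_j=\sum_{x} (a_u)_x(a_v)_x(\widetilde g_j)_x$. We divide $2$ times this by $q$ and reduce modulo two. Exact divisibility is guaranteed because $g_j\in\Sigma$ satisfies the pair constraint. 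Doing this for all $j$ and all $O(t^2)$ pairs costs $O(t^2m^2)$ integer operations on $O(r+\log m)$-bit numbers.

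We then run Gaussian elimination over $\F_2$ on the $m\times\binom J2$ matrix with rows $f(g_j)$, tracking row operations. A row operation adding row $j'$ to row $j$ corresponds to replacing the preimage $g_j$ by $g_j+g_{j'}\in\Sigma$, and the image updates consistently by Lemma~\ref{lem:homomorphism}. The surviving nonzero rows form a basis $\omega_1,\ldots,\omega_d$ with $d\le m=|X|$, together with preimages $s_i\in\Sigma$ satisfying $f(s_i)=\omega_i$.

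For membership, given $\omega\in\F_2^{\binom J2}$, we reduce it against the echelon basis. It lies in $\Omega$ iff it reduces to zero. In that case it equals $\sum_i\epsilon_i\omega_i$, and $\sum_i\epsilon_i s_i\bmod q\in\Sigma$ witnesses it, again by Lemma~\ref{lem:homomorphism}.

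The step needing the most care is that the kernel generators really span $\Sigma$ over $R$, not just a subgroup. The reason is that a solution of $2^{e}y=0$ modulo $2^r$ is exactly a multiple of $2^{r-e}$; when $e=0$ this forces $y_j=0$, and $g_j=2^r\mathsf Ve_j=0$ is harmless. The other point to check is that using only residues of the $c_j$ modulo two loses nothing, which is exactly the scalar rule in Lemma~\ref{lem:homomorphism}. All quantities stay polynomial in $n$ and $r$, since $m,t\le n$, entries have $O(r)$ bits, and the binary elimination is $O(m^2t^2)$.
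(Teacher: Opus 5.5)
Your proposal is correct and follows the paper's proof essentially step for step. You inline the diagonal-form kernel computation of Lemma~\ref{lem:kernel} (generators $2^{r-e_j}\mathsf V e_j$ and $\mathsf V e_j$), evaluate $f$ on the generators by the exact division that the pair constraint guarantees, use Lemma~\ref{lem:homomorphism} to show their images span $\Omega$, and run binary elimination while carrying preimages modulo $q$, just as the paper does (only the trivial case $m=|X|=0$, where $\Sigma$ and $\Omega$ are zero, is left implicit).
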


Claudet and Perdrix compute the modular kernel in \cite[Lemma~19]{CP} using Storjohann's Howell transform \cite[Chapter~4]{Storjohann}. The elementary reduction of Lemma~\ref{lem:modular} supplies the same information here.

\begin{lemma}\label{lem:kernel}
For a $b\times m$ matrix $B$ over $R$, with $b\leq m$, generators of $\ker_R B$ can be computed in $O(m^3r^2)$ bit operations. At most $m$ generators are required.
\end{lemma}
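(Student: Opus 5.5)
We apply Lemma~\ref{lem:modular} to $B$, which gives an invertible $m\times m$ column transform $\mathsf V$ and the diagonal form $\mathsf D=\mathsf U B\mathsf V$, with nonzero pivots $\sigma_1,\ldots,\sigma_k$ that are powers of two, $\sigma_j=2^{e_j}$ and $k\leq\min\{b,m\}$. Since $\mathsf U$ is invertible, $Bs=0$ holds if and only if $\mathsf D\,\mathsf V^{-1}s=0$. We therefore substitute $y=\mathsf V^{-1}s$, so $s=\mathsf V y$, and the kernel condition decouples into the scalar equations $\sigma_j y_j=0$ for $j\leq k$. The coordinates $y_j$ with $j>k$ remain free.

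Over $R=\Z/2^r\Z$, the solutions of $2^{e}y=0$ form the ideal generated by $2^{r-e}$. The kernel of $\mathsf D$ in the $y$-coordinates is therefore generated by the vectors $2^{r-e_j}e_j$ for $j\leq k$ together with $e_j$ for $k<j\leq m$. These are $m$ generators in total. Mapping them back gives the generators $2^{r-e_j}\mathsf V e_j$ and $\mathsf V e_j$ of $\ker_R B$, that is, suitably scaled columns of $\mathsf V$. Because $s\mapsto\mathsf V^{-1}s$ is an $R$-module automorphism of $R^m$, it carries generators of one kernel onto generators of the other. This proves the claim that at most $m$ generators suffice, and every entry is stored as a residue with $O(r)$ bits.

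For the cost, Lemma~\ref{lem:modular} with $u=b\leq m$ runs in $O((b+m)m^2r^2)=O(m^3r^2)$ bit operations and already maintains $\mathsf V$. The remaining work is to read off the pivot valuations $e_j$ and to scale at most $m$ columns of $\mathsf V$ by powers of two modulo $q$. That takes $O(m^2)$ residue operations, each costing at most $O(r^2)$, which is dominated by the elimination. The degenerate cases need no separate argument: for $b=0$ or $B=0$ we have $k=0$, $\mathsf V$ is the identity, and the generators are the unit vectors.

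The one step that needs care is the equivalence $Bs=0\iff\mathsf D\mathsf V^{-1}s=0$. It holds because $\mathsf U$ is invertible over $R$, even though $\mathsf U$ is never stored, so we only use that it exists. A second point to check is that the pivots are exact powers of two rather than merely associates of them. This is guaranteed by the normalization in Lemma~\ref{lem:modular}, and in any case an odd unit factor would not change the annihilator ideal.
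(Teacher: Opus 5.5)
Your proposal is correct and follows the paper's proof step for step. You diagonalize $B$ with Lemma~\ref{lem:modular}, substitute $s=\mathsf V y$, read off the generators $(q/\sigma_j)\mathsf V e_j$ for $j\le k$ and $\mathsf V e_j$ for $j>k$, and inherit the $O((b+m)m^2r^2)=O(m^3r^2)$ cost from the elimination. The only addition in the paper is the trivial case $m=0$, where the kernel is just the empty vector. Your degenerate-case remark does not cover it, and Lemma~\ref{lem:modular} is stated only for $m\geq1$.
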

\begin{proof}
Assume $m>0$ and use Lemma~\ref{lem:modular} to obtain $\mathsf D=\mathsf U B\mathsf V$ with nonzero pivots $\sigma_1,\ldots,\sigma_k$. Write $s=\mathsf V y$. The equation $Bs=0$ is equivalent to $\sigma_j y_j=0$ modulo $q$ for $j\leq k$, with no condition on the other coordinates. Hence the kernel is generated by
\begin{equation}\label{eq:kernelgenerators}
 \frac{q}{\sigma_j}\mathsf V e_j\ (1\leq j\leq k),
 \qquad \mathsf V e_j\ (k<j\leq m),
\end{equation}
reduced modulo $q$, with zero vectors omitted. The divisions $q/\sigma_j$ are exact integer divisions. Applying Lemma~\ref{lem:modular} to at most $m$ rows proves the cost bound. For $m=0$, the kernel consists of the empty vector.

For example, $B=(2)$ over $\Z/4\Z$ has kernel $\{0,2\}$. Formula~\eqref{eq:kernelgenerators} handles this nonfree module without choosing a vector-space basis.
\end{proof}

Although $\Sigma$ need not be a free $R$-module, its image is a binary vector space. Lemma~\ref{lem:homomorphism} shows that the images of any full kernel generating set span every possible edge change. Keeping their preimages records how to realize the resulting binary basis.

\begin{proof}[Proof of Theorem~\ref{thm:image}]
Apply Lemma~\ref{lem:kernel} to obtain generators $g_1,\ldots,g_\nu$ of $\Sigma$, with $\nu\leq m$. For each $g_j$ and pair $p$, compute the residue $c_p=a_p\cdot g_j\bmod q$ in $\{0,\ldots,q-1\}$. It belongs to $\{0,q/2\}$ by the pair constraint. Thus $f(g_j)_p=2c_p/q$ is an exact integer division; it does not divide by a nonunit in $R$.

Every $s\in\Sigma$ is an $R$-linear combination $\sum_j c_jg_j$. By Lemma~\ref{lem:homomorphism},
\begin{equation}\label{eq:imagegeneration}
 f(s)=\sum_j(c_j\bmod2)f(g_j).
\end{equation}
Conversely, every binary combination of the images is the image of the corresponding sum of the generators. Hence these at most $m$ images span all of $\Omega$.

Perform binary Gaussian elimination on them, carrying along their preimages over $R$. Whenever image rows are added, add the corresponding preimages modulo $q$. The invariant $f(s_i)=\omega_i$ survives by \eqref{eq:homomorphism}. Remove dependent zero image rows. If $\omega\in\Omega$, solve $\omega=\sum_i\epsilon_i\omega_i$ over $\F_2$ and return $s=\sum_i\epsilon_i s_i\bmod q$.

There are $\binom{|J|}{2}=O(n^2)$ image coordinates and at most $m$ generators. Image computation uses $O(n^2m^2)$ elementary arithmetic operations. Before reducing a dot product modulo $q$, its value is at most $m(q-1)$, requiring $O(r+\log(m+1))$ bits. Binary elimination uses $O(n^2m^2)$ field operations, and at most $O(m^3)$ additions of $r$-bit witness coordinates. The witness entries therefore have bounded precision throughout. Together with the compression and kernel bounds, these estimates give polynomial bit complexity. A basis of the binary orthogonal complement of $\Omega$ gives an equation description if needed.
\end{proof}

In particular, one valid $r$-local complementation supported in $X$ takes $G$ to a prescribed graph $H$ exactly when the graphs agree on pairs meeting $X$ and their edge difference on $J$ belongs to $\Omega$. Theorem~\ref{thm:image} decides this condition and returns realizing multiplicities.

\section{The local-unitary equivalence algorithm}\label{sec:integration}

\subsection{Restricted local complementation}\label{sec:direct}

Once the admissible edge changes are known, reaching the target still requires choosing a change and a sequence of ordinary complementations on $W$. Both choices can be expressed in the Clifford equations. Their apparent nonlinearities disappear when the permitted edge changes and complementation centers are taken into account.

In this subsection, $n=|V|$; when applied to a component, this is its order $n_c$. For adjacency matrices $\Gamma,\Gamma'$ on the labelled set $V$, introduce binary vectors $\alpha,\beta,\gamma,\delta\in\F_2^V$ and write $\mathsf D_\xi=\diag(\xi)$. At vertex $u$, the matrix $\bigl(\begin{smallmatrix}\alpha_u&\beta_u\\\gamma_u&\delta_u\end{smallmatrix}\bigr)$ describes the Clifford action on binary Pauli labels; determinant one expresses its invertibility. The graphs are LC-equivalent exactly when \cite[Proposition~7]{CP}
\begin{align}
 \Gamma\mathsf D_\beta\Gamma'+\Gamma\mathsf D_\alpha
       +\mathsf D_\delta\Gamma'+\mathsf D_\gamma&=0,\label{eq:cliffordlinear}\\
 \alpha_u\delta_u+\beta_u\gamma_u&=1\quad(u\in V).\label{eq:clifforddet}
\end{align}
Equation~\eqref{eq:cliffordlinear} is linear in the Clifford coordinates, while \eqref{eq:clifforddet} imposes a quadratic determinant condition at each vertex. These conditions are linked: every solution of the linear equations has equal determinants at adjacent target vertices. A constant-determinant theorem for connected nonbinary self-equivalence spaces appears in Bahramgiri and Beigi \cite[Theorem~4.2]{BB}. The binary argument below applies to a pair of graphs without assuming their equivalence.

\begin{lemma}\label{lem:determinant}
For every solution of \eqref{eq:cliffordlinear}, the vector
$h_u=\alpha_u\delta_u+\beta_u\gamma_u$ is constant on each connected component of the graph with adjacency $\Gamma'$.
\end{lemma}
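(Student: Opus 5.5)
The plan is to derive the lemma from a symmetry identity that comes from \eqref{eq:cliffordlinear} alone. The determinant condition \eqref{eq:clifforddet} is never used, which is what the lemma requires. Throughout, write $\mathsf A=\mathsf D_\alpha$, $\mathsf B=\mathsf D_\beta$, $\mathsf C=\mathsf D_\gamma$, $\mathsf D=\mathsf D_\delta$. Diagonal matrices commute with each other and are symmetric, and $\Gamma,\Gamma'$ are symmetric with zero diagonal over $\F_2$.

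First, I will factor \eqref{eq:cliffordlinear} as $\Gamma P=Q$, where
\[
P=\mathsf B\Gamma'+\mathsf A,\qquad Q=\mathsf D\Gamma'+\mathsf C.
\]
Then $P^{\mathsf T}\Gamma P=P^{\mathsf T}Q$. The left side is a congruence transform of a symmetric zero-diagonal matrix, so over $\F_2$ it is again symmetric, and it has zero diagonal because $x^{\mathsf T}\Gamma x=\sum_{u<v}2\Gamma_{uv}x_ux_v=0$. Hence $P^{\mathsf T}Q$ is symmetric, and in fact alternating.

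Next, I will expand the product:
\[
P^{\mathsf T}Q=(\Gamma'\mathsf B+\mathsf A)(\mathsf D\Gamma'+\mathsf C)
=\Gamma'\mathsf B\mathsf D\Gamma'+\Gamma'\mathsf B\mathsf C+\mathsf A\mathsf D\Gamma'+\mathsf A\mathsf C .
\]
The terms $\Gamma'\mathsf B\mathsf D\Gamma'$ and $\mathsf A\mathsf C$ are symmetric. Since $P^{\mathsf T}Q$ is symmetric, the remaining sum $\Gamma'\mathsf B\mathsf C+\mathsf A\mathsf D\Gamma'$ must also be symmetric. Reading off its $(u,v)$ and $(v,u)$ entries and using $\Gamma'_{uv}=\Gamma'_{vu}$ gives
\[
\Gamma'_{uv}\,(\beta_v\gamma_v+\alpha_u\delta_u)=\Gamma'_{uv}\,(\beta_u\gamma_u+\alpha_v\delta_v).
\]
For an edge $uv$ of $\Gamma'$ this rearranges to $h_u=h_v$. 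Propagating along paths then shows that $h$ is constant on each connected component of $\Gamma'$.

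The only delicate point is the observation that $P^{\mathsf T}\Gamma P$ is symmetric, which holds because $\Gamma$ is symmetric over $\F_2$. Once $\Gamma P=Q$ is recognised, the rest is bookkeeping with commuting diagonal matrices. The alternating (zero-diagonal) property of $P^{\mathsf T}Q$ gives further constraints, but it is not needed for this lemma.
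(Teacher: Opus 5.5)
Your proof is correct and is essentially the paper's argument. The paper's $\mathsf T=\mathsf D_\alpha+\mathsf D_\beta\Gamma'$ is your $P$, and its vanishing quantity $(\mathsf M Q)^{\mathsf T}\mathcal J(\mathsf M Q)$ (with the paper's $Q=\binom{I}{\Gamma'}$) equals $P^{\mathsf T}Q+Q^{\mathsf T}P$ in your notation, so its vanishing is exactly your symmetry of $P^{\mathsf T}Q$. Evaluating it through $\mathsf M^{\mathsf T}\mathcal J\mathsf M=\bigl(\begin{smallmatrix}0&\mathsf D_h\\\mathsf D_h&0\end{smallmatrix}\bigr)$ reproduces your expansion to $\mathsf D_h\Gamma'+\Gamma'\mathsf D_h=0$. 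The paper just packages the computation as pulling back the symplectic form, which makes explicit that $h$ is the factor by which the Clifford coordinates rescale that form.
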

\begin{proof}
Put
\[
 \mathsf M=\begin{pmatrix}\mathsf D_\alpha&\mathsf D_\beta\\
                        \mathsf D_\gamma&\mathsf D_\delta\end{pmatrix},
 \qquad Q=\binom I{\Gamma'},\qquad
 \mathcal J=\begin{pmatrix}0&I\\I&0\end{pmatrix}.
\]
Equation~\eqref{eq:cliffordlinear} gives
$\mathsf M Q=\binom{\mathsf T}{\Gamma\mathsf T}$, where
$\mathsf T=\mathsf D_\alpha+\mathsf D_\beta\Gamma'$.
Since $\Gamma$ is symmetric,
$(\mathsf M Q)^{\mathsf T}\mathcal J(\mathsf M Q)=0$.
On the other hand,
\[
 \mathsf M^{\mathsf T}\mathcal J\mathsf M
       =\begin{pmatrix}0&\mathsf D_h\\\mathsf D_h&0\end{pmatrix}.
\]
Consequently $\mathsf D_h\Gamma'+\Gamma'\mathsf D_h=0$. Its $uv$ entry is
$(h_u+h_v)\Gamma'_{uv}$, so adjacent vertices have equal determinants.
\end{proof}

For a connected target, one may therefore solve \eqref{eq:cliffordlinear} with any additional affine linear constraints by fixing the four coordinates at one chosen vertex to each of the six invertible binary $2\times2$ matrices. Every solution in any resulting affine space satisfies \eqref{eq:clifforddet} everywhere. The six linear systems take $O((n^2+\ell)n^2)$ binary operations for $\ell$ extra equations. This argument imposes no even-degree hypothesis, unlike the constrained-LC test in \cite[Proposition~10]{CP}. It gives the feasible set as a union of at most six affine spaces, including for the Class $\alpha$ graphs discussed there \cite[Remark~9]{CP}. It applies to one connected component at a time.

\begin{lemma}\label{lem:restrictedsequence}
For arbitrary graphs $G,H$ on $V$ and $W\subseteq V$, a solution of \eqref{eq:cliffordlinear}--\eqref{eq:clifforddet} with $\supp(\beta)\subseteq W$ exists if and only if $H$ can be obtained from $G$ by complementations centered on $W$. A given solution yields such a sequence of length at most $\lfloor3|W|/2\rfloor$ in $O(n^3)$ binary operations.
\end{lemma}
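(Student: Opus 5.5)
The easy direction comes first. A local complementation at $w$ corresponds to the Clifford $\sqrt{-i\mathsf X_w}\otimes\bigotimes_{u\in N(w)}\sqrt{i\mathsf Z_u}$. In the binary symplectic picture its local matrix is lower-triangular-type ($\beta_u=0$) at every vertex except $w$, where it has a nonzero off-diagonal entry. I would compose the local matrices along a sequence $G=G_0,G_1=G_0\star w_1,\ldots$ and check that the resulting solution of \eqref{eq:cliffordlinear}--\eqref{eq:clifforddet} has $\beta_u\neq0$ only at centers. The cleaner route is to write each step's matrix as $\bigl(\begin{smallmatrix}1&0\\ \ast&1\end{smallmatrix}\bigr)$ off the center and $\bigl(\begin{smallmatrix}1&1\\0&1\end{smallmatrix}\bigr)$-type at the center (in the paper's convention, where $\beta$ multiplies $\Gamma'$). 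Products of matrices with zero $\beta$-entry keep $\beta=0$, so vertices outside $W$ are never touched in that coordinate. I would verify the convention directly from \eqref{eq:cliffordlinear} for a single complementation, $\Gamma'=\Gamma+\Gamma_{N(w)}$ with the diagonal removed, and then note that composition of LC transformations multiplies local $2\times2$ matrices vertexwise.

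For the converse and the construction, I would follow the Bouchet/Van den Nest style reduction, using induction on $|\supp(\beta)|$. Let a solution $(\alpha,\beta,\gamma,\delta)$ with $\supp\beta\subseteq W$ be given.

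If $\beta=0$, then $\mathsf T=\mathsf D_\alpha$ and determinant one forces $\alpha_u=\delta_u=1$. Reading the diagonal of \eqref{eq:cliffordlinear} then gives $\Gamma+\Gamma'+\mathsf D_\gamma=0$ off-diagonal, and the zero diagonal forces $\gamma=0$ and $\Gamma'=\Gamma$.

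Otherwise pick $w\in\supp\beta$ and handle two cases. If $\alpha_w=1$ or $\delta_w=1$ in a suitable form, a complementation at $w$ applied to $G$ (or to $H$) multiplies the local matrix at $w$ by a shear, and I would choose the shear so that it kills $\beta_w$. This needs a check that the shear acts only on $\beta_w$ among the $\beta$-coordinates, since neighbours receive only $\gamma$-type shears. The new pair of graphs then has a solution with $\supp\beta$ reduced by one, still inside $W$. If instead the local matrix at $w$ is the Hadamard-like $\bigl(\begin{smallmatrix}0&1\\1&0\end{smallmatrix}\bigr)$, one complementation cannot remove $\beta_w$. Here I would use a pivot $G\star w\star v\star w$ along an edge $wv$ with $v\in\supp\beta$; such a neighbour must exist by inspecting row $w$ of \eqref{eq:cliffordlinear}. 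The pivot clears two support vertices using three complementations.

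Averaging gives at most $3/2$ complementations per cleared vertex, hence $\lfloor 3|W|/2\rfloor$ in total. Each step updates the graph and the local matrices in $O(n^2)$ binary operations, and there are $O(n)$ steps, for $O(n^3)$ overall.

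The main obstacle is the Hadamard case. I must show that when $\beta_w=1$ and a single complementation cannot clear it, some neighbour $v$ of $w$ in the current graph also lies in $\supp\beta$. I must also show that the pivot at $wv$ clears both $\beta_w$ and $\beta_v$ without creating $\beta$-entries elsewhere. My first attempt would be to use the $ww$ diagonal entry of \eqref{eq:cliffordlinear}, $(\Gamma\mathsf D_\beta\Gamma')_{ww}+\gamma_w=0$ in that case, to produce the required $v\in N_\Gamma(w)\cap\supp\beta$. I would then compute the pivot's local action, a Hadamard at $w$ and at $v$ with shears on common neighbours, explicitly on the $2\times2$ matrices.
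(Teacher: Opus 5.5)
Your plan follows the paper's proof. You induct on $|\supp(\beta)|$. You use a single complementation whenever it clears a support vertex, and otherwise a pivot along an edge $wv$ obtained from the diagonal entry $\sum_v\Gamma_{wv}\beta_v\Gamma'_{vw}=\gamma_w=1$ of \eqref{eq:cliffordlinear}. For the converse you use the fact that a complementation alters $\beta$ only at its center. The paper proves that fact by running the coordinate updates backwards from the identity solution on $H$; your product of local matrices is the same fact. Your extra target-side steps are valid but not needed: they clear $\beta_w$ when $\alpha_w=1$, using the transpose of \eqref{eq:cliffordlinear} with $\Gamma\leftrightarrow\Gamma'$ and $\alpha\leftrightarrow\delta$. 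The paper complements only the source, where a step at $w$ sends $\beta_w\mapsto\beta_w+\delta_w$.

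The pivot step, however, is asserted under too weak a hypothesis. Track the source updates through $w,v,w$, starting from $(\beta_w,\delta_w)=(1,0)$ and $(\beta_v,\delta_v)=(1,\delta_v)$. The final values are $\beta_w=0$ but $\beta_v=\delta_v$. The diagonal equation only produces some neighbour $v$ with $\beta_v=1$, not one with $\delta_v=0$. If you choose a Hadamard-like $w$ whose neighbour has $\delta_v=1$, three steps clear only one vertex, and the $\lfloor3|W|/2\rfloor$ bound is lost.

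The fix is the paper's global case split: pivot only when no vertex of the support can be cleared by a single step. Then $\delta=0$ and $\gamma=1$ throughout the support, or $\alpha=\delta=0$ in your two-sided version, and the pivot clears both vertices. Equivalently, if the neighbour has $\delta_v=1$, clear $v$ alone first. With that ordering the rest of your argument goes through.
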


The coordinate updates proving this lemma are given in Appendix~\ref{app:clifford}; they require no connectivity assumption.

For a symmetric zero-diagonal binary matrix $\Delta$ on $V$, write $G\oplus\Delta$ for the graph with adjacency $\Gamma(G)+\Delta$, where addition is over $\F_2$. Now allow edge changes within $F=V\setminus W$. Restricting complementation centers to $W$ fixes the relevant Clifford coordinates on $F$. This makes the unknown edge change enter the graph equations linearly. If $F$ is nonempty and the target is connected, it also fixes determinant one at a vertex and hence everywhere.

\begin{theorem}[Affine graph reduction]\label{thm:direct}
Let $G,H$ be graphs on $V$, with $H$ connected, and fix a partition $V=F\sqcup W$ with $F\ne\varnothing$. Let $E_1,\ldots,E_d$ be symmetric zero-diagonal binary matrices supported on $F\times F$. There is an $\epsilon\in\F_2^d$ for which $G\oplus\Delta(\epsilon)$ reaches $H$ by complementations on $W$ if and only if the affine linear system
\begin{align}
 \Gamma\mathsf D_\beta\Gamma'+\Gamma\mathsf D_\alpha
       +\mathsf D_\delta\Gamma'+\mathsf D_\gamma
       +\sum_i\epsilon_i E_i&=0,\label{eq:directlinear}\\
 \beta|_F=0,\qquad \alpha|_F=\delta|_F&=1\label{eq:frozen}
\end{align}
is consistent, where $\Gamma=\Gamma(G)$ and $\Gamma'=\Gamma(H)$. After substituting \eqref{eq:frozen}, the system has $p=3|W|+n+d$ variables and $n^2$ equations. It can be solved, with a realizing complementation sequence, in $O(n^2p^2+n^3)$ binary operations.
\end{theorem}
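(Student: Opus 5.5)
The plan is to reduce the statement to Lemma~\ref{lem:restrictedsequence} applied to the pair $(G\oplus\Delta(\epsilon),H)$, using Lemma~\ref{lem:determinant} to discard the quadratic determinant conditions.

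\textbf{Forward direction.} Suppose some $\epsilon$ makes $G\oplus\Delta(\epsilon)$ reach $H$ by complementations on $W$. Lemma~\ref{lem:restrictedsequence} then supplies a solution $(\alpha,\beta,\gamma,\delta)$ of \eqref{eq:cliffordlinear}--\eqref{eq:clifforddet} for $\Gamma+\Delta(\epsilon)$ and $\Gamma'$, with $\supp(\beta)\subseteq W$.

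\begin{itemize}
\item Expanding $(\Gamma+\Delta)\mathsf D_\beta\Gamma'+(\Gamma+\Delta)\mathsf D_\alpha$ produces the extra terms $\Delta\mathsf D_\beta\Gamma'+\Delta\mathsf D_\alpha$.
\item Since $\Delta$ is supported on $F\times F$ and $\beta|_F=0$, we have $\Delta\mathsf D_\beta=0$.
\item If $\alpha|_F=1$, then $\Delta\mathsf D_\alpha=\Delta$, and \eqref{eq:directlinear} follows.
\end{itemize}

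So we must show that any solution with $\supp\beta\subseteq W$ has $\alpha_u=\delta_u=1$ for $u\in F$. The determinant condition with $\beta_u=0$ gives $\alpha_u\delta_u=1$, which forces exactly this. An alternative route tracks the coordinates through the explicit sequence: complementations at $w\in W$ only multiply the local Clifford at neighbours by $\sqrt{Z}$-type factors, which keep the upper-right entry $0$.

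\textbf{Converse.} Take a solution of \eqref{eq:directlinear}--\eqref{eq:frozen} and set $\Gamma_\epsilon=\Gamma+\Delta(\epsilon)$. Reversing the computation above, with $\Delta\mathsf D_\beta=0$ and $\Delta\mathsf D_\alpha=\Delta$, shows that $(\alpha,\beta,\gamma,\delta)$ solves \eqref{eq:cliffordlinear} for $(\Gamma_\epsilon,\Gamma')$. Now apply Lemma~\ref{lem:determinant}:

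\begin{itemize}
\item $h$ is constant on components of $H$, and $H$ is connected.
\item At any $u\in F$ (nonempty) we have $h_u=\alpha_u\delta_u+0=1$.
\item Hence $h\equiv1$, so \eqref{eq:clifforddet} holds everywhere.
\end{itemize}

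Since $\supp\beta\subseteq W$, Lemma~\ref{lem:restrictedsequence} yields the complementation sequence from $G\oplus\Delta(\epsilon)$ to $H$ in $O(n^3)$ operations.

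\textbf{Complexity.} After substitution, the remaining unknowns are $\alpha,\beta,\delta$ on $W$ (that is, $3|W|$ variables), $\gamma$ on all of $V$ ($n$ variables), and $\epsilon$ ($d$ variables), giving $p=3|W|+n+d$. The system has $n^2$ matrix-entry equations, and each equation is affine in these variables because $\Gamma,\Gamma'$ are fixed. Assembling the coefficient matrix costs $O(n^2p)$. Gaussian elimination on an $n^2\times p$ binary system costs $O(n^2p^2)$, and the sequence extraction adds $O(n^3)$.

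\textbf{Main obstacle.} The subtle step is the forward direction. We need that every solution with $\supp\beta\subseteq W$ has $\alpha|_F=\delta|_F=1$ (true via the determinant), and not merely that some solution exists. We must also check that the terms $\Delta\mathsf D_\beta\Gamma'$ and the $\mathsf D_\delta\Gamma'$ contributions contain no hidden $\epsilon$-dependence; only the $\Gamma$-side is modified, so they do not.
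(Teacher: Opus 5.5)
Your proposal is correct and follows the paper's proof essentially step for step. You expand the Clifford equation for $G\oplus\Delta(\epsilon)$, use the frozen coordinates to get $\Delta\mathsf D_\beta=0$ and $\Delta\mathsf D_\alpha=\Delta$, and invoke Lemma~\ref{lem:determinant} with $F\neq\varnothing$ and connected $H$ to obtain determinant one everywhere; you then use Lemma~\ref{lem:restrictedsequence} in both directions, with \eqref{eq:clifforddet} forcing $\alpha|_F=\delta|_F=1$. The only deviation is your $O(n^2p)$ assembly bound in place of the paper's $O(n^3+n^2d)$, and both are absorbed into $O(n^2p^2+n^3)$.
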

\begin{proof}
The Clifford equation for $G\oplus\Delta(\epsilon)$ and $H$ is
\[
 (\Gamma+\Delta)\mathsf D_\beta\Gamma'
 +(\Gamma+\Delta)\mathsf D_\alpha
 +\mathsf D_\delta\Gamma'+\mathsf D_\gamma=0.
\]
Under \eqref{eq:frozen}, $\Delta\mathsf D_\beta=0$ and $\Delta\mathsf D_\alpha=\Delta$, so this is exactly \eqref{eq:directlinear}. Any solution has determinant one on $F$. Lemma~\ref{lem:determinant}, applied to its source graph $G\oplus\Delta$ and the fixed connected target $H$, gives determinant one everywhere. Lemma~\ref{lem:restrictedsequence} supplies the required sequence on $W$.

Conversely, a sequence on $W$ gives a Clifford solution with $\beta|_F=0$. Equation~\eqref{eq:clifforddet} then forces $\alpha|_F=\delta|_F=1$. Its linear equation expands to \eqref{eq:directlinear}. Forming that system costs $O(n^3+n^2d)$ operations; dense elimination costs $O(n^2p^2)$, and sequence extraction costs $O(n^3)$.
\end{proof}

\begin{corollary}\label{cor:feasible}
Under the hypotheses of Theorem~\ref{thm:direct}, the set of feasible coefficients $\epsilon$ is empty or affine. An origin and a basis can be computed within the same time bound. If the $E_i$ are linearly independent and this affine space has dimension $\rho$, exactly $2^\rho$ distinct edge changes admit a complementation sequence on $W$ leading to $H$.
\end{corollary}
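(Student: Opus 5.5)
The plan is to read the corollary off the linear-algebraic structure already exhibited in the proof of Theorem~\ref{thm:direct}. After substituting \eqref{eq:frozen}, system \eqref{eq:directlinear} is an inhomogeneous linear system over $\F_2$ in the variable vector $y=(\alpha|_W,\gamma,\delta|_W,\epsilon)$, whose other coordinates are $\beta|_W$ and $\gamma$ on all of $V$, with $p=3|W|+n+d$ entries. Its solution set $\mathcal S$ is therefore empty or an affine subspace of $\F_2^p$. By Theorem~\ref{thm:direct}, a coefficient vector $\epsilon$ is feasible exactly when it is the $\epsilon$-part of some $y\in\mathcal S$. So the feasible set is $\pi(\mathcal S)$, where $\pi$ is the coordinate projection onto $\F_2^d$. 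A linear image of an affine space is affine, or empty when $\mathcal S$ is empty. This proves the first claim.

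For the computation, I will run the same dense Gaussian elimination that Theorem~\ref{thm:direct} already uses. It returns a particular solution $y_0$ and a basis $y_1,\ldots,y_k$ of the homogeneous solution space, at cost $O(n^2p^2)$. The feasible set is then $\pi(y_0)+\spanR\{\pi(y_1),\ldots,\pi(y_k)\}$. To get a basis, I will row-reduce the $k\le p$ vectors $\pi(y_j)\in\F_2^d$, which costs $O(pd^2)\subseteq O(p^3)$. This is dominated by $O(n^2p^2)$ since $p\le 3n+d$ and $n^2\ge p$ is not needed: $pd^2\le p^3$, and $p\le n^2\cdot$const only if $d=O(n^2)$, which holds because the $E_i$ may be taken independent in the $\binom{|F|}2$-dimensional space. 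If they are not independent, I will first reduce them, or simply note that $d$ enters $p$ anyway. The resulting origin and basis give the description within the same bound. One cleaner route is to order the elimination with the $\epsilon$ variables last, so that the echelon form directly exhibits the projection. I expect this bookkeeping of the cost bound to be the only slightly delicate step.

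For the count, I will use linear independence of $E_1,\ldots,E_d$. It makes the map $\epsilon\mapsto\Delta(\epsilon)$ injective, so distinct feasible $\epsilon$ give distinct edge changes. Conversely, every edge change in the span that admits a sequence comes from a unique $\epsilon$, and that $\epsilon$ is feasible by Theorem~\ref{thm:direct}. Hence the number of edge changes admitting a complementation sequence on $W$ leading to $H$ equals $|\pi(\mathcal S)|=2^\rho$.
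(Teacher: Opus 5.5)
Your argument is essentially the paper's proof: the feasible set is the projection onto the $\epsilon$ coordinates of the affine solution set of \eqref{eq:directlinear}--\eqref{eq:frozen}; an origin and a basis come from projecting a particular solution and a homogeneous basis and then discarding dependent vectors; and independence of the $E_i$ makes $\epsilon\mapsto\Delta(\epsilon)$ injective. Your cost paragraph is tangled, since the naive $O(pd^2)$ estimate fits inside $O(n^2p^2)$ only if $d=O(n^2)$, which the hypotheses do not give. The route you mention at the end does stay within the bound of Theorem~\ref{thm:direct} for every $d$ and needs no extra elimination: put the $\epsilon$ columns last, and the free $\epsilon$ variables of the reduced echelon form directly give the projected basis, while the other free variables project to zero.
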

\begin{proof}
Project the affine solution space of \eqref{eq:directlinear}--\eqref{eq:frozen} onto its $\epsilon$ coordinates. Projecting an origin and a basis, then eliminating dependent vectors, gives the claimed description. Independence of the $E_i$ makes distinct coefficients represent distinct edge changes.
\end{proof}

For the standard-form pair of Section~\ref{sec:reduction}, take $F=X\sqcup Z$ and use the basis supplied by Theorem~\ref{thm:image}. The matrices $E_i$ are supported on $Z\times Z$. All rows and columns involving $X$ in \eqref{eq:directlinear} vanish after setting $\gamma|_X=0$: the $X$-neighborhoods agree, lie in $Z$, and meet no vertex of $W$. Thus only the equations on $J=Z\sqcup W$ need be retained. They have
\begin{equation}\label{eq:smallvariablecount}
 p=3|W|+|J|+d=4|W|+|Z|+d\leq4n_c
\end{equation}
variables, because $d\leq|X|$. The target graph on the original component, before deleting $X$, is the connected graph used in the determinant proof. The calculation costs $O(n_c^4)$ binary operations. In particular, it avoids the auxiliary-graph construction of \cite[Section~3.3]{CP}, which adds a vertex for each edge of each image generator and can require $O(n_c^3)$ auxiliary vertices. It also works when $d=0$. If $F=\varnothing$, then $X=Z=\varnothing$ and $\Omega=0$; use the six linear systems above for ordinary LC equivalence.

\subsection{Decision and local unitaries}

\begin{center}
\fbox{\begin{minipage}{0.92\linewidth}
\textbf{Algorithm 2: labelled graph-state LU equivalence}
\begin{enumerate}
\item Compare labelled connected-component partitions. Reject if they differ.
\item On each component of order $n_c\geq2$, choose $r$ by \eqref{eq:level} and perform steps 3--5.
\item Compute common standard forms and check their types and $X$-neighborhoods. Reject if these tests fail.
\item Compute $M_r$, $\Sigma$, and a basis of $\Omega$ with multiplicity preimages, using Algorithm~1 and Theorem~\ref{thm:image}.
\item Solve \eqref{eq:directlinear}--\eqref{eq:frozen} with $F=X\sqcup Z$. If $F$ is empty, use the six ordinary-LC systems instead. Reject an inconsistent instance; otherwise retain $\epsilon$ and extract the complementation sequence.
\item Accept when all components pass. Combine the component witnesses with the standardization paths to obtain the local unitaries.
\end{enumerate}
One-vertex components require no transformation.
\end{minipage}}
\end{center}

\begin{proof}[Proof of Theorem~\ref{thm:main}]
By Lemma~\ref{lem:components}, equivalence can be tested separately on matching labelled components. The standard-form procedure and the type and neighborhood comparisons reject only inequivalent inputs. On each surviving component, the chosen level makes LU equivalence coincide with $\LC_r$ equivalence (Theorem~\ref{thm:level}). The complete image from Theorem~\ref{thm:image} includes every generalized step allowed by Proposition~\ref{prop:reduction}, and Theorem~\ref{thm:direct} tests which of them can be followed by the required ordinary complementations. Thus the algorithm accepts exactly the equivalent inputs. This includes components with no nonzero edge changes; when $F$ is empty, the remaining test is ordinary LC equivalence.

For the bit complexity, let $m,t\leq n_c$ and $r=O(\log(n_c+1))$. Equation~\eqref{eq:compressioncost} and $\tau(r,t)\leq r$ give $O(n_c^5r^3)$ bit operations for compression. The modular kernel costs $O(n_c^3r^2)$. Computing the binary image and carrying its preimages costs at most $O(n_c^4(r+\log(n_c+1))^2)$ bit operations. The direct graph test costs $O(n_c^4)$ binary operations and witness extraction costs $O(n_c^3)$. Hence all algebra after standardization costs $\widetilde O(n_c^5)$ bit operations.

Writing $T_{\mathrm{std}}(n_c)$ for the cost of the inherited common-standard-form procedure, the component cost is $T_{\mathrm{std}}(n_c)+\widetilde O(n_c^5)$. Lemma~\ref{lem:standardcost} accounts for the standardization loops and gives
\[
 T_{\mathrm{std}}(n_c)=\widetilde O\!\left(n_c^4 T_{\mathrm{rank}}(n_c)+n_c^6\right),
\]
where $T_{\mathrm{rank}}(n_c)$ is the bit cost of binary matrix rank. An admissible rank exponent below $2.38$ yields $\widetilde O(n_c^{6.38})$, recovering the bound of \cite[Lemma~16]{CP}. Classical cubic rank computation instead gives $\widetilde O(n_c^7)$. The $\widetilde O(n_c^5)$ cost of the new algebra is independent of this choice.

The witness construction below adds no larger cost. Finally, summing over component orders $n_1,\ldots,n_k$ gives $\sum_i n_i^{6.38}\leq n^{6.38}$, with logarithmic factors bounded by those for $n$. The resulting algorithm has the claimed complexity.
\end{proof}

Retain the ordinary-LC paths used to put $G,H$ into standard forms $G_1,G_2$. From a solution $\epsilon$, set
\begin{equation}\label{eq:recovermultiplicity}
 s=\sum_i\epsilon_i s_i\pmod q.
\end{equation}
Proposition~\ref{prop:reduction} identifies its generalized complementation with the selected edge change, and Lemma~\ref{lem:restrictedsequence} constructs the subsequent path from $G_1\star_r s$ to $G_2$. Following the inverse of the second standardization path completes the transformation to $H$.

Equation~\eqref{eq:generalizedunitary} implements the recovered generalized complementation by exact single-qubit rotations. All other graph operations are local Cliffords. Absorbing the Hadamards that conjugate an $X$-rotation to a $Z$-rotation therefore gives
\begin{equation}\label{eq:outputunitary}
 U_v=C_v^{\mathrm{out}}Z(k_v\pi/q)C_v^{\mathrm{in}},
 \qquad 0\leq k_v<2q,
\end{equation}
up to an irrelevant global phase, with one-qubit Cliffords on either side. Products of single-qubit Cliffords can be accumulated using their finite multiplication table modulo phase. Each $k_v$ requires $O(r)$ bits, and the Clifford descriptions have constant size. Thus the total symbolic output has $O(n\log(n+1))$ bits. For components using different values of $q$, take their largest value and rescale the numerators. The underlying Clifford--rotation--Clifford form follows from the prior graphical reduction; the present algorithm computes it in polynomial time.

\begin{proposition}[Checking a transformation]\label{prop:certificate}
Every accepting input has a graph-operation certificate of $O(n^2\log(n+1))$ bits that can be verified in $\widetilde O(n^5)$ bit operations. Verification requires neither an MLS cover nor the standardization procedure, and determines the exact local unitaries realizing the transformation.
\end{proposition}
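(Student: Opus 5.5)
The certificate will record the transformation as an explicit chain of graph operations, and the verifier will replay it using only binary and modular arithmetic. For each nontrivial component, the certificate lists the following. First, the ordinary local-complementation sequence taking $G$ to $G_1$. Second, the level $r$ and the multiplicity vector $s\in\{0,\ldots,q-1\}^X$, together with the set $X$ on which it is supported. Third, a complementation sequence from $G_1\star_r s$ to $G_2$. Fourth, the reversed sequence from $G_2$ to $H$. The standardization paths can always be chosen of length $O(n_c)$. A sequence reaching a standard form can be replaced by one extracted from a Clifford solution via Lemma~\ref{lem:restrictedsequence}, applied with $W=V$, which yields at most $\lfloor 3n_c/2\rfloor$ complementations. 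The middle sequence has length at most $\lfloor3|W|/2\rfloor$ by the same lemma. Each center costs $O(\log n)$ bits and $s$ costs $O(n\log n)$ bits. To meet the stated size $O(n^2\log(n+1))$ comfortably, I would in addition allow the certificate to list the intermediate graph $G_1\star_r s$ explicitly as an adjacency matrix. The certificate must not include anything tied to MLS covers or types.

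The verifier performs four checks in order. First, it checks that the labelled component partitions agree. Second, it applies the listed local complementations, each in $O(n^2)$ binary operations. Third, it checks that $X$ is independent in the current graph and that $s$ is $r$-incident. Fourth, it computes the edge toggles $f(s)$ from \eqref{eq:togglemap}, compares the result with the listed intermediate graph, and continues with the remaining complementations until it reaches $H$.

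The only nontrivial verification cost is checking incidence, because the direct form of Definition~\ref{def:incident} needs all subsets up to size $r+1$. Here I would invoke Theorem~\ref{thm:compression} and Corollary~\ref{cor:annihilator}. The verifier recomputes the compressed matrix $B$ from the neighborhood vectors $a_v$ of the current graph, which takes $O(n^5r^3)=\widetilde O(n^5)$ bit operations. It then checks $Bs=0$ modulo $q$. By Lemma~\ref{lem:support}, this is equivalent to $s$ being a valid $r$-local complementation. Whether the graph is in standard form is irrelevant: Lemma~\ref{lem:support} and Lemma~\ref{lem:homomorphism} hold for any graph and any independent $X$. Computing $f(s)$ costs $O(n^2m)$ dot products of $O(r+\log n)$-bit numbers.

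Once every check passes, the unitary follows by composition. Each local complementation corresponds to an explicit single-qubit Clifford, namely $\sqrt{-i\mathsf X}$ at the center and $\sqrt{i\mathsf Z}$ on its neighbors. The generalized step corresponds to the rotations in \eqref{eq:generalizedunitary}. Accumulating these as in \eqref{eq:outputunitary} gives the exact $U_v$ in $O(n^2)$ further operations. Summing over components gives the total bounds.

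I expect the main obstacle to be soundness without standard form. Here soundness means that accepting a certificate really does imply LU equivalence. This rests on \eqref{eq:generalizedunitary} being valid for every $r$-incident independent multiset on an arbitrary graph, which is how Claudet--Perdrix state it. I would check that no hidden standard-form assumption enters that formula, such as $N(X)\subseteq Z$. If one does, I would fall back on having the verifier recompute the type partition locally, although that would break the claim that no MLS cover is used.
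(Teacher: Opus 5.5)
Your proposal is correct and follows essentially the same route as the paper. The certificate records the two ordinary-LC paths and $(r,s)$; the verifier replays the paths and checks incidence by recomputing the compressed matrix of Theorem~\ref{thm:compression} (the paper takes $X=\supp(s)$, which Lemma~\ref{lem:support} makes equivalent to your explicit independent $X$). It then reads off $G_1\star_r s$ from the pair sums and assembles the unitaries from \eqref{eq:generalizedunitary}, which is stated for arbitrary graphs with independent support, so your standard-form worry does not arise. The one real difference is path length: the paper bounds the standardization paths by $O(n_c^2)$ via Lemma~\ref{lem:standardcost}, whereas your re-extraction through Lemma~\ref{lem:restrictedsequence} with $W=V$ validly gives $O(n_c)$ paths and an even shorter certificate. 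You should also have the verifier reject levels, multiplicities, and path lengths beyond these bounds, as the paper does, so that the $\widetilde O(n^5)$ time bound holds for arbitrary certificates.
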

\begin{proof}
For each labelled component, record the ordinary-LC path from $G$ to $G_1$, the level $r$ and multiplicities $s$ of the generalized step, and the ordinary-LC path from $G_1\star_r s$ to $H$. Lemma~\ref{lem:standardcost} bounds each standardization path by $O(n_c^2)$ local complementations; the remaining restricted path has length $O(n_c)$ by Lemma~\ref{lem:restrictedsequence}. Together with $r=O(\log(n_c+1))$ and $0\leq s(u)<2^r$, this gives the stated certificate size.

The verifier checks the component labels and the bounds on the path lengths, level, and multiplicities. It simulates the first path, checks that $\supp(s)$ is independent, and, when $s\ne0$, applies Theorem~\ref{thm:compression} with $X=\supp(s)$. Testing the resulting generating rows against $s$ modulo $2^r$ verifies all incidence conditions in $\widetilde O(n_c^5)$ bit operations; for $s=0$ they hold automatically. The pairwise common-neighbor sums then determine $G_1\star_r s$. Finally, the verifier simulates the second path and compares the resulting labelled graph with $H$. The ordinary paths cost $O(n_c^4)$ binary operations. Equation~\eqref{eq:generalizedunitary} and the local-Clifford implementation of each ordinary complementation specify the exact unitaries, which can be accumulated in the form \eqref{eq:outputunitary}. Summing over components gives the claimed bound.
\end{proof}

Theorem~\ref{thm:main} places graph-state LU equivalence in $\mathsf P$, and therefore in $\mathsf{NP}$, answering the membership question in \cite[Chapter~8]{Claudet}. Proposition~\ref{prop:certificate} also supplies a direct verifier that does not repeat the graph preprocessing.

The same computation determines how far beyond Clifford operations a transformation must go in the dyadic hierarchy. The standard forms can be reused while the level varies.

\begin{corollary}\label{cor:minlevel}
For two labelled graph states, the least integer $r\geq1$ for which they are $\LC_r$-equivalent, or the conclusion that they are not LU-equivalent, can be computed in $\widetilde O(n^{6.38})$ bit operations.
\end{corollary}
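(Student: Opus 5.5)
The plan is to run Algorithm~2 once to settle LU equivalence, and then, on each labelled component, scan the levels $r=1,2,\ldots,r_c$. Here $r_c$ is the value chosen by \eqref{eq:level}. The standard forms are computed a single time.

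The first step is to reject components with differing labelled partitions (Lemma~\ref{lem:components}). We also reject whenever the common-standard-form procedure or the type and $X$-neighborhood tests fail. By Theorem~\ref{thm:main} these rejections mean the states are not LU-equivalent, and we report that.

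Otherwise we have, on each component, surviving standard forms $G_1,G_2$ with respect to one MLS cover. The key observation is that this data does not depend on $r$. The MLS cover, the types $X,Z,W$ and the LC paths to $G_1,G_2$ are all defined by ordinary local complementations and minimal local sets. Also, $\LC_r$-equivalence for any $r\geq1$ implies LU-equivalence, and hence LC-equivalence of the standard-form representatives to the inputs. So Theorem~\ref{thm:standardreduction} applies to $G_1,G_2$ at every level $r$, and Proposition~\ref{prop:reduction} then characterizes $\LC_r$-equivalence of the component at that level.

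For each $r\le r_c$ we recompute $M_r$ by Algorithm~1, then $\Sigma(G_1,X,r)$ and a basis of $\Omega(G_1,X,r)$ by Theorem~\ref{thm:image}. We then test the affine system \eqref{eq:directlinear}--\eqref{eq:frozen}, or the six LC systems if $F=\varnothing$. The least $r$ that succeeds is the component's minimal level. Because $\LC_r\subseteq\LC_{r+1}$ (the rotation $\diag(1,e^{i\pi/2^r})$ is the square of the level-$(r+1)$ gate), the set of feasible levels is upward closed. Theorem~\ref{thm:level} guarantees success at $r_c$.

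For the whole graph, the answer is the maximum over components of the component minima. This uses the fact that a tensor product of local unitaries acts componentwise, and that an $\LC_r$ transformation restricts to each component. Lemma~\ref{lem:components} gives this, since an LU map between the full states factorizes over the matching components. Upward closure means a common level exists, namely the maximum.

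For the cost, standardization is paid once per component, giving $\widetilde O(n_c^{6.38})$ by Lemma~\ref{lem:standardcost}. Each level costs $\widetilde O(n_c^5)$ by the bound in the proof of Theorem~\ref{thm:main}, and there are $r_c=O(\log n_c)$ levels. The total is therefore $\widetilde O(n_c^{6.38})$, and summing over components gives the claim. One could binary search over $r$ instead, but it is unnecessary.

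The main obstacle is justifying that the single cover and standard forms remain valid inputs to Theorem~\ref{thm:standardreduction} at smaller levels $r<r_c$. That theorem's hypothesis is only $\LC_r$-equivalence together with standard form with respect to a common cover, and the cover was built without reference to $r$. I would check that Claudet--Perdrix's Lemma~18 is stated for arbitrary $r$ and an arbitrary chosen cover, as noted after Theorem~\ref{thm:standardreduction}. The remaining point is that if the inputs are $\LC_r$-equivalent, then so are $G_1,G_2$, because they differ from the inputs by LC paths. With that in place, each level test is exact.
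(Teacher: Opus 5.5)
Your proposal is correct and follows essentially the paper's argument. Standardization and the type comparisons are done once, since they are independent of $r$. Only the algebraic part of Algorithm~2 is then run at fixed levels, with $\LC_r\subseteq\LC_{r+1}$ giving monotonicity and the least level of a disconnected input taken as the maximum over components. The one difference is that you scan the $O(\log n_c)$ levels linearly per component, whereas the paper binary-searches up to a global $r_{\max}$; either way the extra factor is absorbed into the $\widetilde O(n^{6.38})$ bound.
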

\begin{proof}
The standardization and type comparisons are independent of $r$. After performing them once, apply the algebraic part of Algorithm~2 at
\[
 r_{\max}=\max\{1,\lceil\log_2((n+1)/8)\rceil\}.
\]
If it rejects, Theorem~\ref{thm:level} rules out LU equivalence. Otherwise $\LC_r\subseteq\LC_{r+1}$ makes the fixed-level acceptance predicate monotone. Binary search on $1,\ldots,r_{\max}$ takes $O(\log\log(n+2))$ calls. Each call costs $\widetilde O(n^5)$ after the common preprocessing, so the overall bound is unchanged. For disconnected inputs, the least level is the maximum of the least levels of the components.
\end{proof}

Level one permits only Clifford gates; level two also permits $Z(\pi/4)$. The minimum level concerns exact membership in these gate sets. It does not minimize the number of non-Clifford gates or the length of a circuit.

\subsection{Stabilizer states and codes}\label{sec:codes}

A pure qubit stabilizer state, supplied by signed Pauli generators, can be converted in polynomial time to a local-Clifford-equivalent graph state \cite{VDDGraph,VDD}. Binary elimination performs this conversion in $O(n^3)$ bit operations. Retaining the local Cliffords extends both the decision and synthesis parts of Theorem~\ref{thm:main} to this input representation, with the same asymptotic bound.

Two pure states are equivalent under stochastic local operations and classical communication (SLOCC) if one is a nonzero scalar multiple of $(\bigotimes_v A_v)|\psi\rangle$ for invertible single-qubit operators $A_v$.

\begin{corollary}\label{cor:slocc}
SLOCC equivalence of two pure $n$-qubit stabilizer states given by signed Pauli generators can be decided in $\widetilde O(n^{6.38})$ bit operations. When they are equivalent, the algorithm returns a product of single-qubit unitaries relating them.
\end{corollary}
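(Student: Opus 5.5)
The plan is to reduce SLOCC equivalence of stabilizer states to LU equivalence of graph states, then apply Theorem~\ref{thm:main}. First, convert each stabilizer state from its signed Pauli generators to an LC-equivalent graph state, retaining the local Cliffords. By the preceding remark this costs $O(n^3)$ bit operations. Both LU and SLOCC equivalence are compatible with composing by local Cliffords, so it suffices to decide SLOCC equivalence of two graph states $|G\rangle,|H\rangle$. Any unitary witness found for the graphs then converts back to one for the original states.

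The key step is to show that SLOCC equivalence of graph states, or more generally of stabilizer states, implies LU equivalence. I would argue via reduced density matrices, in the standard way for states whose single-qubit marginals are maximally mixed.

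\begin{enumerate}
\item Split off components. First reduce to states with no product factor: isolated vertices give $|+\rangle$ factors. Invertible local operators preserve the partition into tensor factors, because SLOCC preserves the rank across every bipartition. Hence the labelled component partitions must agree, and factors of a single qubit are trivially LU-related.
\item Use the canonical form. For a connected graph with $n_c\ge2$, every one-qubit marginal of $|G\rangle$ equals $I/2$. Suppose $|H\rangle\propto(\bigotimes_v A_v)|G\rangle$. Take the polar decomposition $A_v=U_vP_v$ with $P_v>0$, and absorb the $U_v$.
\item Invoke uniqueness. A Kempe/Verstraete-type uniqueness result for the normal form then forces the $P_v$ to be scalars. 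The normal form is a state all of whose one-qubit marginals are proportional to the identity, and it is unique up to LU within its SLOCC orbit. This gives LU equivalence.
\end{enumerate}

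For a self-contained alternative, consider the function $\prod_v\det(\rho_v)$, where $\rho_v$ are the one-qubit marginals of the normalized state. On an SLOCC orbit it is maximized exactly on the critical states, and its value there is $4^{-n}$. Both $|G\rangle$ and $|H\rangle$ are critical. By the Kempf--Ness theorem, two critical points in the same $\mathrm{SL}(2)^{\otimes n}$ orbit lie in the same $\mathrm{SU}(2)^{\otimes n}$ orbit.

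The main obstacle is invoking this uniqueness cleanly. I would cite the Kempf--Ness/Verstraete normal-form theorem, which states that critical points of an orbit form a single orbit of the maximal compact subgroup, rather than reprove it. One detail needs care: $A_v$ ranges over $\mathrm{GL}(2)$ rather than $\mathrm{SL}(2)$. Determinants contribute only an overall scalar, which is absorbed by the nonzero scalar allowed in the definition.

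Given this, the decision procedure has three stages:
\begin{enumerate}
\item Run the graph conversion.
\item Run Algorithm~2 on the two graphs.
\item Accept exactly when it accepts, and compose the returned unitaries with the recorded Cliffords.
\end{enumerate}
The cost is $O(n^3)$ for conversion plus $\widetilde O(n^{6.38})$ for Theorem~\ref{thm:main}, which gives the stated bound.
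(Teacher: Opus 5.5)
Your proposal is correct and has the same structure as the paper's proof: convert to graph states, use that SLOCC and LU equivalence coincide for stabilizer states, and apply Theorem~\ref{thm:main}. The paper cites \cite[Theorem~2]{VDDLocal} for the middle step, whereas you derive it from the Kempf--Ness normal-form theorem after splitting off product factors. One correction: Kempf--Ness does not force the positive parts $P_v$ to be scalars, since the GHZ state, which is LC-equivalent to a star graph state, is fixed by $\diag(t,t^{-1})\otimes\diag(t^{-1},t)\otimes I$ for $t>0$, $t\ne1$. It only gives that $\bigotimes_v P_v$ fixes $|G\rangle$ up to a scalar, which is the single-$\mathrm{SU}(2)^{\otimes n}$-orbit statement you ultimately use, so the conclusion stands.
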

\begin{proof}
For pure qubit stabilizer states, SLOCC equivalence coincides with LU equivalence \cite[Theorem~2]{VDDLocal}. Apply the stabilizer-to-graph conversion and Theorem~\ref{thm:main}.
\end{proof}

The result also applies to code spaces of dimension two. Let $\mathcal C\subseteq(\C^2)^{\otimes n}$ be the simultaneous $+1$ eigenspace of $n-1$ independent commuting Hermitian Pauli operators whose generated group excludes $-I$. Such a stabilizer code encodes one logical qubit; write $P_{\mathcal C}$ for its orthogonal projector. Equivalence of two such codes means that a product of single-qubit unitaries maps one entire code space to the other. The physical-qubit correspondence is fixed, and the induced logical unitary is unrestricted.

\begin{corollary}\label{cor:onequbitcode}
Given signed Pauli generators for two $n$-qubit stabilizer codes $\mathcal C,\mathcal D$, each encoding one logical qubit, one can decide in $\widetilde O(n^{6.38})$ bit operations whether
\begin{equation}\label{eq:codeequivalence}
 UP_{\mathcal C}U^\dagger=P_{\mathcal D},
 \qquad U=\bigotimes_{v=1}^n U_v,\quad U_v\in U(2).
\end{equation}
If so, the algorithm returns such a product $U$ with an exact symbolic description of $O(n\log(n+1))$ bits.
\end{corollary}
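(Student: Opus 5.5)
The plan is to reduce code equivalence to LU equivalence of pure stabilizer states on $n+1$ qubits by purification, and then apply Corollary~\ref{cor:slocc} and Theorem~\ref{thm:main}.

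\emph{Purification.} Adjoin a reference qubit $0$. Choose logical operators $\bar X,\bar Z$ for $\mathcal C$ from the normalizer of its stabilizer. They commute with the stabilizer and anticommute with each other, and binary symplectic elimination finds them in $O(n^3)$ operations. Form the stabilizer state $|\Phi_{\mathcal C}\rangle$ with generators consisting of the $n-1$ code generators, together with $\mathsf X_0\bar X$ and $\mathsf Z_0\bar Z$. Its reduced density matrix on qubits $1,\ldots,n$ is $P_{\mathcal C}/2$, and qubit $0$ is maximally entangled with the logical qubit. Construct $|\Phi_{\mathcal D}\rangle$ in the same way.

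\emph{Key equivalence.} I will show that \eqref{eq:codeequivalence} holds for some product $U$ if and only if $|\Phi_{\mathcal C}\rangle$ and $|\Phi_{\mathcal D}\rangle$ are LU-equivalent on $n+1$ qubits.

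For the forward direction, suppose $UP_{\mathcal C}U^\dagger=P_{\mathcal D}$. Then $U$ restricts to an isometry from $\mathcal C$ onto $\mathcal D$. Hence $(I\otimes U)|\Phi_{\mathcal C}\rangle$ is a maximally entangled state between qubit $0$ and $\mathcal D$. Any two such states differ by a unitary $V$ on the $2$-dimensional factor, and by the transpose trick this unitary can be moved onto the reference side. So $(V\otimes U)|\Phi_{\mathcal C}\rangle=e^{i\phi}|\Phi_{\mathcal D}\rangle$ for some $V\in U(2)$, which is an LU equivalence.

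For the converse, suppose $(U_0\otimes U)|\Phi_{\mathcal C}\rangle\propto|\Phi_{\mathcal D}\rangle$. Tracing out qubit $0$ gives $U(P_{\mathcal C}/2)U^\dagger=P_{\mathcal D}/2$, because a unitary on qubit $0$ does not change the reduced state of the other qubits. This is exactly \eqref{eq:codeequivalence}, and we simply discard $U_0$.

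\emph{Algorithm and cost.} Convert both purified states to graph states by the $O(n^3)$ stabilizer-to-graph conversion, keeping the local Cliffords. Then run Theorem~\ref{thm:main} on $n+1$ vertices, which costs $\widetilde O((n+1)^{6.38})=\widetilde O(n^{6.38})$. If it accepts, compose the returned single-qubit unitaries with the conversion Cliffords on each qubit, still in the form \eqref{eq:outputunitary}, and drop qubit $0$. The resulting description has $O(n\log(n+1))$ bits.

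\emph{Expected obstacle.} The delicate step is the forward direction. There the reference unitary must be chosen so that it depends only on the logical action of $U$, and the global phases must be tracked carefully. The signs of the generators also need care: the condition that $-I$ is excluded guarantees that the purified generators define a valid state, and choosing the phases of $\bar X,\bar Z$ Hermitian keeps the purified stabilizer group free of $-I$.
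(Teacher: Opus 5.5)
Your proposal is correct and follows essentially the same route as the paper. Both purify each code with one reference qubit via the generators $I\otimes S_i$, $\mathsf X_0\bar X$, $\mathsf Z_0\bar Z$, reduce code equivalence to LU equivalence of the purified states, and then convert to graph states and apply Theorem~\ref{thm:main} on $n+1$ qubits. Your transpose-trick argument is an explicit form of the paper's uniqueness-of-purification step, and the converse is the same partial trace. The forward direction you flagged as delicate is in fact immediate: writing $U|\bar j\rangle=\sum_k W_{kj}|\bar k_{\mathcal D}\rangle$ gives $\bigl((W^{\mathsf T})^{-1}\otimes U\bigr)|\Phi_{\mathcal C}\rangle=|\Phi_{\mathcal D}\rangle$, so the reference unitary depends only on the logical action of $U$ and no extra phase bookkeeping arises.
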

\begin{proof}
Purifying the maximally mixed code state on one reference qubit converts this problem to pure-state LU equivalence. We use the stabilizer purification of Gross and Van den Nest \cite[Section~5]{GrossVDN}. Binary symplectic elimination finds logical Pauli operators $\overline X,\overline Z$ that commute with the code stabilizer and anticommute with each other. Choose a normalized code vector $|\overline0\rangle$ with $\overline Z$ eigenvalue $+1$ and put $|\overline1\rangle=\overline X|\overline0\rangle$. Introduce a reference qubit $R$ and define
\begin{equation}\label{eq:codepurification}
 |\Psi_{\mathcal C}\rangle=
 \frac{|0\rangle_R|\overline0\rangle+
       |1\rangle_R|\overline1\rangle}{\sqrt2}.
\end{equation}
If $S_1,\ldots,S_{n-1}$ generate the code stabilizer, the $n+1$ independent commuting generators
\[
 I_R\otimes S_1,\ldots,I_R\otimes S_{n-1},
 \quad \mathsf X_R\otimes\overline X,
 \quad \mathsf Z_R\otimes\overline Z
\]
specify $|\Psi_{\mathcal C}\rangle$. Thus its tableau is computable in $O(n^3)$ bit operations, without constructing a state vector, and
\[
 \operatorname{Tr}_R|\Psi_{\mathcal C}\rangle\langle\Psi_{\mathcal C}|
       =\tfrac12P_{\mathcal C}.
\]
Construct $|\Psi_{\mathcal D}\rangle$ in the same way, with any choice of logical Pauli pair.

Suppose \eqref{eq:codeequivalence} holds. The states $(I_R\otimes U)|\Psi_{\mathcal C}\rangle$ and $|\Psi_{\mathcal D}\rangle$ purify the same density operator $P_{\mathcal D}/2$. Their Schmidt decompositions therefore differ by a unitary $V_R$ on the reference qubit, giving
\begin{equation}\label{eq:purifiedequivalence}
 (V_R\otimes U)|\Psi_{\mathcal C}\rangle
       =|\Psi_{\mathcal D}\rangle
\end{equation}
up to global phase. Conversely, taking the partial trace of \eqref{eq:purifiedequivalence} yields \eqref{eq:codeequivalence}. Since $R$ is a single qubit, \eqref{eq:purifiedequivalence} is precisely an instance of pure-stabilizer LU equivalence on $n+1$ labelled qubits. Applying Theorem~\ref{thm:main} after graph conversion decides it; retaining the physical factors of its witness gives $U$. The conversion costs $O(n^3)$ and the remaining computation costs $\widetilde O((n+1)^{6.38})$, which gives the stated bound and output size.
\end{proof}

For a stabilizer code encoding $k>1$ logical qubits, this purification uses $k$ reference qubits. The unitary relating two purifications may act jointly on them, whereas Theorem~\ref{thm:main} allows only products of single-qubit unitaries. Thus the same argument does not decide LU equivalence of general stabilizer codes.

\section{LC classes within an LU class}\label{sec:orbitcount}

A single LU class may contain several LC classes. For a fixed labelled graph $G$, we determine how many there are and whether every graph state LU-equivalent to $|G\rangle$ is already LC-equivalent to it. Claudet and Perdrix characterize the latter property through generalized complementations in standard form \cite[Section~4.4]{Graphical}. The compressed edge space and the affine graph equations make this test polynomial and give the exact count when there is more than one class.

\begin{theorem}\label{thm:orbitcount}
The number of LC-equivalence classes of labelled graph states in the LU-equivalence class of an $n$-vertex graph $G$ is $2^\kappa$ for a nonnegative integer $\kappa$. Both $\kappa$ and this count can be computed in $\widetilde O(n^{6.38})$ bit operations. Within the same bound, one can decide whether LU equals LC on this class and, if not, construct a labelled graph $H$ that is LU-equivalent but not LC-equivalent to $G$, together with a local-unitary witness.
\end{theorem}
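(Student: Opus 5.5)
We reduce to a single standard form and identify the LC classes with a quotient of binary vector spaces. First handle components. By Lemma~\ref{lem:components}, LU equivalence preserves the labelled component partition. Both LU and LC act componentwise, so the LC classes inside the LU class of $G$ are products of the per-component LC classes. The count is therefore multiplicative, and it suffices to show that each connected component contributes a power of two $2^{\kappa_c}$; then $\kappa=\sum_c\kappa_c$.

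For a connected component we bring $G$ once into standard form $G_1$ with respect to an MLS cover $\mathcal M$ computed from $G$ itself, at cost $\widetilde O(n_c^{6.38})$. We choose $r$ by \eqref{eq:level} and compute the space $\Omega=\Omega(G_1,X,r)$ by Theorem~\ref{thm:image}. The key claim is that every graph $H$ LU-equivalent to $G$ is LC-equivalent to $G_1\oplus\omega$ for some $\omega\in\Omega$. Conversely, every such $G_1\oplus\omega$ is LU-equivalent to $G$.

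The converse direction is immediate from Lemma~\ref{lem:homomorphism} and \eqref{eq:generalizedunitary}. For the forward direction we use the fixed-cover form of Theorem~\ref{thm:standardreduction} noted in Section~\ref{sec:reduction}. Minimal local sets are determined by the state up to local unitaries, so $\mathcal M$ is also an MLS cover for $H$. Standardizing $H$ with respect to the same cover gives some $H_1$ LC-equivalent to $H$. By Theorem~\ref{thm:standardreduction}, $H_1$ is reached from $G_1$ by one generalized step on $X$ followed by complementations on $W$. Here $X$ is the common type-$\mathsf X$ set, by \cite[Lemma~17]{CP}. Thus $H\sim_{\LC}G_1\oplus\omega$ with $\omega\in\Omega$.

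Next we take the subspace
\[
 \Omega_0=\{\omega\in\Omega:\ G_1\oplus\omega\ \text{is LC-equivalent to}\ G_1\}.
\]
We show that $G_1\oplus\omega\sim_{\LC}G_1\oplus\omega'$ if and only if $\omega+\omega'\in\Omega_0$. Then the LC classes correspond bijectively to $\Omega/\Omega_0$, which gives the count $2^{\dim\Omega-\dim\Omega_0}$. This step is where I expect the main obstacle. First, we need $\Omega_0$ to be a subspace. Second, LC-equivalence between $G_1\oplus\omega$ and $G_1\oplus\omega'$ must translate into the difference lying in $\Omega_0$.

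The plan is as follows. Both graphs are in standard form with respect to the same cover with the same types. Generalized steps on $X$ change only $Z$--$Z$ edges and preserve validity because the $X$-neighborhoods are unchanged. Hence $\Omega(G_1\oplus\omega,X,r)=\Omega$, since the rows $a_K$ depend only on $X$-neighborhoods.

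Now apply Theorem~\ref{thm:standardreduction} at level $1$, i.e.\ the LC case, to the pair $G_1\oplus\omega,\ G_1\oplus\omega'$. It yields a $1$-local complementation on $X$ followed by complementations on $W$. Because the generalized steps commute with complementations on $W$, $\Omega_0$ is exactly the set of $\epsilon$ for which Theorem~\ref{thm:direct} is feasible with $H=G_1$. That set is affine and contains $0$, hence by Corollary~\ref{cor:feasible} it is linear. The commuting argument then also shows that LC-equivalence depends only on the class of $\omega+\omega'$ modulo $\Omega_0$. For this we apply the same fact to source $G_1\oplus\omega$, whose $\Omega$ and feasible set coincide with those for $G_1$ after translation.

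Computationally, we obtain a basis of $\Omega_0$ via Corollary~\ref{cor:feasible} in $O(n_c^4)$ and set $\kappa_c=\dim\Omega-\dim\Omega_0$. We decide LU$=$LC by testing whether $\kappa=0$. Otherwise we pick a basis vector $\omega_i\notin\Omega_0$ and output $H=G_1\oplus\omega_i$, transported back along the standardization path on that component. Other components stay unchanged. The witness is $s_i$ through \eqref{eq:generalizedunitary} composed with the stored Clifford path. All costs are dominated by standardization.
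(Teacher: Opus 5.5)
Your overall architecture matches the paper's: work componentwise, fix one standard form $G_1$ from $G$ alone, define $\Omega_0$ (the paper's $\Omega_{\LC}$), and use the fact that $Z$--$Z$ edge changes commute with complementations on $W$ to identify LC classes with $\Omega/\Omega_0$. Two steps, however, do not go through as written.

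\textbf{The covering claim.} You justify ``every $H$ LU-equivalent to $G$ is LC-equivalent to some $G_1\oplus\omega$'' by standardizing $H$ ``with respect to the same cover''. No available result does this. The Claudet--Perdrix common-standard-form procedure acts on a pair and refines the cover as it runs, adding stars and turning types into $\bot$. Applying it to $(G,H)$ can therefore produce a different cover, a different partition $X,Z,W$, a different standard form of $G$, and so a different $\Omega$. Theorem~\ref{thm:standardreduction} and the type-matching lemma apply only once both graphs are already in standard form for one cover. LU-invariance of minimal local sets shows that $\mathcal M$ is an MLS cover of $H$. It does not show that some LC-representative of $H$ is in standard form for $\mathcal M$ with the types of $G_1$.

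That existence statement, uniform in $H$, is exactly the paper's Lemma~\ref{lem:fixedcover}, and it needs a genuine argument. LU-invariance of the supported stabilizer groups $\mathcal S(L)$, $L\in\mathcal M$, forces $U_v$ to be Clifford on $W$. Dimension-two minimal local sets need a separate rank-one operator-Schmidt argument, which uses $n_c\geq3$. On $X\cup Z$, the same invariance makes $U_v$ a Clifford times a rotation about the type axis. The full-support stabilizer normal form then shows that the rotated state is $D_Z|G_1\oplus\Delta_Z\rangle$ for a $Z$--$Z$ change $\Delta_Z$. Preservation of the supported stabilizer groups shows that this graph is in standard form for $\mathcal M$ with the same types. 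Only then do Theorem~\ref{thm:level} and the fixed-level reduction express it as $G_1\oplus\omega$ followed by complementations on $W$. The same stabilizer-group argument is also needed for your unproved assertion that every $G_1\oplus\omega$ is in standard form for $\mathcal M$ with unchanged types. You rely on that assertion before applying the level-one reduction to the pair $G_1\oplus\omega$, $G_1\oplus\omega'$.

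\textbf{The computation of $\Omega_0$.} Your definition of $\Omega_0$ is right, but Theorem~\ref{thm:direct} with $F=X\sqcup Z$ frozen and the $E_i$ a basis of $\Omega$ permits complementations only on $W$. It therefore computes $\Omega_W=\{\omega\in\Omega:G_1\oplus\omega\text{ reaches }G_1\text{ by complementations on }W\}$. The level-one reduction you invoke also allows a $1$-local complementation on $X$, that is, any change in $\Omega_1$, the span of the neighborhood cliques of the $X$-vertices. Complementations on $W$ generally cannot undo these changes, so the correct space is $\Omega_0=\Omega_W+\Omega_1$.

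For the 27-vertex pair of Appendix~\ref{app:checks}, $W=\varnothing$. Your system then returns $\Omega_0=0$ and $\kappa=15$, whereas in fact $\dim\Omega_0=\dim\Omega_1=14$ and $\kappa=1$. The paper's fix is to add a basis of $\Omega_1$ as extra coefficients $\epsilon'$ in $\Delta(\epsilon,\epsilon')$ and project the affine solution set onto $\epsilon$. Quotienting by $\Omega_1$ alone is also insufficient, as the 31-vertex example in Section~\ref{sec:orbitcount} shows; both contributions are needed.
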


The count concerns LC classes of graph states with fixed vertex labels. Individual graphs and arbitrary states in the continuous LU orbit are different counting objects.

The pairwise algorithm can choose a standard form using both inputs. Counting a whole class requires a description that works for every target at once. Work first with a connected graph on $n_c\geq3$ vertices. Apply the common-standard-form procedure to $(G,G)$ and retain one output $G_0$, its MLS cover $\mathcal M$, and its partition $X,Z,W$. These choices depend only on the starting graph. Choose $r$ by \eqref{eq:level}. Writing $\Omega_j(G_0,X)=\Omega(G_0,X,j)$ for the edge-change space at level $j$, set
\begin{equation}\label{eq:orbitspaces}
 \Omega=\Omega_r(G_0,X),\qquad
 \Omega_1=\Omega_1(G_0,X)\subseteq\Omega,\qquad
 G_\omega=G_0\oplus\omega.
\end{equation}
Here an edge-change vector is identified with its symmetric zero-diagonal matrix. The inclusion follows by multiplying a level-one multiplicity vector by $2^{r-1}$. The space $\Omega_1$ is generated by the neighborhood cliques of the vertices in $X$.

\begin{lemma}\label{lem:fixedcover}
Every $G_\omega$, $\omega\in\Omega$, is in standard form for the fixed cover $\mathcal M$. Every labelled graph $H$ LU-equivalent to $G_0$ is LC-equivalent to at least one $G_\omega$.
\end{lemma}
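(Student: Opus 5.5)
The first claim I would derive from the structure of the generalized step. For $\omega\in\Omega$, write $\omega=f(s)$ with $s\in\Sigma(G_0,X,r)$. Then $G_\omega=G_0\star_r s$ is an $\LC_r$ image of $G_0$, and so it is LU-equivalent to $G_0$. The step changes only $Z$--$Z$ edges, so neighborhoods of $X$ and $W$--$Z$ adjacencies are untouched.

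I would then check that the vertex types relative to $\mathcal M$ are unchanged, and that each member of $\mathcal M$ is still a minimal local set of $G_\omega$. My first attempt would use the LU invariance of local sets and their minimality: local sets are supports of stabilizers, and reduced-density-matrix ranks are LU invariants, so by the graphical characterization of Claudet--Perdrix \cite{Graphical} minimal local sets are preserved under LU. The types need more care. For this I would use the explicit unitary \eqref{eq:generalizedunitary}. It acts by $X$-rotations on $X$ and by $Z$-rotations on $J$. Conjugating a minimal-support stabilizer whose Pauli at a type-$\mathsf X$ vertex is $\mathsf X$, or at a type-$\mathsf Z$ vertex is $\mathsf Z$, leaves that local Pauli axis fixed. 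This argument is the one invoked in \cite[Lemma~17]{CP} to show that LU-equivalent standard forms share types.

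With types, stars at $X$, and the ordering condition all preserved, and no type-$\mathsf Y$ vertex created, $G_\omega$ satisfies \cite[Definition~14]{CP} for $\mathcal M$. I expect the delicate point to be the type-$\bot$ vertices: their several axes must remain several. I would handle this by applying the same axis argument in reverse, through the inverse generalized step, since $-s$ is also incident.

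For the second claim, let $H$ be LU-equivalent to $G_0$. I would run the common-standard-form procedure on $(G_0,H)$, but with the cover fixed to $\mathcal M$. The remark after Theorem~\ref{thm:standardreduction} says the reduction applies to an arbitrary chosen cover, and \cite[Definition~19]{Claudet} defines standard form relative to the chosen cover. I would argue that $H$ is LC-equivalent to some $H_1$ in standard form with respect to $\mathcal M$. This holds because $\mathcal M$ remains an MLS cover of every LU-equivalent graph, by LU invariance of minimal local sets, and because the standardization by ordinary complementations of \cite[Lemma~16]{CP} can be performed relative to any given cover. By \cite[Lemma~17]{CP}, $H_1$ and $G_0$ share types and $X$-neighborhoods. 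By Theorem~\ref{thm:level} they are $\LC_r$-equivalent. Theorem~\ref{thm:standardreduction} then gives a single valid $r$-local complementation on $X$ together with complementations on $W$. The commutation discussion after that theorem lets the generalized step go first, producing $G_0\star_r s=G_\omega$ with $\omega=f(s)\in\Omega$, followed by ordinary LC operations to $H_1$. Hence $H\sim_{\LC}H_1\sim_{\LC}G_\omega$.

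The main obstacle is the fixed-cover standardization of $H$. The published procedure chooses a cover jointly from both inputs, so I must verify that it can be driven by a prescribed $\mathcal M$. I would try to do this by inspecting the standardization loops of Lemma~\ref{lem:standardcost}, which only use the cover to read off types and stars.
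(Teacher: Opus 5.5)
\textbf{Gap in the second claim.}

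Your first claim is fine and is essentially the paper's argument. The $Z$-rotation angles in \eqref{eq:generalizedunitary} vanish on $W$, because $N_{G_0}(W)\cap X=\varnothing$. So the unitary commutes with every element of $\mathcal S_{G_0}(L)$ for $L\in\mathcal M$. LU-invariance of the orders then gives $\mathcal S_{G_\omega}(L)=\mathcal S_{G_0}(L)$. Hence all types, including $\bot$, are literally unchanged, and your detour through $-s$ is unnecessary.

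The gap is the step you yourself flag in the second claim: producing $H_1\sim_{\LC}H$ in standard form for the \emph{fixed} cover $\mathcal M$. Your justification, that the standardization of \cite[Lemma~16]{CP} ``can be performed relative to any given cover'' because its loops only read types and stars, is not correct. The procedure also \emph{refines} the cover (step~7 in the proof of Lemma~\ref{lem:standardcost}): it inserts minimal local sets found inside stars of type-$\mathsf X$ vertices, and every such refinement turns some further vertex into type $\bot$. Relative to a common MLS cover, the $\bot$-set is LU-invariant. So a refinement triggered while standardizing $H$ would also change the types of $G_0$. You would then be comparing $H$ with a standard form for a different cover, with a possibly smaller type-$\mathsf X$ set and a different edge space, rather than with some $G_\omega$, $\omega\in\Omega(G_0,X,r)$. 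Nothing in your proposal excludes this, and excluding it is precisely the content of the second claim.

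The paper obtains the needed $H_1$ by analysing an arbitrary product unitary $U$ with $U|G_0\rangle\propto|H\rangle$ directly:
\begin{enumerate}
\item Dimension-one members of $\mathcal M$ force $U_v$ to send each source axis to a Pauli axis. Two distinct axes at a vertex force $U_v$ to be Clifford. Dimension-two members force Clifford factors through an operator-Schmidt, rank-one argument, which uses $|L|\ge4$ from connectedness and $n_c\ge3$.
\item Hence $|H\rangle=CU'|G_0\rangle$, where $U'$ consists of $X$-rotations on $X$ and $Z$-rotations on $Z$.
\item The amplitude ratio \eqref{eq:residualphase} and the full-support stabilizer normal form give $U'|G_0\rangle=D_Z|K\rangle$ with $K=G_0\oplus\Delta_Z$.
\item The same commutation argument as in your first part shows that $K$ is in standard form for $\mathcal M$.
\end{enumerate}
This $K$ is the $H_1$ you need. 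From there, your appeal to Theorem~\ref{thm:level}, \cite[Lemma~18]{CP}, and moving the generalized step first goes through as you wrote.
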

\begin{proof}
Let $\mathcal S_G(L)$ be the subgroup of graph-state stabilizers supported in $L$. Its order is determined by the spectrum of the reduced state on $L$, so it is LU-invariant. For a minimal local set $L$, every nonidentity member has support exactly $L$. Restriction to a vertex of $L$ is therefore injective on the binary Pauli labels: a nonidentity element in the kernel would have strictly smaller support. The subgroup has dimension one or two. In the dimension-two case, its three nonidentity members use three different Pauli axes at every vertex, and their commutation forces $|L|$ to be even. These are the minimal-support constraints underlying the standard-form reduction \cite[Section~4.1]{Graphical}.

Suppose $U=\bigotimes_v U_v$ maps $|G_0\rangle$ to $|H\rangle$. For a dimension-one member $L\in\mathcal M$, the reduced state is $2^{-|L|}(I+P_L)$. Thus $U_v$ sends the Pauli axis of $P_L$ to a signed Pauli axis for every $v\in L$. Two distinct such axes at one vertex force $U_v$ to be Clifford.

For a dimension-two member, $|L|=2$ would give a pure two-qubit reduced state, contradicting connectedness and $n_c\geq3$. Hence $|L|\geq4$. Write its three nonidentity stabilizers as $P_1,P_2,P_3$. At every vertex their local factors are three distinct Pauli axes. For any two vertices $u,v\in L$, the restrictions of $P_1,P_2,P_3$ to $L\setminus\{u,v\}$ are nonzero, mutually orthogonal Pauli tensors. Consequently, the operator Schmidt support of the nonidentity part of the reduced state on the $\{u,v\}$ side is exactly the span of the three products $P_j|_u\otimes P_j|_v$.

The target reduced state has the same property. In source and target Pauli coordinates chosen according to these triples, the two supports are the space of diagonal $3\times3$ matrices. Conjugation by $U_u,U_v$ acts through real orthogonal matrices $R_u,R_v$, so it maps each source product to
\[
 (R_u e_j)(R_v e_j)^{\mathsf T},\qquad j=1,2,3,
\]
where $e_j$ is a coordinate vector. Each matrix lies in the target diagonal space and has rank one. A nonzero diagonal rank-one matrix has exactly one nonzero diagonal entry; thus both $R_u e_j$ and $R_v e_j$ lie along the corresponding coordinate axis. Orthogonality makes these axes distinct as $j$ varies. Hence $R_u,R_v$ are signed permutation matrices, which means that $U_u,U_v$ normalize the Pauli operators and are Clifford up to phase. Varying the pair proves this at every vertex of $L$.

Every vertex of $W$ is constrained by more than one Pauli axis, so $U_v$ is Clifford there. On $X$ and $Z$, factor out a Clifford that sends the source Pauli axis to its signed image. The remaining operator commutes with the source axis. Up to global phase,
\begin{equation}\label{eq:residualrotation}
 |H\rangle=C U'|G_0\rangle,\qquad
 U'=\prod_{x\in X}X_x(\theta_x)\prod_{z\in Z}Z_z(\theta_z),
\end{equation}
where $C$ is local Clifford, $\theta_v\in\mathbb R$, and $U'$ acts trivially on $W$. Only constraints from the chosen cover were used.

Since $X$ is independent and $N_{G_0}(X)\subseteq Z$, the stabilizer identities $\mathsf X_x|G_0\rangle=\mathsf Z_{N_{G_0}(x)}|G_0\rangle$ can be applied simultaneously to the rotations in $U'$. For $y\in\F_2^{V}$ this gives
\begin{equation}\label{eq:residualphase}
 \frac{\langle y|U'|G_0\rangle}{\langle y|G_0\rangle}
 =\exp\!\left(i\sum_{x\in X}\theta_x
                  \left(\bigoplus_{z\in N_{G_0}(x)}y_z\right)
                +i\sum_{z\in Z}\theta_z y_z\right).
\end{equation}
Here the parity is interpreted as $0$ or $1$. The ratio depends only on $y_Z$ and has unit modulus. Moreover, $U'|G_0\rangle=C^\dagger|H\rangle$ is a stabilizer state with full computational support.

Any full-support stabilizer state has amplitudes, up to a common scalar, of the form
\begin{equation}\label{eq:fullsupportphase}
 i^{\sum_v \ell_v y_v}(-1)^{\sum_{u<v}b_{uv}y_uy_v},
 \qquad \ell_v\in\Z/4\Z,\quad b_{uv}\in\F_2.
\end{equation}
This is the full-support case of the stabilizer normal form \cite[Theorem~5]{GrossVDN}. Directly, full support excludes a nonidentity diagonal stabilizer, so projection of the binary stabilizer onto its $X$ coordinates is an isomorphism. Choose a generator with $X$ coordinate $e_v$ for each $v$. Commutation makes the off-diagonal $Z$ coefficients symmetric; the amplitude recurrences give \eqref{eq:fullsupportphase}.

Apply this normal form to $U'|G_0\rangle$ and divide by the graph phase of $G_0$. The ratio in \eqref{eq:residualphase} depends only on $y_Z$. Evaluating it at $e_v$ shows that its linear coefficient vanishes modulo four outside $Z$; evaluating at $e_u+e_v$ shows that its quadratic coefficient vanishes whenever an endpoint is outside $Z$. Therefore
\begin{equation}\label{eq:fixedgraphconversion}
 U'|G_0\rangle=D_Z|K\rangle,\qquad K=G_0\oplus\Delta_Z,
\end{equation}
where $D_Z$ is a product of diagonal single-qubit Cliffords on $Z$, and $\Delta_Z$ changes only $Z$--$Z$ edges. In particular, $K$ is LC-equivalent to $H$.

The chosen cover is preserved by this conversion. Indeed, $D_Z^\dagger U'$ commutes with every element of $\mathcal S_{G_0}(L)$ for every $L\in\mathcal M$: its factors on $X$ and $Z$ have the fixed Pauli axes, and its factors on $W$ are identities. These supported stabilizer groups are therefore contained in those of $K$. LU invariance gives equality of their orders, hence equality of the groups. Minimality is also LU-invariant, so $\mathcal M$ is an MLS cover of $K$ with the same types. The $X$ neighborhoods and vertex order have not changed, and $K$ is in standard form for the same cover. Applying this argument to \eqref{eq:generalizedunitary} also proves that every $G_\omega$ has that standard form.

Finally, Theorem~\ref{thm:level} gives $G_0\sim_{\LC_r}K$. The fixed-level standard-form reduction \cite[Lemma~18]{CP} expresses $K$ as $G_\omega$ followed by complementations on $W$, for some $\omega\in\Omega$. Therefore $H\sim_{\LC}G_\omega$. This step quantizes the graph transformation after \eqref{eq:fixedgraphconversion}; no dyadic assumption was made on the angles in \eqref{eq:residualrotation}.
\end{proof}

\begin{proof}[Proof of Theorem~\ref{thm:orbitcount}]
For the fixed connected standard form, define
\begin{equation}\label{eq:lccancellablespace}
 \Omega_{\LC}=\{\omega\in\Omega:G_\omega\sim_{\LC}G_0\}.
\end{equation}
All $G_\omega$ have the same cover and the same $X$ neighborhoods. The fixed-level reduction at level one therefore says that
\[
 G_\omega\sim_{\LC}G_{\omega'}
 \quad\Longleftrightarrow\quad
 T_W(G_0\oplus\omega\oplus\lambda)=G_0\oplus\omega'
 \text{ for some }\lambda\in\Omega_1,
\]
where $T_W$ is a sequence of complementations on $W$. A fixed $Z$--$Z$ edge change commutes with this sequence: before each step the two graphs have identical neighborhoods at its center, and afterwards they still differ by exactly that edge change. Translating both sides by $\omega'$ gives
\begin{equation}\label{eq:orbitcosets}
 G_\omega\sim_{\LC}G_{\omega'}
 \quad\Longleftrightarrow\quad
 \omega+\omega'\in\Omega_{\LC}.
\end{equation}

To compute $\Omega_{\LC}$, allow both the proposed change in $\Omega$ and any ordinary change in $\Omega_1$ that may help undo it. Choose bases $E_1,\ldots,E_d$ of $\Omega$ and $E^{(1)}_1,\ldots,E^{(1)}_{d_1}$ of $\Omega_1$, with binary coefficient vectors $\epsilon\in\F_2^d$ and $\epsilon'\in\F_2^{d_1}$. Apply Theorem~\ref{thm:direct} with source and target both $G_0$, frozen set $F=X\sqcup Z$, and
\begin{equation}\label{eq:orbitprojection}
 \Delta(\epsilon,\epsilon')=\sum_{i=1}^{d}\epsilon_i E_i
                      +\sum_{j=1}^{d_1}\epsilon'_j E^{(1)}_j.
\end{equation}
The $\epsilon'$ variables account for ordinary complementations at $X$, while the graph system accounts for those at $W$. Projecting its affine solution space onto $\epsilon$ gives exactly $\Omega_{\LC}$ under the basis identification with $\Omega$. The identity transformation puts zero in this projection, making it a vector space. The combined list in \eqref{eq:orbitprojection} need not be independent; only the independent $E_i$ coordinates are counted after projection. If $F$ is empty, $\Omega=0$ and Lemma~\ref{lem:fixedcover} already gives one LC class.

By Lemma~\ref{lem:fixedcover} and \eqref{eq:orbitcosets}, the LC classes of graph states in this LU class are in bijection with
\begin{equation}\label{eq:orbitquotient}
 \Omega/\Omega_{\LC},\qquad
 \kappa=\dim\Omega-\dim\Omega_{\LC}.
\end{equation}
Thus the count is $2^\kappa$. If $\kappa>0$, take any vector outside $\Omega_{\LC}$ and lift it to multiplicities by Theorem~\ref{thm:image}. The corresponding $G_\omega$ is the required counterexample, with its unitary given by \eqref{eq:generalizedunitary} and the source standardization path.

The two edge spaces cost $\widetilde O(n_c^5)$ after standardization. Since $d,d_1\leq|X|$, the projected graph system has $O(n_c)$ variables and costs $O(n_c^4)$ binary operations. The total remains $\widetilde O(n_c^{6.38})$. Connected components of one or two vertices each contribute one LC class. For a disconnected graph, Lemma~\ref{lem:components} makes the labelled component partitions invariant, and LC transformations also act componentwise. The counts multiply and the exponents add. The sum of the component costs has the asserted bound, while $\kappa\leq n$ ensures that writing the integer count takes only $O(n)$ bits.
\end{proof}

Ordinary complementations at $W$ can undo edge changes that ordinary complementations at $X$ cannot generate. Thus quotienting only by $\Omega_1$ can overcount the LC classes. For an explicit example, start with the 27-vertex graph described in Appendix~\ref{app:checks}, choose two vertices $z_1,z_2\in Z$, and add four vertices $u_1,u_2,u_3,u_4$ with edges
\[
 u_1u_4,\ u_2u_4,\ u_3u_4,\quad
 z_1u_1,\ z_1u_2,\ z_2u_1,\ z_2u_3.
\]
The old $X$ stars, together with $\{u_1,u_2,u_3,u_4\}$, form an MLS cover. The added set has dimension two: its supported stabilizers are generated by $\mathsf X_{u_1}\mathsf X_{u_2}\mathsf X_{u_3}\mathsf Z_{u_4}$ and $\mathsf Z_{u_1}\mathsf Z_{u_2}\mathsf Z_{u_3}\mathsf X_{u_4}$. Every nonidentity member has full support, so all four vertices have type $\bot$.

Complementing $u_1,u_2,u_3$ in that order toggles only $z_1z_2$. For this 31-vertex graph, $\dim\Omega=15$, $\dim\Omega_1=14$, and $\dim\Omega_{\LC}=15$. Thus its LU class contains one LC class, although counting modulo $\Omega_1$ alone would give two. The exact checks and realizing sequences are recorded in Appendix~\ref{app:checks}.

Counting Clifford classes and counting graph representatives have different complexities. The former is polynomial by Theorem~\ref{thm:orbitcount}; the decision algorithm completes the complexity classification of the latter.

\begin{corollary}\label{cor:graphcount}
Given a graph $G$ on $[n]$, counting the graphs $H$ on the same labelled vertex set whose graph states are LU-equivalent to $|G\rangle$ is $\#\mathsf P$-complete. The problem remains $\#\mathsf P$-hard when the input is restricted to circle graphs.
\end{corollary}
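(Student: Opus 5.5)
The plan splits the statement into membership in $\#\mathsf P$ and $\#\mathsf P$-hardness, with the hardness part also covering circle graphs.

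\emph{Membership.} Define the nondeterministic machine that, on input $G$, guesses a labelled graph $H$ on $[n]$, that is, $\binom n2$ bits. It then runs the deterministic decision procedure of Theorem~\ref{thm:main} on $(G,H)$ and accepts exactly when the states are LU-equivalent. Each $H$ corresponds to exactly one accepting branch, and that branch takes $\widetilde O(n^{6.38})$ time. Hence the number of accepting paths equals the required count, and the counting function lies in $\#\mathsf P$. No certificate such as Proposition~\ref{prop:certificate} is needed, because the decision is deterministic polynomial. Guessing a certificate would in fact be harmful, since several certificates could correspond to one $H$ and the count would no longer be exact.

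\emph{Hardness.} I would invoke the result of Hahn et al.\ that counting the graphs LC-equivalent to a given graph is $\#\mathsf P$-complete. I expect it is stated via local complementation orbits, and that it remains hard when the input is restricted to circle graphs. The step is then to reduce LC-orbit counting to LU-orbit counting on the same instances. This works because, for fewer than $27$ vertices, LU equals LC \cite[Theorem~1]{Minimal27}. In general, however, the hard instances may be large.

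The main obstacle is exactly this step. It needs an instance family on which LU and LC classes coincide, or some way to separate the two counts. My first attempt is to check whether the hard instances of Hahn et al.\ are circle graphs. Circle graphs are closed under local complementation, and known results (LU $=$ LC for graphs satisfying the minimal-support criteria of \cite{VDDCriterion,Zeng}, together with the characterization in \cite{Graphical}) should show that the LU class of a circle graph equals its LC class. If that citation is unavailable, I would use Theorem~\ref{thm:orbitcount} directly. On the reduction instances I would verify that $\kappa=0$, so the LU class contains exactly one LC class, and then the two counts agree.

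Should $\kappa$ be positive on some instances, there is a fallback. By Lemma~\ref{lem:fixedcover} and \eqref{eq:orbitcosets}, the LU class is a disjoint union of $2^\kappa$ LC classes $\{G_\omega\}$. One could then recover the LC count of $G$ from LU counts together with polynomially computable data. This recovery is unclear, which is why I would first aim for the circle-graph route, where LU equals LC. Once that holds, the identity reduction (output $G$ unchanged) gives $\#\mathsf P$-hardness on circle graphs, and combined with membership this gives completeness.
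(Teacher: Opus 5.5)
Your membership argument is the paper's: guess the $\binom n2$ adjacency bits of $H$, run the deterministic test of Theorem~\ref{thm:main}, and count one accepting branch per labelled graph. Your hardness outline is also the paper's: LC-counting hardness on circle graphs, plus LU=LC on circle graphs, plus the identity reduction. The gap is the step you call the main obstacle. You never close it, and neither of your proposed routes does.

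The criteria of \cite{VDDCriterion,Zeng} are sufficient conditions, and you do not verify them for circle graphs or for LC-equivalent representatives of them. Zeng et al.'s criterion, for example, requires that the graph have no cycles of length three or four, and circle graphs need not avoid such cycles. The fallback via Theorem~\ref{thm:orbitcount} is not a proof either. That theorem computes $\kappa$ for one given graph. The identity reduction is correct only if $\kappa=0$ on every instance the source reduction can produce, and that universal statement is exactly LU=LC on those instances. It cannot be checked inside the reduction. When $\kappa>0$ you have, as you say, no way to recover the LC count from the LU count.

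The missing ingredient is a known theorem, not a new argument. LU=LC holds for circle graph states, and the paper takes this from Hahn et al.\ \cite{Circle}. They already deduce $\#\mathsf P$-hardness of LU-counting on circle graphs as their Corollary~5. The LC-counting result you start from, with hardness already on circle graph inputs, is due to Dahlberg, Helsen, and Wehner \cite{DHWCount}, not Hahn et al. With these two citations your proof coincides with the paper's. Without them the hardness half, including the circle-graph clause, is unproved.
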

\begin{proof}
Dahlberg, Helsen, and Wehner proved $\#\mathsf P$-completeness for counting LC-equivalent graph states, with hardness already on circle graph inputs \cite{DHWCount}. Hahn et al.\ use LU=LC for circle graph states to obtain LU-counting hardness \cite[Corollary~5]{Circle}. For membership in $\#\mathsf P$, guess the $\binom n2$ adjacency bits of $H$ and run the deterministic test of Theorem~\ref{thm:main}. Each accepted labelled graph has one encoding and contributes one accepting branch.
\end{proof}

\section{Discussion}\label{sec:discussion}

Suppose a device can prepare $|G\rangle$ more easily than the desired graph-state resource $|H\rangle$. We can test whether single-qubit operations transform one into the other and, if so, compute those operations exactly. In MBQC, the corrections can be folded into the measurement bases whenever the required bases are available. The returned rotations also make the control requirements explicit for a given device.

Graph-state compilers already use locally equivalent representations to reduce preparation costs \cite{VijayanCompiler}, and photonic fusion schemes search within LC classes for less costly resources \cite{JiFusion}. The quotient in Theorem~\ref{thm:orbitcount} could extend such searches across distinct LC classes of the same LU class. Its description has polynomial size even when the number of classes is exponential. Finding a class with lower preparation depth or fusion overhead without enumerating those classes remains an optimization problem. The local corrections are part of its cost: savings in entangling operations have to be weighed against their implementation and calibration, especially for non-Clifford rotations in a fault-tolerant architecture. Minimizing the number of non-Clifford operations is also a separate objective from finding the least dyadic level.

The reference implementation starts from the two adjacency matrices, computes an MLS cover and common standard forms, and returns a transformation certificate. Its tests include independently scrambled versions of a known LU-but-not-LC pair and a separate certificate checker (Appendix~\ref{app:checks}). Benchmarks on larger structured resource families should time preprocessing separately from the modular and binary stages; the current code uses classical rank routines. On components of at most 26 qubits, ordinary LC testing already suffices \cite{Minimal27}, offering a shortcut for smaller instances.

The worst-case exponent is set by standardization. The new algebra costs $\widetilde O(n^5)$ bit operations after this stage, while the inherited common-standard-form procedure determines the overall exponent $6.38$. A faster way to obtain these standard forms would therefore improve the full bound.

For stabilizer codes, Corollary~\ref{cor:onequbitcode} relies on a purification with one reference qubit. With several logical qubits, the unitary relating purifications can act jointly on the reference system, so the present product-unitary test no longer suffices. Extending the algorithm would require handling this joint action.

Experimental states also raise an approximate version of the problem. One would need to specify how the imperfect states are represented and how close a local transformation must bring them. The exact test settles the relation between ideal resources; preparation and control errors still determine whether the computed correction produces the desired experimental fidelity.

\section*{Acknowledgements}

This project has received funding from the European Research Council (ERC) under the European Union's Horizon 2020 research and innovation programme (grant agreement No.~864597). This research was supported in part by grant NSF PHY-2309135 to the Kavli Institute for Theoretical Physics (KITP).

This work is part of the author's experimental series on agentic scientific research \cite{AgenticSeries}. The author set the research directions, designed the harness that coordinated model tasks, tools, memory, and checking, supplied physical and mathematical intuitions, and guided the development and revision of the research ideas. OpenAI models developed detailed candidate arguments and derivations, searched the literature, wrote and ran computational checks, and drafted and revised the manuscript under this supervision.

\appendix
\section{Connected components}\label{app:reduction}

\begin{lemma}\label{lem:components}
Two labelled graph states are LU-equivalent exactly when their connected-component partitions agree and the corresponding component states are LU-equivalent.
\end{lemma}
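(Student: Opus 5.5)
The two directions need different tools. The reverse implication is immediate: if the partitions agree and each component state $|G|_C\rangle$ is LU-equivalent to $|H|_C\rangle$ via a product of single-qubit unitaries on $C$, then $|G\rangle=\bigotimes_C|G|_C\rangle$ factorizes across components. The tensor product of the component unitaries, with the phases multiplied, realizes \eqref{eq:LU}. All the work is in the forward implication.

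For the forward direction I will use an LU-invariant that detects the component partition: the bipartite entanglement entropy, or equivalently the purity, of reduced states. For a graph state and a vertex set $A$, the reduced state on $A$ has rank $2^{\operatorname{rank}_{\F_2}\Gamma_{A,V\setminus A}}$. This is the standard cut-rank formula, and I would rederive it briefly: the stabilizer elements supported in $A$ are indexed by those $D\subseteq A$ with $\Gamma_{V\setminus A,A}\mathbf d=0$. Local unitaries preserve reduced-state spectra, since $\rho_A\mapsto U_A\rho_AU_A^\dagger$. Hence LU-equivalent graphs have the same cut-rank for every $A\subseteq V$. A set $A$ is a union of connected components of $G$ exactly when $\Gamma(G)_{A,V\setminus A}=0$, that is, when its cut-rank is zero, which means $\rho_A$ is pure. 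So the families of vertex sets that are unions of components coincide for $G$ and $H$. The components themselves are the minimal nonempty members of this family, and it is closed under intersection, so the two partitions agree.

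Next I show that each component carries its own equivalence. Let $|H\rangle=e^{i\phi}(\bigotimes_vU_v)|G\rangle$, and write $|G\rangle=\bigotimes_C|G_C\rangle$ and $|H\rangle=\bigotimes_C|H_C\rangle$ over the common components. Then $\bigotimes_C\bigl[(\bigotimes_{v\in C}U_v)|G_C\rangle\bigr]$ equals $e^{-i\phi}\bigotimes_C|H_C\rangle$. If two product vectors across the same bipartition agree, their factors agree up to scalars: take partial inner products, or note that the reduced states $\rho_C$ on each component are pure and must coincide. Thus $(\bigotimes_{v\in C}U_v)|G_C\rangle=\lambda_C|H_C\rangle$, and $|\lambda_C|=1$ by normalization. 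This is LU equivalence of the component states.

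The main step is the cut-rank invariance and its converse, namely that a pure reduced state forces a zero off-diagonal block. I would prove this directly rather than cite it. $\rho_A$ is pure iff $\operatorname{Tr}\rho_A^2=1$. The purity of a graph state equals $2^{-|A|}$ times the number of stabilizer elements supported in $A$, and that number is $|\ker\Gamma_{V\setminus A,A}|=2^{|A|-\operatorname{rank}}$. The quantity is therefore $1$ iff the rank is zero. This settles everything, and the remaining steps are routine.
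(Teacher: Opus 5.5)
Your proposal is correct and follows essentially the paper's own argument. You use the purity identity $\operatorname{Tr}\rho_A^2=2^{-\operatorname{rank}\Gamma_{A^c,A}}$, obtained by counting stabilizers supported in $A$, to show that factorization across a cut is equivalent to no crossing edge; invariance under product unitaries then matches the component partitions, reduced states give the componentwise equivalences, and tensoring gives the converse. Your formulation through LU-invariance of every cut rank, with components as the minimal nonempty zero-cut-rank sets, only spells out the paper's ``finest product partition'' step more explicitly.
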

\begin{proof}
For $A\subseteq V$, let $A^c=V\setminus A$. The stabilizers supported entirely in $A$ are parametrized by $d_A\in\ker_{\F_2}\Gamma_{A^c,A}$: the product of graph-state generators indexed by $d_A$ has no Pauli support on $A^c$ precisely when $\Gamma_{A^c,A}d_A=0$. For the reduced state $\rho_A=\operatorname{Tr}_{A^c}|G\rangle\langle G|$, expansion in these orthogonal Pauli operators gives
\begin{equation}\label{eq:cutpurity}
 \operatorname{Tr}(\rho_A^2)
 =2^{-|A|}\bigl|\ker_{\F_2}\Gamma_{A^c,A}\bigr|
 =2^{-\operatorname{rank}_{\F_2}\Gamma_{A^c,A}}.
\end{equation}
Thus the state factors across $A:A^c$ if and only if $\Gamma_{A^c,A}=0$, equivalently, no edge crosses the cut.

Product unitaries preserve factorization across every fixed labelled cut. The finest partition into graph-state product factors is thus the connected-component partition, which must be the same for both inputs. When partitions coincide, a global product-unitary equality implies equality of the pure-state factors up to component phases, for example by taking reduced states on each component. Conversely, tensoring component transformations supplies a transformation on the whole graph, with their phases absorbed into the global phase.
\end{proof}

\section{Cost of standardization}\label{app:standardcost}

\begin{lemma}\label{lem:standardcost}
On two connected graphs of order $n_c$, the common-standard-form procedure of \cite[Appendix~C]{CP} can be implemented in
\[
 \widetilde O\!\left(n_c^4 T_{\mathrm{rank}}(n_c)+n_c^6\right)
\]
bit operations, where $T_{\mathrm{rank}}(n_c)$ is the cost of binary matrix rank. Each returned ordinary-LC path has length $O(n_c^2)$.
\end{lemma}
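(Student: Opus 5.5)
We would prove Lemma~\ref{lem:standardcost} by walking through the procedure of \cite[Appendix~C]{CP} and charging each phase separately. Its outer structure is a pair of nested loops. Each iteration either performs an ordinary local complementation or certifies a type or neighborhood property using $\F_2$ rank computations of cut matrices $\Gamma_{A^c,A}$, as in \eqref{eq:cutpurity}. The plan has three parts: bound the number of loop iterations, bound the work per iteration, and bound the number of local complementations actually recorded in the path.

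\textbf{Step one: the MLS cover and types.} We first compute the MLS cover by the polynomial algorithm of \cite{MLS}. For each vertex $v$, a minimal local set containing $v$ is found by $O(n_c)$ candidate tests. Each test asks whether a set $L$ is local, which amounts to a rank comparison: local sets are exactly supports of stabilizers, and $|\mathcal S_G(L)|=2^{|L|-\operatorname{rank}\Gamma_{L^c,L}}$. Minimality is checked by $O(n_c)$ further shrinking steps. Over all $n_c$ vertices this gives $O(n_c^3)$ rank calls.

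Vertex types are then read off from generators obtained by kernel computations, at $O(n_c)$ rank-type calls per cover member and $O(n_c)$ members. Both quantities stay inside $n_c^4T_{\mathrm{rank}}(n_c)$. We charge a factor $n_c$ of slack for the $O(n_c)$ recomputations that occur as the graph is modified.

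\textbf{Step two: the standardization loop.} The procedure repeatedly removes type-$\mathsf Y$ vertices and fixes type-$\mathsf X$ stars and orderings by local complementations (and pivots). Each modification uses $O(1)$ local complementations. We would argue from a potential that decreases with each one, such as the number of vertices already of the final types plus an ordering inversion count. This potential is at most $O(n_c^2)$, which bounds the path length by $O(n_c^2)$.

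Each local complementation costs $O(n_c^2)$ bit operations to apply. Over $O(n_c^2)$ steps this gives $O(n_c^4)$, and with bookkeeping and type updates per step it gives the $n_c^6$ term. After each step, rechecking the affected types uses at most $O(n_c^2)$ rank calls. We amortize these: over $O(n_c^2)$ steps, $O(n_c^2)$ calls each yields $n_c^4T_{\mathrm{rank}}$. Running the procedure on both graphs and on the common cover only doubles the cost. Logarithmic factors absorb the bit lengths of indices.

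\textbf{Main obstacle.} The hard part is step two. The potential argument has to be faithful to the exact loop of \cite[Appendix~C]{CP}, with no restart that resets progress. We would first try to show that every local complementation performed fixes the type of at least one vertex permanently, since types relative to a fixed cover are LC-covariant and the cover itself is unchanged. This bounds the number of productive phases by $O(n_c)$, each with $O(n_c)$ complementations, giving path length $O(n_c^2)$. If some restart invalidates the cover, we would recompute it. A recomputation costs $n_c^3T_{\mathrm{rank}}$, and there are at most $O(n_c)$ of them, so the total is still $n_c^4T_{\mathrm{rank}}$.
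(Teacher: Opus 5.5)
Your plan to count operations, charge a cost to each, and bound the recorded path has the same shape as the paper's proof. The step that carries the lemma is missing, though: a quantitative bound on the operations of the procedure in \cite[Appendix~C]{CP}, including its restarts. The argument you sketch for it does not work.

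\textbf{Where your counting argument fails.} The ordering pivots of step~6 fix no type permanently. A pivot on an edge $uv$, with $u$ of type $\mathsf X$ and $v\prec u$, transfers the type-$\mathsf X$ label from $u$ to $v$, and $v$ can lose it in a later pivot. So nothing supports ``$O(n_c)$ productive phases of $O(n_c)$ complementations each''. Nor is the cover fixed: step~7 may add minimal local sets, after which the procedure restarts. You give no reason why there are only $O(n_c)$ cover changes. Recomputing the cover from scratch is not what the procedure does, and it would lose any monotone quantity. Your alternative potential, ``vertices already of their final types plus an inversion count'', is not defined before the run ends and is not shown to decrease.

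\textbf{How the paper bounds the operations.} It uses three explicit monotone quantities.
\begin{itemize}
\item The potential $\sum_{i=1}^2(2|Y_i|+|X_i|)$ strictly decreases at the conflict-removing steps 2--5. Step~6 preserves $|X_1|+|X_2|$, and a refinement only turns non-$\bot$ types into $\bot$, so neither increases it. This gives $O(n_c)$ such steps.
\item Once type $\mathsf Y$ has disappeared, the sum of the positions of all type-$\mathsf X$ vertices in both graphs is at most $n_c(n_c+1)$. It strictly decreases at every ordering pivot and never increases at steps 5 or 7. This gives $O(n_c^2)$ pivots and, at three complementations per pivot, the $O(n_c^2)$ path length.
\item Type $\bot$ is permanent, and each refinement creates a new $\bot$ vertex, so there are $O(n_c)$ refinements.
\end{itemize}

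\textbf{Where the cost terms come from.} You attribute the $n_c^6$ term to per-step bookkeeping, but it comes from the refinement passes. Each of the $O(n_c)$ passes through step~7 tests at most $2n_c$ stars for minimality. Each test uses $O(n_c)$ cut-rank computations, via $\operatorname{cutrk}(L)<|L|$ and $\operatorname{cutrk}(L\setminus\{v\})=|L|-1$, so it costs $O(n_c^4)$ with cubic elimination. Types need no rank rechecks, since a complementation merely permutes the Pauli axes at its center and neighbors. Your $O(n_c^3)$ count of rank calls for the initial cover is unsupported. The paper instead uses the $O(n_c^4)$ cut-rank evaluations of the algorithm in \cite{MLS}. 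That is where $n_c^4T_{\mathrm{rank}}(n_c)$ comes from, and it still fits your budget.
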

\begin{proof}
The initial cover algorithm adds a minimal local set containing a previously uncovered vertex, so it retains at most $n_c$ sets and makes $O(n_c^4)$ cut-rank evaluations \cite[Section~3.2 and Algorithm~1]{MLS}. We account for the subsequent work using classical cubic elimination.

For a vertex set $L$, write $\operatorname{cutrk}_G(L)=\operatorname{rank}_{\F_2}\Gamma_{V\setminus L,L}$ for its cut rank. The set $L$ is a minimal local set exactly when $\operatorname{cutrk}_G(L)<|L|$ and $\operatorname{cutrk}_G(L\setminus\{v\})=|L|-1$ for every $v\in L$ \cite[Proposition~2]{MLS}. Hence its minimality and dimension can be checked with $O(n_c)$ rank computations, in $O(n_c^4)$ bit operations. A minimal local set inside any non-full-cut-rank set can be found by greedily deleting vertices while the remaining set is not full cut-rank. One pass suffices: every subset of a full-cut-rank set is full cut-rank. This also costs $O(n_c^4)$.

The termination arguments in \cite[Appendix~C]{CP} bound the total number of graph operations, including restarts. For $i=1,2$, let $X_i,Y_i$ be the current type-$\mathsf X$ and type-$\mathsf Y$ vertex sets in the $i$th graph. The potential $\sum_{i=1}^2(2|Y_i|+|X_i|)$ strictly decreases at steps~2--5. Step~6 preserves $|X_1|+|X_2|$, and a cover refinement at step~7 can only change a non-$\bot$ type to $\bot$. Thus there are $O(n_c)$ conflict-removing operations in total. Once type $\mathsf Y$ has disappeared, the sum of the vertex positions of all type-$\mathsf X$ vertices, in both graphs, is at most $n_c(n_c+1)$. It strictly decreases at each ordering pivot in step~6, and cannot increase at steps~5 or~7. There are therefore $O(n_c^2)$ such pivots over the entire run. Every refinement makes at least one further vertex type $\bot$, so there are $O(n_c)$ refinements. The final stars add at most $2n_c$ sets; the cover always has $O(n_c)$ members.

Each pass through step~7 inspects at most $2n_c$ stars. The $O(n_c)$ passes therefore require $O(n_c^2)$ minimality tests, costing $O(n_c^6)$ altogether. Checking the new sets in both graphs costs no more. For each admitted set, binary elimination computes its at most two independent stabilizer generators in $O(n_c^3)$ operations. Their Pauli labels determine the vertex types. Between cover additions, a local complementation merely permutes the three Pauli axes at its center and neighbors; these types can be updated directly, without recomputing the kernels. Updating adjacency matrices and searching for the next pivot cost $O(n_c^2)$ per operation, or $O(n_c^4)$ in total. The cover additions and all remaining bookkeeping fit within the $\widetilde O(n_c^6)$ bound. A pivot is three ordinary local complementations, giving the stated path-length bound.
\end{proof}

\section{Computational checks}\label{app:checks}

The modular compressor was compared with complete kernel enumeration on 288 small matrices and with the original incidence constraints on 1,536 binary incidence systems. Twelve larger Boolean-pattern controls require strict growth beyond the seed module. These tests include nonfree kernels and the boundaries $r=1$ and $t<3$.

For the determinant identity, all 12,886 linear solutions for the 1,461 ordered pairs of connected labelled graphs on two through four vertices were enumerated. Their determinant vectors were constant. The six-system test agreed with literal LC-orbit enumeration for every pair. Three choices of additional affine constraints per pair gave 4,383 comparisons, including constraints forcing rejection. The family includes 26 pairs for which every vertex in both graphs has odd degree. A disconnected two-edge example has determinant vector $(1,1,0,0)$, confirming why a single fixed vertex cannot impose invertibility across separate components.

The direct graph system was checked on 789 supplied-partition instances by enumerating all 2,717 candidate edge changes and, for each, the graphs reachable by complementations on the allowed vertices. The exact feasible coefficients agreed with the affine projection in every case. There were 239 feasible and 550 infeasible instances. A realizing complementation sequence was recovered in each feasible instance, with the Clifford equations checked after every step. These test Theorem~\ref{thm:direct} on prescribed partitions.

A further 90 instances compose incidence compression, kernel and image recovery, multiplicity lifting, and complementation extraction. The resulting local unitaries were applied to state vectors on four through eight qubits. Including one ordinary-LC instance with $F=\varnothing$, all 91 comparisons agreed with the target up to global phase, with Euclidean error below $10^{-10}$. The state-vector comparisons use double-precision complex arithmetic; the preceding modular, graph, and witness checks use exact integer arithmetic.

As a control beyond ordinary LC, we use the 27-vertex pair of Tsimakuridze and G\"uhne \cite[Section~7, Figure~7]{TG}. There are six vertices in $Z$, and one vertex in $X$ adjacent to each four- or five-element subset of $Z$. The partner graph adds the clique on $Z$. The 21 stars at $X$ form a common MLS cover; minimality was checked in each graph, giving 42 star checks in total. The constructive algorithm rejects level one and accepts levels two and three. At level two, it computes an image of dimension 15 and recovers multiplicities realizing the clique. The least-level procedure returns two, and returns one when the graph is compared with itself.

For the quotient calculation, 135 exact LC comparisons test \eqref{eq:orbitcosets} on three supplied standard forms: the 27-vertex graph, a 32-vertex extension with an attached five-cycle, and two copies joined by a $Z$--$Z$ edge on 54 vertices. The projected dimensions $(\dim\Omega,\dim\Omega_{\LC})$ are $(15,14)$, $(15,14)$, and $(30,28)$, respectively, giving counts two, two, and four. Each supplied MLS star and its type were checked by exact cut ranks. These comparisons use the unrestricted LC test and share its binary primitives with the quotient calculation.

The 31-vertex example in Section~\ref{sec:orbitcount} checks the dimension-two case and the strict inclusion $\Omega_1\subsetneq\Omega_{\LC}$. Direct enumeration of supported Pauli labels verifies its 21 dimension-one minimal sets and one dimension-two minimal set. A separate implementation of literal graph complementation checks the three-step boundary toggle, all 15 elementary $Z$--$Z$ changes, and the clique change, giving 16 realizing-sequence checks on the full graph. They agree with the projected dimension 15.

\begin{table}[ht]
\centering\small
\begin{tabular}{>{\raggedright\arraybackslash}p{0.49\linewidth}r >{\raggedright\arraybackslash}p{0.28\linewidth}}
\toprule
Test family & Cases & Comparison\\
\midrule
Modular matrices & 288 & Complete kernels\\
Binary incidence systems & 1,536 & Original constraints\\
Strict-growth controls & 12 & Full and compressed modules\\
Connected graph pairs & 1,461 & Determinants and LC orbits\\
Affine-constraint variants & 4,383 & Full solution sets\\
Direct graph systems & 789 & 239 yes; 550 no\\
Constructive state-vector checks & 91 & Error below $10^{-10}$\\
Published 27-vertex pair & 1 & Minimum level two\\
LC quotient comparisons & 135 & Unrestricted LC decisions\\
Boundary sequence checks & 16 & Literal complementations\\
\bottomrule
\end{tabular}
\caption{Checks of the algebraic stages. The affine-constraint variants use the connected graph pairs; these rows are overlapping test families. State-vector comparisons are floating-point checks, while the other comparisons use exact arithmetic.}\label{tab:checks}
\end{table}

The reference implementation also runs the full decision procedure from raw adjacency matrices. It computes an MLS cover, standardizes both graphs, solves the modular and affine systems, and returns the graph-operation certificate of Proposition~\ref{prop:certificate}. Literal LC-orbit enumeration supplies the reference answers for all 4,165 ordered graph pairs on one through four vertices and 120 further pairs on five and six vertices. The decisions agree in every case. Thirty additional positive pairs on 7, 10, 15, 20, and 28 vertices are generated by ordinary LC walks.

For the 27-vertex control, we apply independent random LC sequences to the two members of the published pair, using lengths 0, 27, 54, and 108. All 24 tests accept LU equivalence, while an unrestricted LC test rejects each pair. Only the resulting adjacency matrices are supplied to the LU algorithm; neither the scrambling sequences nor a prescribed partition is used. Twelve further 27-vertex negative controls are independently separated by exact cut-rank witnesses and are all rejected.

The certificate checker uses a separate implementation of the modular row closure and imports neither the standardization code nor the affine solver. Its row reduction is compared with brute-force annihilators on 108 modular systems, and its compressed incidence conditions with literal constraints on 48 systems at levels one through four. All returned yes-certificates are checked. Changed targets and malformed or non-independent witnesses are rejected in 29 further tests. These are finite implementation checks; the general statements, including arbitrary-target coverage in Lemma~\ref{lem:fixedcover}, rely on the proofs. The full decision implementation shares its modular and binary primitives with the earlier algebraic tests, and uses classical rank computation rather than fast matrix multiplication.

\section{Extracting a complementation sequence}\label{app:clifford}

\begin{proof}[Proof of Lemma~\ref{lem:restrictedsequence}]
The coordinate updates follow the graph proof of the Clifford criterion in \cite[Appendix~A]{CP}. Complementing $w$ on the source graph replaces its adjacency matrix $\Gamma$ and the solution coordinates by
\begin{align}\label{eq:cliffordupdate}
 \alpha'_u&=\alpha_u+\mathbf1_{u=w}\gamma_w,&
 \beta'_u&=\beta_u+\mathbf1_{u=w}\delta_w,\\
 \gamma'_u&=\gamma_u+\Gamma_{uw}\alpha_u,&
 \delta'_u&=\delta_u+\Gamma_{uw}\beta_u.\notag
\end{align}
Here $\Gamma$ is the adjacency matrix before the step. These formulas preserve \eqref{eq:cliffordlinear}--\eqref{eq:clifforddet}. If $\beta_w=\delta_w=1$, the step removes $w$ from the support of $\beta$, leaving its other coordinates unchanged.

If $\beta$ is nonzero and no such vertex exists, then $\delta_u=0$ and $\gamma_u=1$ on its support. For any vertex $u$ in that support, the diagonal linear equation gives
\begin{equation}\label{eq:pivotexists}
 \sum_w\Gamma_{uw}\beta_w\Gamma'_{wu}=1.
\end{equation}
Thus some neighbor $w$ of $u$ also has $\beta_w=1$ and $\delta_w=0$. Applying $u,w,u$ removes both vertices from the support. Explicitly, their pairs $(\beta_u,\delta_u),(\beta_w,\delta_w)$ change as
\[
 (1,0),(1,0)\ \longmapsto\ (1,0),(1,1)
 \ \longmapsto\ (1,1),(0,1)
 \ \longmapsto\ (0,1),(0,1).
\]
The edge $uw$ persists during these three operations, and no other $\beta$ coordinate changes.

Each one-step reduction removes one support vertex, and each three-step reduction removes two. Only current support vertices are used, so all centers lie in $W$ and the length is at most $\lfloor3|W|/2\rfloor$. When $\beta=0$, the determinant equations force $\alpha=\delta=1$, and the diagonal linear equation gives $\gamma=0$. The remaining equation is $\Gamma=\Gamma'$. Each complementation and coordinate update costs $O(n^2)$, giving total cost $O(n^3)$.

Conversely, start with the identity solution on $H$ and reverse any given complementation sequence from $G$ to $H$. Formula~\eqref{eq:cliffordupdate} changes a $\beta$ coordinate only at the center. A sequence centered on $W$ therefore gives a solution supported there. This proves the reverse implication.
\end{proof}

\end{document}